\newtheorem{thm}{Theorem}
\newtheorem{lem}[thm]{Lemma}
\newtheorem*{lem*}{Lemma}
\DeclareMathOperator*{\argmax}{arg\,max}
\theoremstyle{definition}
\theoremstyle{remark}
\begin{document}

\title[Trace-class neural networks priors for Bayesian learning with MCMC]{Trace-class Gaussian priors for Bayesian learning of neural networks with MCMC}

\author{Torben Sell}
\address[TS]{School of Mathematics, University of Edinburgh}

\author{Sumeetpal S.~Singh}
\address[SSS]{Department of Engineering, University of Cambridge}

\begin{abstract} 
This paper introduces a new neural network based prior for real valued functions on $\mathbb R^d$ which, by construction, is more easily and cheaply scaled up in the domain dimension $d$ compared to the usual Karhunen-Lo\`eve function space prior. The new prior is a Gaussian neural network prior, where each weight and bias has an independent Gaussian prior, but with the key difference that the variances decrease in the width of the network in such a way that the resulting function is \emph{almost surely} well defined in the limit of an infinite width network. We show that in a Bayesian treatment of inferring unknown functions, the induced posterior over functions is amenable to Monte Carlo sampling using Hilbert space Markov chain Monte Carlo (MCMC) methods. This type of MCMC is popular, e.g. in the Bayesian Inverse Problems literature, because it is stable under \emph{mesh refinement}, i.e. the acceptance probability does not shrink to $0$ as more parameters of the function’s prior are introduced, even \emph{ad infinitum}. In numerical examples we demonstrate these stated competitive advantages over other function space priors. We also implement examples in Bayesian Reinforcement Learning to automate tasks from data and demonstrate, for the first time, stability of MCMC to mesh refinement for these type of problems.
\end{abstract} 

\keywords{Bayesian Neural Networks, Value Function Estimation, preconditioned Crank Nicolson, Langevin Dynamics, Bayesian Reinforcement Learning}

\maketitle
\sloppy

\section{Introduction}

Generating samples from probability measures on function spaces is both a challenging computational problem and a very useful tool for many applications, including mathematical modelling in bioinformatics \citep{quarteroni2017cardiovascular}, data assimilation in reservoir models \citep{iglesias2013evaluation}, and velocity field estimation in glaciology \citep{minchew2015early}, amongst many others. This paper addresses the problem of defining a computationally and statistically favourable function space prior.

In Bayesian inference on separable Hilbert spaces \citep{stuart2010inverse}, many posterior measures $\mu$ are absolutely continuous with respect to their prior $\mu_0$ (often a Gaussian measure, see \cite{knapik2011bayesian} and \cite{dashti2013map}, but not always, see \cite{dashti2011besov}, \cite{hosseini2017well}, and \cite{hosseini2017well2}), with the likelihood acting as the Radon-Nikodym derivative $d\mu/d\mu_0\propto \mathcal L$. Samples from a Gaussian prior on a separable Hilbert space have a convenient expansion as the weighted sum of an infinite countable basis, weighted with independent Gaussian random variables (see \eqref{KL}), which is known as the Karhunen-Lo\`eve (KL) expansion. The posteriors come with a variety of theoretical results, such as concentration inequalities and contraction rates, see e.g. \cite{agapiou2013posterior, nickl2020consistency, knapik2011bayesian, van2008rates}. Truncating the KL expansion then reduces the problem of sampling from infinite-dimensional measures to sampling from a finite-dimensional parameter space. This truncated approximation to the true posterior gets better by including more terms of the expansion. 
The practical applicability of these Gaussian priors is, however, restricted to inferring unknown functions with low-dimensional domain, as the orthogonal basis required for the KL expansion results in the complexity scaling exponentially with the dimension of the unknown function's domain. 

Another approach to define function space priors are Bayesian Neural Networks (BNNs) \citep{neal1995bayesian, neal2012bayesian} which currently enjoy a resurgence of interest, e.g. in the machine learning community. A BNN is a random function obtained by placing a prior distribution over the weights and biases of a Neural Network (NN), with the default choice being a centered Gaussian prior on the weights with variances that scale as $\mathcal O(1/N^{(l)})$, where $N^{(l)}$ is the number of nodes in layer $l$. Some authors argue for heavy-tailed priors on the parameters, which was initially investigated in \cite{neal1995bayesian}. Although some theoretical results exist \citep{matthews2018gaussian}, popular criticisms include the lack of interpretability of the resulting BNNs, and recent work \citep{wenzel2020good} has highlighted inter alia that novel priors are needed. Sampling approaches include Hamiltonian Monte Carlo \citep{neal1995bayesian}, and more advanced integrators \citep{leimkuhler2019partitioned}. However, inference is often limited to finding the maximum-a-posteriori (MAP) estimate of the posterior \citep{welling2011bayesian}, and the $\mathcal O(1/N^{(l)})$ scaling implies one cannot easily add nodes to a layer to obtain more accurate estimates: one would either have to adjust the prior variances for all nodes within the amended layer, thereby changing the prior, or not adjust the prior which results in exploding functions \citep{matthews2018gaussian}. Other function space priors include Deep Neural Networks and Deep Gaussian Processes \citep{damianou2013deep,dunlop2018deep}, and in \cite{dunlop2018deep} inference is done using similar function space MCMC techniques to the ones we employ.

To calculate expectations with respect to the Bayesian posterior of the unknown function, computational methods are required as the relevant integrals are usually not analytically tractable. Two popular sampling algorithms for posteriors defined on Hilbert spaces are the preconditioned Crank-Nicolson (pCN) algorithm and its likelihood-informed counterpart the preconditioned Crank-Nicolson Langevin (pCNL) algorithm, which arise from clever (and in a way optimal) discretisations of certain stochastic differential equations \citep{cotter2013mcmc}. These samplers are asymptotically exact and have a dimension-independent mixing rate in the sense that their proposal step size does not depend on the number of terms in the KL truncation \citep{hairer2014spectral, eberle2014error}. This stands in stark contrast to the well-known dimensional-dependent scaling of popular MCMC algorithms such as the Random Walk Metropolis-Hastings Algorithm and the Metropolis Adjusted Langevin Algorithm \citep{roberts1998optimal, roberts2001optimal}. 
Modifications of pCN include geometric \citep{beskos2017geometric} and likelihood-informed \citep{cui2016dimension} versions. Although the computational cost can be reduced provided one knows which basis functions are informed by the data, they cannot circumvent the costly scaling in the domain dimension. This is presumably one reason why these methods have rarely been used for inferring unknown functions with domains larger than dimension two (i.e. $\mathbb{R}^2$) in reported examples in the literature.

This paper introduces a new neural network based prior, coined \emph{trace-class} neural network priors, which allows for scalable (in the domain dimension) Bayesian function space inference. Hilbert space MCMC algorithms are then used to sample from the resulting posteriors, and owing to their stability under mesh-refinement, enhances the practical utility of our framework. In addition to comparisons with reported examples in the literature, we also demonstrate our technique's usefulness on a challenging $17$-dimensional Bayesian reinforcement learning example where the aim is to learn the \emph{value function} (a function on $\mathbb R^{17}$) that can automate a task demonstrated by an expert --- we combine the noisy expert data with a trace-class NN prior, through a suitably defined likelihood, to yield a Bayesian formulation.

The main contributions of this paper are as follows:
\begin{itemize}
    \item We introduce a new \emph{trace-class} Gaussian prior for neural networks, which is both well defined for infinite width NNs and has a degree of smoothness, and demonstrate its practical utility. The prior is independent, centred, and Gaussian across the NN's weights and biases but is non-exchangeable over the weights within each layer and has a summable variance sequence. The latter, which gives it the trace-class property, ensures it is a valid prior for an infinite width network, while the former results in parameters being better identified from an inference perspective. We further show that this prior is appropriate for use with Hilbert space MCMC methods (Theorem \ref{thm:prior}). The practical implications of this is that it is valid for the infinite-width limit of the NN and not just finite-dimensional projections of it (e.g. like the Random Walk Metropolis-Hastings algorithm), enjoys a dimension-independent mixing rate and, owing to the inherent scalability of neural networks to its number of inputs, is suitable for applications with high-dimensional state spaces.
    \item We propose a suitable likelihood for Bayesian Reinforcement Learning (BRL) for inferring the unknown continuous state value function that best describes an observed state-action data sequence. Theorem \ref{thm:bound_likelihood_derivative} and Lemma \ref{lemma:likelihood} justify the use of this likelihood with Gaussian prior measures on function spaces, and with our proposed neural network prior. This likelihood is also potentially of interest to the machine learning community in its own right.
    \item We apply Hilbert space MCMC methods to infer the unknown optimal value function in two continuous state control problems, using both our new prior and likelihood function. These exercises motivate the need for NN function priors that are, unlike a canonical orthogonal basis prior for that domain, scalable in the domain dimension, and for the first time demonstrates dimension-independent mixing of MCMC for Bayesian Inverse Reinforcement Learning.
\end{itemize}

The rest of this paper is organised as follows: In Section \ref{function_space_MCMC} we introduce the general inference problem, describe the canonical orthogonal basis for functions on $\mathbb R^d$, describe MCMC methods on an infinite-dimensional Hilbert space including their construction and the assumptions under which these methods are well-defined. Section \ref{sec:NN_priors} introduces the trace-class neural network prior and states one of our main theoretical results, showing that the proposed prior satisfies the necessary assumptions to be used with a Hilbert space MCMC algorithm. In Section \ref{RL} we formulate the Bayesian Reinforcement Learning (BRL) problem and introduce the likelihood to be used for inferring continuous state value functions from state-action data. We then show that the likelihood satisfies the assumptions needed to be admissible in a Hilbert space MCMC setting. Finally, Section \ref{numerics} provides numerical results for the proposed prior and the likelihood for different control problems. Proofs can be found in the appendix.

\subsection{Notation}
We use curly letters ($\mathcal X$ and $\mathcal A$) for spaces and sets. Subscripts denote both the temporal and spatial variables, but it will be clear from the context which one is being referred to. $\Phi$ denotes the Gaussian cumulative distribution function (cdf), $\phi$ the Gaussian probability density function (pdf). $\varphi$ is used for basis functions, $\zeta$ denotes an activation function. The likelihood function we will write as $\mathcal L$, the log-likelihood as $\ell$, and $T$ is the number of data points used in the likelihood. $\ell^2$ will also denote the space of square-summable sequences. The space of square-integrable functions, with respect to the Lebesgue measure, from $\mathcal X\subseteq\mathbb R^d$ to $\mathbb R$ is denoted $L^2(\mathcal X,\mathbb R)$ or simply $L^2$. For the control problem, $\mathcal T$ denotes the deterministic state dynamics, mapping a state-action pair $(x,a)$ to the next state $x'$. The value function is denoted with the letter $v$.

\section{Problem
Formulation\label{function_space_MCMC}}

The objective is to sample from a target distribution $\mu$ defined over an infinite-dimensional separable Hilbert space. The targets of interest in this work are Bayesian posterior distributions arising from a Gaussian prior measure $\mu_0$ and a likelihood which can be evaluated point wise. One such likelihood is the Gaussian likelihood that arises from observations of a solution to a PDE with additive Gaussian noise given in Section \ref{subsec:gwf1}, which is a standard likelihood in the Bayesian Inverse problems literature \citep{stuart2010inverse}. The other likelihood we will work with is one for continuous state control problems which is introduced in Section \ref{RL}. In what follows, we will assume that the posterior has a density with respect to the prior, in which case the Radon-Nikodym derivative is well defined and is proportional to the likelihood. The posterior density with respect to the prior is given by $\frac{d\mu}{d\mu_0}(u)=\frac1Z\exp(\ell(y|u))$, where $y$ are observations, $\ell$ is the log-likelihood, and $Z=\int\exp(\ell(y|u))\mu_0(du)>0$ is the normalisation constant.

For an infinite-dimensional separable Hilbert space $\mathcal H$, say $\mathcal H=L^2(\mathcal X,\mathbb R)$ to frame the discussion in this section (and later in Section \ref{sec:NN_priors} the sequence space $\mathcal H=\ell^2$), there exists an orthonormal basis $\{\varphi_i\}_{i=1}^\infty$ such that any element $u\in\mathcal H$ can be obtained as the limit
$u(x)=\lim_{N\rightarrow\infty}\sum_{i=1}^Na_i\varphi_i(x)$, where $a_i=\langle u,\varphi_i\rangle_{\mathcal H}$ with $\langle u,\varphi_i\rangle_{\mathcal H}$ denoting the inner product on $\mathcal H$. Let the prior $\mu_0=\mathcal N(0,\mathcal C)$ be a Gaussian measure on $\mathcal H$. If the operator $\mathcal C$ is trace-class with orthonormal eigenvalue-eigenfunction pairs $(\lambda_i^2,\varphi_i(x))$, $i=1,2,\dots$, one can sample from $\mu_0$ by sampling a sequence of $\xi_i\sim\mathcal N(0,\lambda_i^2)$ and by then defining 
\begin{align}\label{KL}
    u(x)=\sum_{i=1}^\infty \xi_i\varphi_i(x).
\end{align}
The sum defines $u(x)\in\mathcal H$ almost surely and is the Karhunen-Lo\'eve (KL) expansion \citep{gine2016mathematical}. One may thus think of a sample from the Gaussian measure as the sum of a sequence of $1$-dimensional Gaussians with summable variances. This allows us to truncate the series expansion such that we have $N$ active terms, with the remainder, or approximation error, tending to zero as $N$ increases:
\begin{align*}
    u(x)&=\sum_{i=1}^N\xi_i\varphi_i(x)+\sum_{i=N+1}^\infty \xi_i\varphi_i(x)\\
    \lVert u(x)-\sum_{i=1}^N\xi_i\varphi_i(x)\rVert^2&=\sum_{i=N+1}^\infty \lVert\xi_i\varphi_i(x)\rVert^2=\sum_{i=N+1}^\infty \xi_i^2<\infty\quad\mathrm{a.s.}.
\end{align*}
Other more elaborate truncation schemes are discussed in \cite{cotter2013mcmc}, but we will focus on a fixed number of terms for computational and notational convenience. For some applications, $\varphi_i$ for large $i$ can be interpreted as high-oscillating functions which may not be discernible by the observation operator, see the example in Section \ref{subsec:gwf1} or Figure \ref{fig:KL_prior}, where the large $i$ coefficients are responsible for the oscillating function in the left panel, and forced to $0$ on the right. Note that, given some $u\in \mathcal H$, we can let $u'$ be $u$ with $i$-th component set to $0$, i.e. $u'=u-\langle u,\varphi_i\rangle \varphi_i$. It follows from Assumption \ref{assumption_likelihood2} (stated later in the manuscript) that $\lim_{i \rightarrow \infty} \ell(y|u')=\ell(y|u)$, for any $u\in \mathcal H$. Following the approach of \cite[Theorem 4.6]{stuart2010inverse} 
this closeness of the likelihoods $\ell(y|u)$ and $\ell(y|u-\langle u,\varphi_i\rangle \varphi_i)$ translates to closeness of the corresponding posteriors. 

We emphasise that the above discussion holds not only for the space $\mathcal H=L^2(\mathcal X,\mathbb R)$, which is predominantly how it is applied in \cite{beskos2008mcmc,cotter2013mcmc,beskos2017geometric}, but also for $\mathcal H=\ell^2$ (with the only change being the choice of the orthonormal basis), which will be of particular importance in this paper. In infinite-dimensional spaces, one has to be careful to ensure the posterior is well defined, see \cite{stuart2010inverse} for a discussion on Gaussian priors and likelihoods given through possibly non-linear mappings, observed in Gaussian noise. We will work with the following assumptions, which we prove are satisfied for the likelihood defined in Section \ref{RL}. 
\begin{enumerate}
    \item\label{assumption_Hilbert_space} $\mu_0$ is a Gaussian prior defined on a separable Hilbert space $\mathcal H$, with a trace-class covariance operator $\mathcal C$, that is, the eigenvalues $\lambda_i^2$ corresponding to the eigenfunctions $\varphi_i$ satisfy $\sum_i\lambda_i^2<\infty$;
    \item The posterior is well-defined, i.e. the integral of the likelihood with respect to the prior is positive and finite.
\end{enumerate}

\subsection{A canonical approximation for functions on $\mathbb{R}^d$}
Consider a $d$-dimensional hypercube $\mathcal X=[0,1]^d$, the Hilbert space $\mathcal H=L^2(\mathcal X,\mathbb R)$, and a Gaussian prior measure $\mu_0$ on $\mathcal H$. A Bayesian approach entails choosing the covariance matrix $\mathcal C$ for the Gaussian prior $\mu_0$, and we discuss a standard choice below. If the problem requires it, as in Section \ref{subsec:gwf1} where a PDE is solved, it is possible to choose $\mathcal C$ such that the samples are almost surely differentiable. 

Given eigenvalues $\lambda_i$ and basis functions $\varphi_i$ for a $1$-dimensional function, one approach to scale this basis up to a $d$-dimensional domain is by taking a tensor product of the basis, see e.g. \cite{iserles2009high} for the multivariate Fourier basis, or \cite{wojtaszczyk1997mathematical} for Wavelets and other basis expansions. For the KL expansion, we thus get, for a multi-index $k=(k_1,\dots,k_d)=k_{1:d}$ with $k_i=1,\dots,N$,
\begin{align}
    u(x)=\sum_k \xi_k\varphi_k(x)=\sum_{k_1=1}^N\cdots\sum_{k_d=1}^N \left[\xi_{k_{1:d}}\prod_{j=1}^d\varphi_{k_j}(x_j)\right],
\end{align}
where $\xi_{k_{1:d}}\sim\mathcal N(0,\lambda_{k_{1:d}}^2)$ with $\lambda_{k_{1:d}}$ being a function of the respective eigenvalues $\lambda_{k_i}$ capturing the correlation between dimensions. In total, there are $N^d$ active terms, that is, the complexity is exponential in the dimension $d$. This will be computationally prohibitively expensive, even for moderately small $d$. An approximation-theoretic argument for the exponential scaling has been made by \cite{agapiou2021rates}, who showed that a Sobolev function $u$ on a $d$-dimensional domain with smoothness $\alpha$ can be approximated in $L^2$ within $\epsilon$ error using $N_\epsilon$ basis terms, where $N_\epsilon\propto\epsilon^{-d/\alpha}$.
To circumvent the exponential growth of terms in the domain dimension, one could employ the following simplifications with only mixed partials up to order two \citep{sobol1993sensitivity}
\begin{align}\label{vf_approx}
    u(x)\approx \sum_{i=1}^d u_i(x_i)+\sum_{i=1}^d\sum_{j=i+1}^d u_{i,j}(x_i,x_j),
\end{align}
with $dN+\frac{d(d-1)}{2}N^2$ coefficients to be estimated, thus still achieving a significant reduction compared to $N^d$ terms before. In our numerical work, this approximation is an obvious candidate to contrast against.

With the approximation \eqref{vf_approx} in mind, one restricts oneself to the prior on finitely many random functions $u_i$ and $u_{i,j}$, each of which themselves is sampled from a Gaussian measure $\mathcal N(0,\mathcal C_1)$, or  $\mathcal N(0,\mathcal  C_2)$, respectively. One identifies each of these functions with their Karhunen-Lo\'eve expansion
\begin{align}\label{infinite_KL}
    u_i(x_i)=\sum_{k=1}^\infty \xi_{i,k}\varphi_k(x_i),\quad u_{i,j}(x_{i,j})=\sum_{k=1}^\infty \xi_{i,j,k}\psi_k(x_i,x_j)
\end{align}
where the $\varphi_k$ and $\psi_k$ are the eigenfunctions corresponding to the eigenvalues $\lambda_{\varphi,k}^2$ and $\lambda_{\psi,k}^2$, respectively. The $\xi_{i,k}$ and $\xi_{i,j,k}$ are independent normal random variables $\xi_{i,k}\sim\mathcal N(0,\lambda_{\varphi,k}^2)$ and $\xi_{i,j,k}\sim\mathcal N(0,\lambda_{\psi,k}^2)$. As before one requires the covariance operators to be \emph{trace-class}, and truncates the expansion \eqref{infinite_KL} after a finite number of term. 

The numerical experiments using the KL function space prior in this paper are based on the following Fourier basis functions, $\varphi_k$ defined on $[0,1]$, $\psi_k=\psi_{k_1,k_2}$ defined on $[0,1]^2$ and indexed by a double index $k=(k_1,k_2)\in\mathbb N\times\mathbb N$:
\begin{align}\begin{split}\label{eq:fourier_basis}
    \varphi_{2k}(x_i)= \sin(2\pi kx_i) &\qquad\varphi_{2k+1}(x_i)= \cos(2\pi kx_i)\\
    \psi_{2k_1,2k_2}(x_i,x_j)= \sin(2\pi k_1x_i)\sin(2\pi k_2x_j)&\qquad\psi_{2k_1+1,2k_2}(x_i,x_j)= \cos(2\pi k_1x_i)\sin(2\pi k_2x_j)\\
    \psi_{2k_1,2k_2+1}(x_i,x_j)= \sin(2\pi k_1x_i)\cos(2\pi k_2x_j)&\qquad\psi_{2k_1+1,2k_2+1}(x_i,x_j)= \cos(2\pi k_1x_i)\cos(2\pi k_2x_j),\end{split}
\end{align}
for $i\neq j$, with corresponding eigenvalues
\begin{align}\label{eq:fourier_eigenvalues}
    \lambda^2_{\varphi,2k}&=\lambda^2_{\varphi,2k+1}=\frac{1}{k^\alpha}\\
    \lambda^2_{\psi,2k_1,2k_2}&=\lambda^2_{\psi,2k_1+1,2k_2}=\lambda^2_{\psi,2k_1,2k_2+1}=\lambda^2_{\psi,2k_1+1,2k_2+1}=\frac{1}{\left(\sqrt{k_1^2+k_2^2}\right)^\alpha}.
\end{align}
See Figure \ref{fig:KL_prior} for some representative draws from this prior, which is a modification from the prior used in Section 4.2 of Beskos et al. (2017). The covariance operator is of the form $-\Delta^{-\alpha}$ where $\Delta$ denotes the Laplacian, and we allow both Dirichlet (e.g. $\varphi_{2k}(0)=\varphi_{2k}(1)=0$) and Neumann boundary conditions (e.g. $\varphi'_{2k+1}(0)=\varphi'_{2k+1}(1)=0$), with opposing sides of the square $[0,1]^2$ satisfying the same boundary conditions.

\begin{figure}[htp]
\begin{center}
	\includegraphics[width=\linewidth]{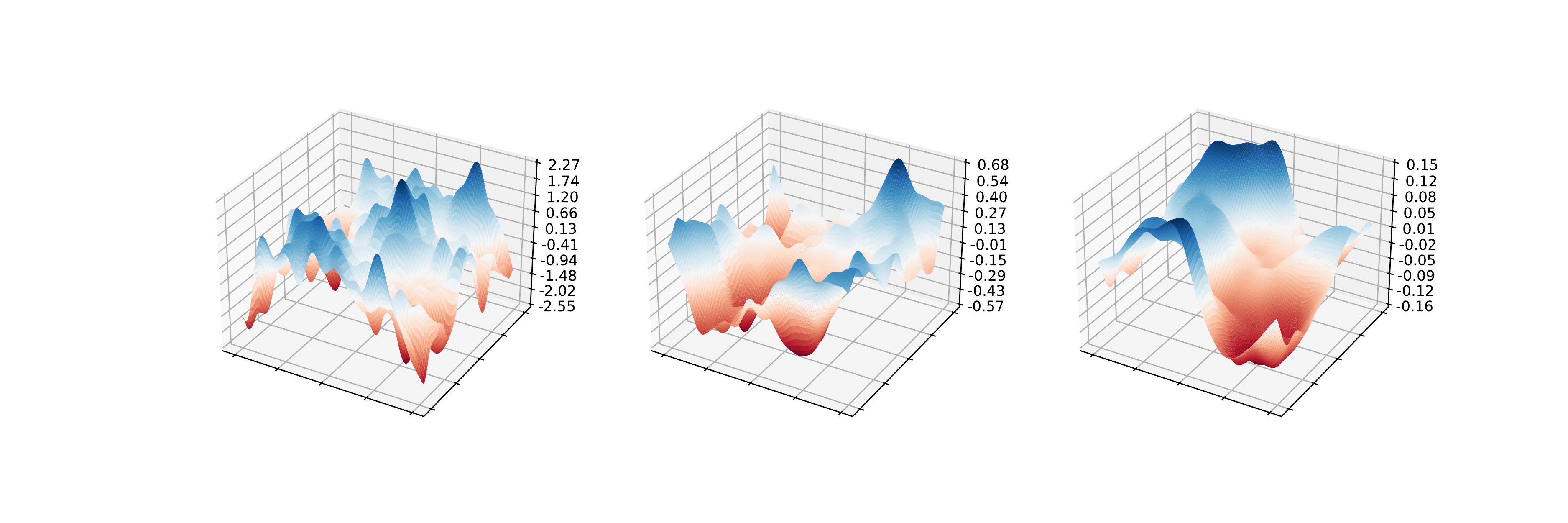}
	\caption{Three samples from the Karhunen-Lo\'eve prior; the basis functions are the two-dimensional Fourier functions. In ascending order from left to right we set $\alpha\in\{1.001,1.5,2\}$ with the eigenvalues scaling as $\lambda^2_k\propto1/(k_1^2+k_2^2)^\alpha$, for the double index $k=(k_1,k_2)$. The tuning parameter $\alpha$ controls the smoothness of the samples.}
	\label{fig:KL_prior}
\end{center}
\end{figure} 

Section \ref{sec:NN_priors} will introduce a prior which scales favourably with the domain-dimension as it does not require pre-defining an orthogonal basis.

\subsection{Metropolis-Hastings algorithms on Hilbert spaces\label{MH_pCN}}
This section recapitulates how to define `sensible' Metropolis-Hastings Markov chain Monte Carlo algorithms for inference over the $\xi_i$ in \eqref{KL}. Using Markov chains is an established approach to sample from distributions on finite-dimensional state spaces (see \cite{brooks2011handbook} for an overview of MCMC methods) and our emphasis here is  to review algorithms which can theoretically deal with arbitrarily many basis coefficients, without having to be re-tuned to avoid the usual problem of the acceptance probability degenerating as one includes more coefficients. This property, known as \emph{stability under mesh-refinement}, is not satisfied by the popular Random Walk Metropolis-Hastings Algorithm (RWMH, \cite{hastings1970monte}), or by the Metropolis Adjusted Langevin Algorithm (MALA, \cite{roberts1996exponential}).

Two algorithms which are both dimension-independent are the preconditioned Crank-Nicolson (pCN) and the preconditioned Crank-Nicolson Langevin (pCNL) algorithms, the former introduced as early as \cite{neal1998regression} and both derived and discussed in \cite{cotter2013mcmc}. Motivated by the idea of increasing dimensions translating to evaluating a function on a finer mesh, we will also refer to the dimension-independence of these algorithms as stability under mesh-refinement. Both algorithms can be seen as a discretisation of the following stochastic partial differential equation:
\begin{align}\label{eq:SPDE}
    \frac{du}{ds}=-\mathcal K(\mathcal C^{-1}u-\gamma D\ell(u))+\sqrt{2\mathcal K}\frac{dB}{ds},
\end{align}
where $D\ell$ is the Fr\'echet derivative of the log-likelihood\footnote{Note that we use the log-likelihood $\ell$ rather than the potential $\Phi=-\ell$ as the authors of \cite{cotter2013mcmc}.}, $\mathcal K$ is a preconditioner, $C$ is the covariance operator of the Gaussian prior measure, $B$ is a Brownian motion, and $\gamma$ a tuning parameter: if $\gamma=0$, the invariant distribution of \eqref{eq:SPDE} is the prior $\mu_0$, and for $\gamma=1$ the invariant distribution is the posterior $\mu$. With the choice $\mathcal K=\mathcal C$ (the preconditioned case, such that the dynamics are scaled to the prior variances), discretising \eqref{eq:SPDE} using a Crank-Nicolson scheme results in pCN (for $\gamma=0$) and pCNL (for $\gamma=1$). The resulting discretisations can be simplified to
\begin{align}\label{pCN}
    v &=\sqrt{1-\beta^2}u+\beta w,\quad w\sim\mathcal N(0,\mathcal C),\qquad\text{(pCN)}\\
    \label{pCNL}
    v &=\frac{1}{2+\delta}\left[(2-\delta)u+2\delta\mathcal C\mathcal D\ell(u)+\sqrt{8\delta}w\right],\quad w\sim\mathcal N(0,\mathcal C),\qquad\text{(pCNL)}
\end{align}
for step sizes $\beta\in(0,1]$ and $\delta\in(0,2)$, respectively. Note that due to the discretisation scheme used, pCN is prior-reversible, and using it as a proposal in a Metropolis-Hastings sampler to target the posterior, the proposal is accepted with probability $\min\{1,\exp(-\ell(u)+\ell(v))\}$. If the pCNL dynamics are used as a proposal for a MH scheme, the acceptance probability is given by $\min\{1,\exp(\rho(u,v)-\rho(v,u))\}$ where
\begin{align*}
    \rho(u,v)=-\ell(u)-\frac12\langle v-u,\mathcal D\ell(u)\rangle-\frac\delta4\langle u+v,\mathcal D\ell(u)\rangle+\frac\delta4\lVert \sqrt{\mathcal C}\mathcal D\ell(u)\rVert^2.
\end{align*}
Both pCN and pCNL are such that, for an uninformative likelihood, all moves are accepted. In practice, the likelihood Assumptions \ref{assumption_likelihood1} and \ref{assumption_likelihood2} ensure that, unlike RWMH or MALA, neither pCN nor pCNL require their step size $\beta$ or $\delta$ to go to $0$ as one includes more coefficients in the KL expansions \citep{cotter2013mcmc}. 

To conclude this section, we state the assumptions under which both pCN \cite[Thm 6.2]{cotter2013mcmc} and pCNL are well defined. Assumptions \ref{assumption_likelihood1} and \ref{assumption_likelihood2} \citep[Assumptions 6.1]{cotter2013mcmc} are needed for both pCN and pCNL, while \ref{assumption_pCNL} is only required for pCNL \citep{beskos2017geometric}:
\begin{enumerate}
\setcounter{enumi}{2}
    \item\label{assumption_likelihood1} There exist constants $K>0$, $p>0$ such that $0\leq- \ell(y|u)<K(1+\lVert u\rVert_{\mathcal H}^p)$ holds for all $u\in\mathcal H$.
    \item\label{assumption_likelihood2} For all $r>0$, $\exists K(r)>0$ such that for all $u$, $v$ with $\max(\lVert u\rVert_{\mathcal H},\lVert v\rVert_{\mathcal H})<r$, we have $\lvert\ell(y|u)-\ell(y|v)\rvert\leq K(r)\lVert u-v\rVert_{\mathcal H}$.
    \item\label{assumption_pCNL} For all $u\in\mathcal H$, $\mathcal C \mathcal D\ell(u)\in \text{Im}(\mathcal C^{1/2})$, $\mu_0$-almost surely. That is, for any draw $u$ from the prior, the \emph{preconditioned} differential operator at $u$ is in the Cameron-Martin space of the prior with probability $1$.
\end{enumerate}

\section{Trace-Class Neural Network Priors\label{sec:NN_priors}}
The Gaussian prior on $\mathcal H=L^2(\mathcal X,\mathbb R)$ exploits the isometry between the function space $L^2(\mathcal X,\mathbb R)$ and the sequence space $\ell^2$ using the Karhunen-Lo\'eve expansion \citep{gine2016mathematical}, but the computational complexity of using a basis-expansion on a high-dimensional domain is unfeasible even when using approximate function representations such as in \cite{sobol1993sensitivity}. 

Neural networks have been shown to have excellent empirical performance in high-dimensional function regression tasks. Bayesian neural networks (BNNs), capitalising on this success, randomise the neural network architecture to yield Bayesian priors for functions. BNNs are popular as they empirically show good results, scale well in the dimension of the function's domain, and more ground is being made on the supporting theory, e.g. on their approximation quality, infinite-width behaviour etc \citep{hornik1991approximation, matthews2018gaussian}. A drawback of standard BNNs is currently the limited interpretability of the posterior distributions on the parameter space, as the distribution on each weight degenerates due to the scaling of the variance proportional to the number of nodes.

We now propose a prior for the parameters that define a neural network which will generate almost surely well-defined functions for an infinite-width neural network. This is achieved by parameterising the infinite width neural network using sequences in the Hilbert space $\mathcal H=\ell^2$, the space of square-summable real valued sequences, and endow it with a trace-class Gaussian prior. This then allows inference for such neural networks to be conducted using the dimension-independent MCMC methods discussed in Section \ref{MH_pCN}. 

Through the architecture of the neural network, the prior $\mu_0$ over the parameters implicitly defines a prior on the output function of the neural network. Under mild assumptions on the network architecture, and if $\mathcal X$ is compact, the output functions, which we denote as $v$, are $\mu_0$-almost surely square-integrable over $\mathcal X$, and the prior thus naturally defines a prior over $L^2(\mathcal X,\mathbb R)$ as well. Neural network priors are also more flexible compared to the Karhunen-Lo\'eve expansion of a Gaussian measure: one neither needs to specify a covariance operator and find its eigenfunctions, nor decide on a basis which is then used to define a Gaussian prior. By giving up the orthogonality of these eigenfunctions (which allow for a rich theoretical analysis), one gains on the performance side, see our numerical comparisons in Section \ref{sec:prior_comparison}. We coin the term \emph{trace-class neural network prior} (tcNN) to emphasise that the prior leads to a well-defined function space prior if the variances of all parameters are appropriately summable. The term is well-established for Gaussian measures, where these are called trace-class if the eigenvalues of the covariance operator are summable. 

Consider a $n$-layer feed-forward fully-connected neural network illustrated in Figure \ref{fig:NN}. The width of layer $l$ is $N^{(l)}$, the input to the first layer is $x\in[0,1]^d$, the domain of the function to be approximated, and let $v(x)=f_1^{(n+1)}(x)\in\mathbb R$ denote the network's output; for notational convenience we write $N^{(0)}=d$ and $N^{(n+1)}=1$. The network is described fully by the following set of real valued weights and biases,
\begin{align}\label{wb_all}
    w= \left\{w_{i,j}^{(l)}\right\}_{i=1,j=1,l=1}^{N^{(l)},N^{(l-1)},n+1},\qquad
    b= \left\{b_i^{(l)}\right\}_{i=1,l=1}^{N^{(l)},n+1},\qquad\theta=(w,b),
\end{align}
where we have summarised $w$ and $b$ as $\theta$. Given an \emph{activation function} $\zeta:\mathbb R\rightarrow\mathbb R$, the functions of each layer are 
\begin{align}\begin{split}\label{def:NNfunctions}
    f_i^{(1)}(x)&=b_i^{(1)}+\sum_{j=1}^dw_{i,j}^{(1)}x_j,\qquad i=1\dots N^{(1)}\\
    f_i^{(l)}(x)&=b_i^{(l)}+\sum_{j=1}^{N^{(l-1)}}w_{i,j}^{(l)}\zeta(f_j^{(l-1)}(x)),\qquad i=1\dots N^{(l)},~l=2\dots n\\
    v(x)=f_1^{(n+1)}(x)&=b_1^{(n+1)}+\sum_{j=1}^{N^{(n)}}w_{1,j}^{(n+1)}\zeta(f_j^{(n)}(x)).
    \end{split}
\end{align}

\begin{figure}
\begin{center}
\includegraphics[width=0.65\linewidth]{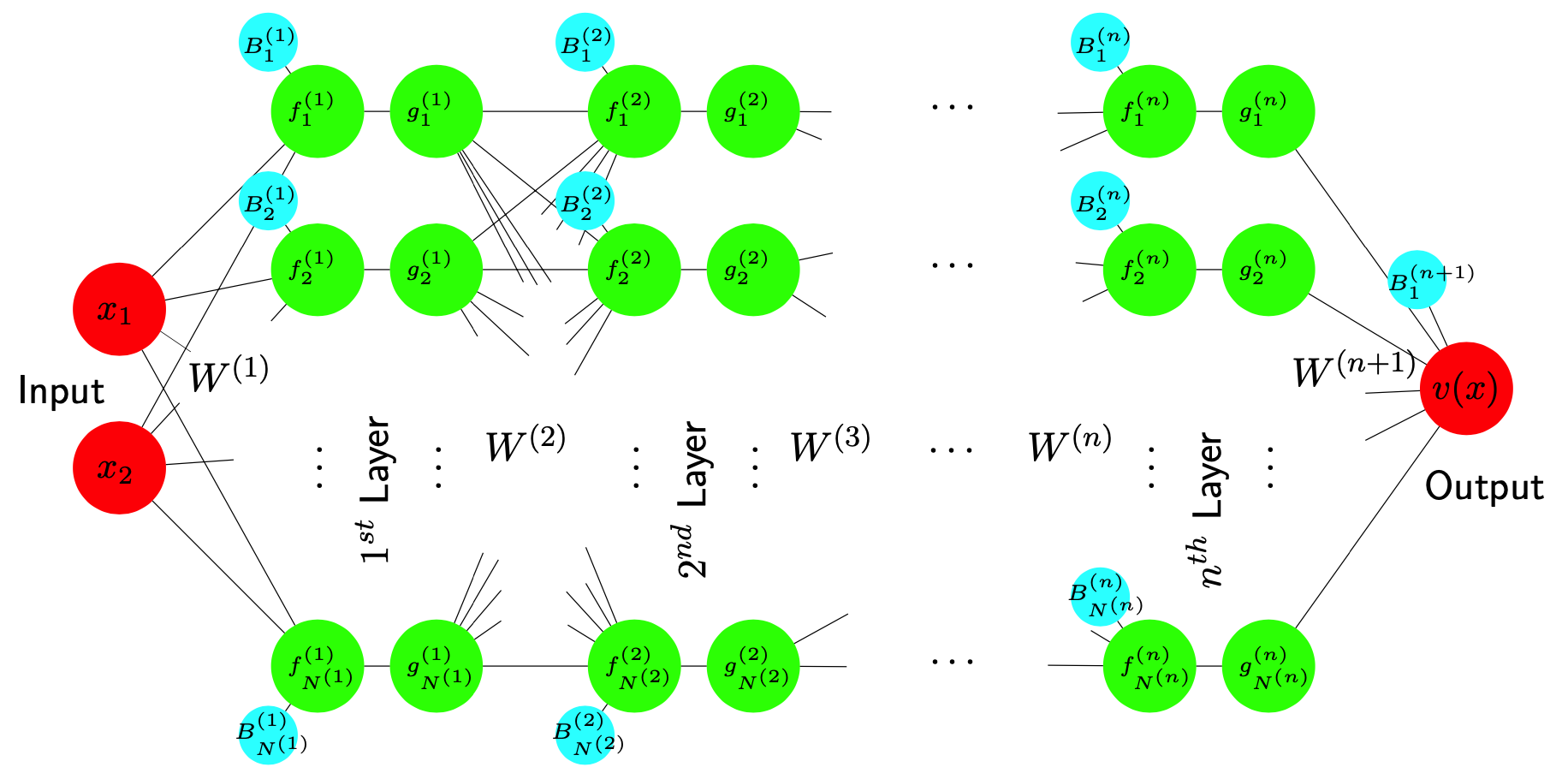}
\caption{A $n$-layer feed-forward neural network defining a function $v:\mathbb R^2\rightarrow\mathbb R$. Note that $g_i^{(l)}=\zeta(f_i^{(l)})$.}\label{fig:NN}
\end{center}
\end{figure}

The prior $\mu_0$ is now defined as follows: the individual weights and biases in each layer $l$ are independent and normally distributed, and we emphasise here that the novelty is to choose the variances not uniformly, but to decrease them as one moves into the tail nodes of each layer: 
\begin{align}\label{prior_def}
    W_{i,j}^{(1)}\sim\mathcal N\left(0,\frac{\sigma_{w^{(1)}}^2}{i^\alpha}\right),\quad
    W_{i,j}^{(l)}\sim\mathcal N\left(0,\frac{\sigma_{w^{(l)}}^2}{(ij)^\alpha}\right)~\text{for}~l=2\dots n+1,\quad
    B_{i}^{(l)}\sim\mathcal N\left(0,\frac{\sigma_{b^{(l)}}^2}{i^\alpha}\right),
\end{align}
where indices $i$, $j$, and $l$ are defined in \eqref{wb_all}, $\alpha>1$ is a fixed constant, and $\sigma_{w^{(l)}}^2>0$ for each $l$ (to avoid degeneracy of the prior). The reader should note that the prior is invariant with respect to permutation of the input variables, thus avoiding preferential treatment of any of the inputs.

The tuning parameter $\alpha$ controls how quickly the magnitude of the weights decrease in the direction of the tail nodes and is empirically seen to control how `variable' the sampled function is.
If $\alpha>1$ we refer to the prior as \emph{trace-class}, coining the term \emph{trace-class neural network priors}. If one believes that potentially many nodes with large weights are needed, one should choose $\alpha$ close to $1$. See Figure \ref{fig:NN_prior} for three representative draws from the neural network prior.  As the next theorem will show, this allows indeed to define an infinitely wide network by taking $N^{(l)}=\infty$, and the variances can be summarised in a diagonal covariance operator $\mathcal C$; this prior is well-defined on an infinite-dimensional Hilbert space (isometric to $\ell^2$), and can thus be used in the algorithms from Section \ref{function_space_MCMC}. In practice, one truncates the number of nodes within each layer as for the priors described before, or one may randomly switch nodes on and off similarly to the random truncation prior used in \cite{cotter2013mcmc}.

We now define the infinite width limit of the network. Given an infinite sequence of weights and biases for the first layer, distributed according to the prior \eqref{prior_def}, i.e. $\left\{\left(B_i^{(1)}, W_{i,1}^{(1)},\ldots,W_{i,d}^{(1)}\right): i\in \mathbb{N}\right\}$, all the functions of the first layer, $\{f_i^{(1)}: i \in \mathbb{N}\}$, are clearly well defined.
We define all the functions of the second layer corresponding to an infinite-width first layer to be the following almost sure limits, assuming they exist:
\begin{equation}
F_i^{(2)}(x)= \lim_{N^{1}\rightarrow \infty} 
B_i^{(2)}+\sum_{j=1}^{N^{(1)}}W_{i,j}^{(2)}\zeta(f_j^{(1)}(x)).
\label{eq:infNN_functions}
\end{equation}
Assuming the random functions $\{F_i^{(2)}: i \in \mathbb{N}\}$ are well defined, proceeding iteratively, all the functions $\{F_i^{(l)}: i \in \mathbb{N}\}$ of subsequent layers, $l=3,\ldots,n$ and the output layer $F_1^{(n+1)}$ can be defined similarly. The functions in each layer of a finite width network are denoted with lower case to clearly distinguish them from their infinite width versions. For the output layer,  
the finite network gives $v(x)$ or $f_{n+1}^{(1)}(x)$ while the infinite network gives $V(x)$ or $F_{n+1}^{(1)}(x)$.
In what follows, we will often write $v(x)=v_\theta(x)$ to emphasise the dependence of the function samples on the weights and biases. In order to simplify the presentation of the main results, we list a set of properties which will be shown to hold for our BNN prior:

\begin{figure}[htp]
\begin{center}
	\includegraphics[width=\linewidth]{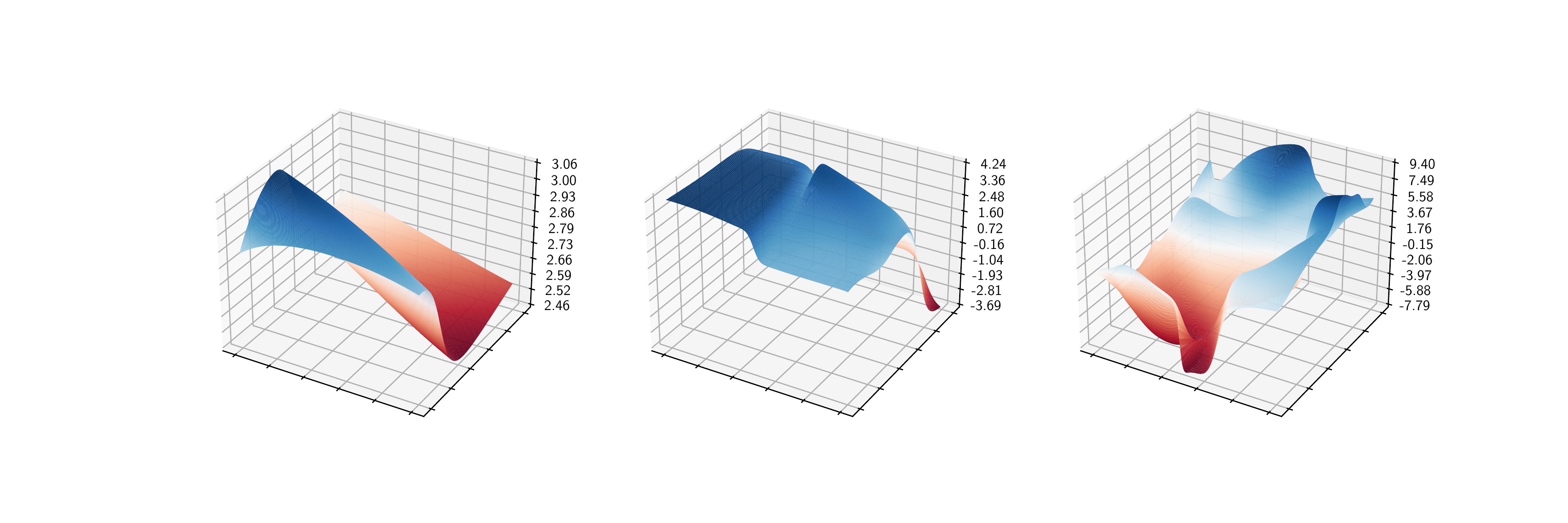}
	\caption{Three samples from the trace-class neural network prior defined on $[0,1]^2$, for a network with $3$ fully-connected layers and tanh activation functions; Tuning parameters are set to $\alpha=1.5$, $\sigma_{w^{(l)}}^2=\sigma_{b^{(l)}}^2=3$ for all $l=1\dots n+1$ (left), $\alpha=1.5$, $\sigma_{w^{(l)}}^2=\sigma_{b^{(l)}}^2=30$ (centre), $\alpha=1.0001$, $\sigma_{w^{(l)}}^2=\sigma_{b^{(l)}}^2=30$ (right). The tuning parameter $\alpha$ controls the complexity of the prior functions, the variances in the layers control the overall variance. Note the difference in the magnitudes on the $z$-axis.}
	\label{fig:NN_prior}
\end{center}
\end{figure} 

\begin{enumerate}
  \setcounter{enumi}{5}
    \item\label{assumption_finite_variance} $\forall x\in [0,1]^d$ one has $\lvert f_i^{(l)}(x)\rvert<\infty$ $\mu_0$-almost surely, $\mathbb Ef_i^{(l)}(x)=0$. Furthermore, there $\exists~\sigma_l^2$ such that $\mathbb E\left[(f_i^{(l)}(x))^2\right]<\sigma_l^2/i^\alpha$ for all $x\in[0,1]^d$. Here, both expectations are taken with respect to the prior on the parameters $\theta$ of the neural network. In particular this property holds for $v(x)=f_1^{(n+1)}(x)$. 
    \item\label{assumption_cts_vf} $\exists~c_l\geq0$ such that $\forall x,y\in [0,1]^d$, $\mathbb E\left[\left(f_i^{(l)}(x)-f_i^{(l)}(y)\right)^2\right]\leq c_l\lVert x-y\rVert^2/i^\alpha$, with the expectation again taken with respect to the prior. In particular this gives $\mathbb E\left[v(x)-v(y)\right]^2\leq c_{n+1}\lVert x-y\rVert^2$.
    \item\label{assumption_information_passes} $\mu_0$-almost surely, $v$ is differentiable almost everywhere.
\end{enumerate}
The first declared property ensures the output functions are appropriately finite in value and moments while the second property ensures a degree of smoothness. We now state a theorem which shows that the proposed prior satisfies the declared properties. To this end, we use an activation function\footnote{As will be clear from the proof of Theorem \ref{thm:prior}, one may use different activation functions at different layers, which will then all have to satisfy this assumption.} $\zeta:\mathbb R\rightarrow\mathbb R$ which satisfies the following condition, which will imply that $\lvert\zeta(x)\rvert<\lvert x\rvert$ for all $x\in\mathbb R$, and that $\zeta$ is differentiable almost everywhere, with the derivative being essentially bounded by $1$:
\begin{enumerate}
  \setcounter{enumi}{8}
    \item $\zeta$ is Lipschitz continuous with Lipschitz constant $1$ and $\zeta(0)=0$ \label{assumption_bounded_activation}.\footnote{The generalisation to arbitrary Lipschitz constants and the implication $\exists~c>0$ such that $\forall x\in\mathbb R$: $\lvert\zeta(x)\rvert<c\lvert x\rvert$ is straightforward.} 
\end{enumerate}

\begin{thm}\label{thm:prior}
Under Assumption \ref{assumption_bounded_activation}, the functions of the layers of the \emph{finite-width} neural network satisfy Properties \ref{assumption_finite_variance}, \ref{assumption_cts_vf}, and \ref{assumption_information_passes}. In addition, if $\alpha>1/2$, the functions on every layer of the infinite-width neural network (see \eqref{eq:infNN_functions}) exist almost surely and satisfy Properties \ref{assumption_finite_variance} and \ref{assumption_cts_vf}, when the functions $f_i^{(l)}(x)$ and $v(x)$ therein are replaced with  $F_i^{(l)}(x)$ and $V(x)$ defined as in \eqref{eq:infNN_functions}. In addition, if the prior is \emph{trace-class} (i.e. $\alpha>1$), Property \ref{assumption_Hilbert_space} is satisfied.
\end{thm}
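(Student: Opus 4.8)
The plan is to prove Properties \ref{assumption_finite_variance} and \ref{assumption_cts_vf} by induction on the layer index $l$, handling the finite- and infinite-width cases in parallel, and to treat Property \ref{assumption_information_passes} and Assumption \ref{assumption_Hilbert_space} separately. For the base case $l=1$, the function $f_i^{(1)}(x)=b_i^{(1)}+\sum_{j=1}^d w_{i,j}^{(1)}x_j$ is a finite linear combination of independent centred Gaussians, hence centred, a.s.\ finite, with $\mathbb E[(f_i^{(1)}(x))^2]=i^{-\alpha}(\sigma_{b^{(1)}}^2+\sigma_{w^{(1)}}^2\lVert x\rVert^2)\le i^{-\alpha}(\sigma_{b^{(1)}}^2+d\,\sigma_{w^{(1)}}^2)$ since $x\in[0,1]^d$, and $\mathbb E[(f_i^{(1)}(x)-f_i^{(1)}(y))^2]=\sigma_{w^{(1)}}^2\lVert x-y\rVert^2/i^\alpha$. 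This gives both properties at $l=1$ with explicit $\sigma_1^2$ and $c_1$.

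For the inductive step I would expand $f_i^{(l)}(x)=b_i^{(l)}+\sum_j w_{i,j}^{(l)}\zeta(f_j^{(l-1)}(x))$ and compute its second moments. The crucial structural fact is that the weights $w_{i,j}^{(l)}$ are centred, independent across $j$, and independent of all earlier layers, so $f_i^{(l)}(x)$ is again centred, and conditioning on the $\sigma$-algebra $\mathcal F_{l-1}$ generated by layers $1,\dots,l-1$ kills every cross term; the second moment then reduces to $i^{-\alpha}\sigma_{b^{(l)}}^2+\sum_j\sigma_{w^{(l)}}^2(ij)^{-\alpha}\,\mathbb E[\zeta(f_j^{(l-1)}(x))^2]$. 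Using $\lvert\zeta(t)\rvert\le\lvert t\rvert$ (implied by Property \ref{assumption_bounded_activation}) together with the inductive bound $\mathbb E[(f_j^{(l-1)}(x))^2]\le\sigma_{l-1}^2/j^\alpha$ collapses this to $i^{-\alpha}\bigl(\sigma_{b^{(l)}}^2+\sigma_{w^{(l)}}^2\sigma_{l-1}^2\sum_j j^{-2\alpha}\bigr)$, and the identical computation using the $1$-Lipschitz property of $\zeta$ propagates Property \ref{assumption_cts_vf}. The series $\sum_j j^{-2\alpha}$ is finite for any finite width, and converges in the infinite-width case precisely because $2\alpha>1$, i.e.\ $\alpha>1/2$, which is exactly where that hypothesis enters.

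The main obstacle is showing that the infinite-width limits \eqref{eq:infNN_functions} exist almost surely, because for $l\ge 3$ the summands $W_{i,j}^{(l)}\zeta(F_j^{(l-1)}(x))$ are \emph{not} unconditionally independent across $j$ (the $F_j^{(l-1)}$ share the randomness of the layers below $l-1$). I would resolve this by conditioning on $\mathcal F_{l-1}$: given $\mathcal F_{l-1}$ the functions $F_j^{(l-1)}(x)$ are fixed, so the summands are independent, centred, with conditional variance $\sum_j\sigma_{w^{(l)}}^2(ij)^{-\alpha}\zeta(F_j^{(l-1)}(x))^2$ whose expectation is at most $i^{-\alpha}\sigma_{w^{(l)}}^2\sigma_{l-1}^2\sum_j j^{-2\alpha}<\infty$ and which is therefore a.s.\ finite. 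Kolmogorov's two-series theorem then gives conditional a.s.\ convergence of the partial sums, i.e.\ $\mathbb P(\text{convergence}\mid\mathcal F_{l-1})=1$ a.s., and taking expectations yields unconditional a.s.\ convergence; the same tail estimate shows the partial sums are Cauchy in $L^2(\Omega)$, so the moment bounds above pass to the limit and Properties \ref{assumption_finite_variance}, \ref{assumption_cts_vf} hold for $F_i^{(l)}$. Property \ref{assumption_information_passes} follows from Rademacher's theorem: for a.e.\ realisation the weights are finite, so $v$ is a finite composition of affine maps and the $1$-Lipschitz $\zeta$, hence Lipschitz on the compact domain $[0,1]^d$ and thus differentiable Lebesgue-a.e.

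Finally, Assumption \ref{assumption_Hilbert_space} is a direct summability check. The parameter space is $\mathcal H=\ell^2$, which is separable, and the prior \eqref{prior_def} is centred Gaussian with diagonal covariance operator $\mathcal C$ whose eigenvalues are exactly the individual parameter variances. Summing them layer by layer gives contributions proportional to $\sum_i i^{-\alpha}$ (biases and first-layer weights) and to $\bigl(\sum_i i^{-\alpha}\bigr)^2$ (inner-layer weights), and since there are only $n+1$ layers the total is a finite sum of such series. As $\sum_i i^{-\alpha}<\infty$ iff $\alpha>1$, the trace $\mathrm{tr}(\mathcal C)=\sum_i\lambda_i^2$ is finite exactly when $\alpha>1$; this makes $\mathcal C$ trace-class, so $\mu_0$ is a bona fide Gaussian measure on $\ell^2$ and Property \ref{assumption_Hilbert_space} holds.
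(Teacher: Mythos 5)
Your proposal is correct and follows essentially the same route as the paper: the same layer-by-layer induction with identical second-moment bounds, the same use of $\lvert\zeta(t)\rvert\le\lvert t\rvert$ and the $1$-Lipschitz property to propagate Properties \ref{assumption_finite_variance} and \ref{assumption_cts_vf}, Rademacher's theorem for Property \ref{assumption_information_passes}, and the same layerwise summability check giving the trace-class condition exactly when $\alpha>1$. The only difference is the tool used to establish almost-sure existence of the infinite-width limit: the paper applies the $L^2$-bounded martingale convergence theorem to the partial sums $S_k=f_{i,k}^{(l)}(x)$ (a martingale in the truncation index $k$, which automatically absorbs the cross-$j$ dependence you resolve by conditioning on $\mathcal F_{l-1}$), whereas you invoke a conditional Kolmogorov series argument plus an $L^2$ Cauchy estimate --- both rest on the same variance summability $\sum_j j^{-2\alpha}<\infty$ for $\alpha>1/2$.
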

The proof can be found in Appendix \ref{proof:thm:prior}.

\subsection{Identifiability Issues and Remedies}
It is well-known that the output function of a standard neural network does not depend on the labeling of functions within each layer. However, unlike a prior that has uniform variances within each layer, swapping nodes $f_i^{(l)}$ and $f_{i+1}^{(l)}$ (effectively by swapping their corresponding weights and biases) will lead from $\theta$ to a new $\theta'$ such that the prior weights change, and thus avoid the label-switching problem. To facilitate faster mixing by allowing jumps between these different configurations, we propose Algorithm \ref{alg:prior_swaps}, which can be found in Appendix \ref{app:alg}. The algorithm is well defined for finite widths networks, in which case the acceptance ratio is given by $a(\theta,\vartheta)=\mu_0(\vartheta)/\mu_0(\theta)$, but not for infinite width networks, see Lemma \ref{lem:node_swap} in the Supplementary Material; this exemplifies the extra care needed when defining MCMC moves in the infinite dimension setting. One remedy is not to swap all the weights of the two selected nodes but only blocks of them, however we did not pursue this approach.

\subsection{Illustrative Groundwater Flow Example}
Before moving on to more challenging examples, we present an illustrative example, and compare the performance of the neural network prior to the Gaussian prior presented previously. The example, taken from \cite{beskos2017geometric}\footnote{While we could not perfectly replicate their results, we aimed to stick as close to their results as possible.}, aims is to recover the permeability of an aquifer. The PDE $ -\nabla\cdot(\exp(u(x))\nabla p(x))=0$ connects the log-permeability $u$ of a porous medium to the hydraulic head function $p$ with the boundary conditions given by (for $x=(x_1,x_2)$)
\begin{align*}
    p(x)=x_1\quad\text{if }x_2=0,\qquad p(x)=1-x_1\quad\text{if }x_2=1,\qquad\frac{\partial p(x)}{\partial x_1}=0\quad\text{if }x_1\in\{0,1\}.
\end{align*}
To enforce the permeability to be positive, the prior is defined for the log-permeability $u(x)$. \\
We compare two priors. The first one is a trace-class neural network prior with $100$ nodes, Tanh activation function, and a four dimensional input space with the inputs $(x_1,x_2,\sin(x_1),\sin(x_2))$. We set the tuning parameters to $\alpha=1.001$, $\sigma_{w_1}^2=\sigma_{b_1}^2=100$, $\sigma_{w_2}^2=1/30$, and $\sigma_{b_1}^2=1/10$. The second prior is a Gaussian measure on $[0,1]^2$ with the following orthonormal basis and corresponding eigenvalues defined using double indices $i=(i_1,i_2)$:
\begin{align}
    \varphi_i(x)=2\cos\left(\pi(i_1+\frac12)x_1\right)\cos\left(\pi(i_2+\frac12)x_2\right),\quad
    \lambda_i^2=\frac{1}{(\pi^2\left((i_1+1/2)^2+(i_2+1/2)^2\right)^{1.1}}.
\end{align}
In the experiments, we truncated the basis expansion using $1\leq i_1,i_2\leq25$, which gives a similar number of parameters as we used in the neural network example.
The true $u^*$ is now defined using the same basis as $u^*(x)=\sum_iu^*_i\varphi_i(x)$ with $u^*_i=\lambda_i\sin\left((i_1-1/2)^2+(i_2-1/2)^2\right)\cdot\delta[1\leq i_1,i_2\leq10]$. The simulated data  are 33 noisy observations of the true hydraulic head function $p^*$ at various $x$ positions, $y=p^*(x)+\varepsilon$, where $\varepsilon\sim\mathcal N(0,0.01^2)$. The `true' head function $p^*(x)$ is obtained by solving the forward PDE on a $40\times40$ grid. We ran pCN using both priors, and solving the forward problem on a $20\times20$ grid. Both experiments used a similar number of iterations and stored $1000$ MCMC samples to obtain the mean estimates in Figure \ref{fig:GWF_results}. The results in Figure \ref{fig:GWF_results} are less insightful and interpretable than those we will see in the next subsection as the few observations we have are related to the target function only through the PDE. A better comparison between, and validation of, the different priors is through visual posterior predictive checks as shown in Figure \ref{fig:posterior_predictive}.

\begin{figure}[ht!]
\centering
\begin{subfigure}{.2\textwidth}
  \centering
  \includegraphics[width=\linewidth]{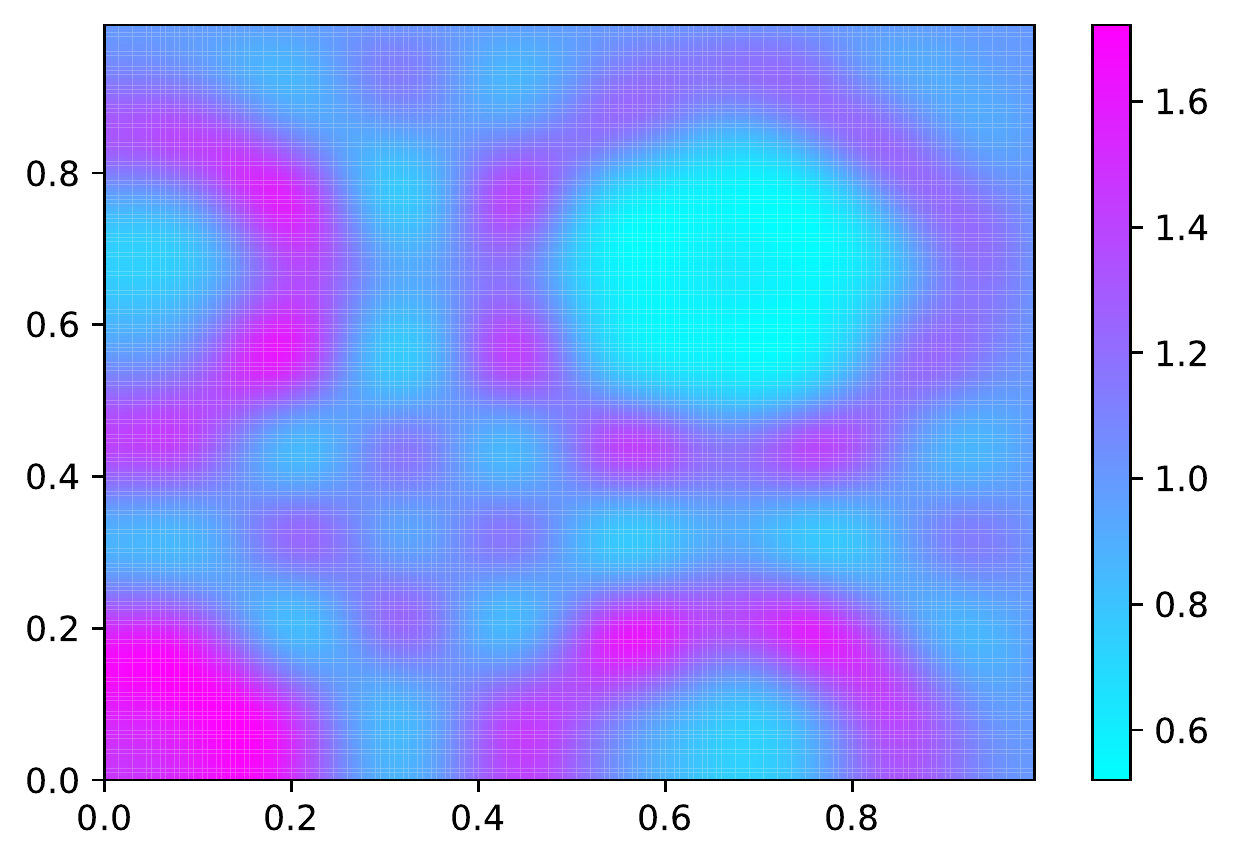}
  \caption{\centering True $\exp(u^*)$.}
\end{subfigure}
\qquad\qquad
\begin{subfigure}{.2\textwidth}
  \centering
  \includegraphics[width=\linewidth]{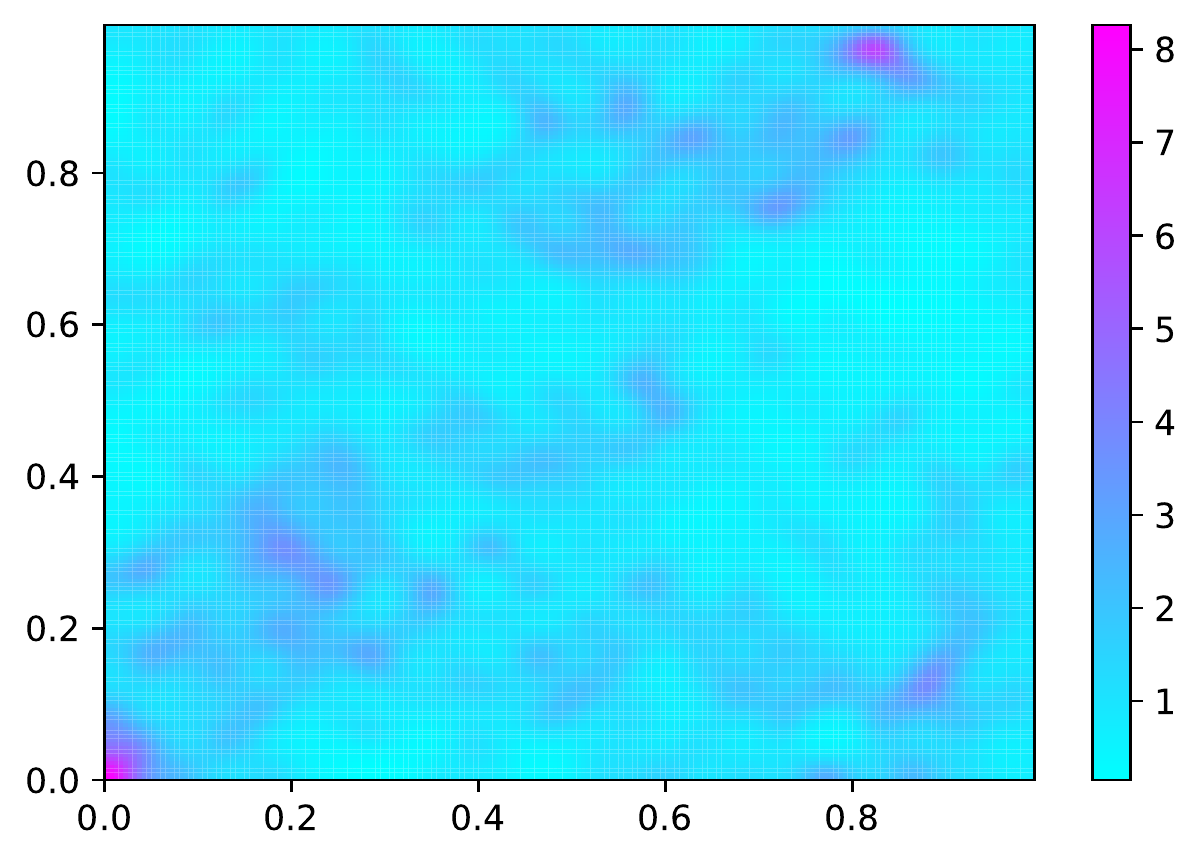}
  \caption{\centering Sample from KL-based prior.}
\end{subfigure}
\begin{subfigure}{.2\textwidth}
  \centering
  \includegraphics[width=\linewidth]{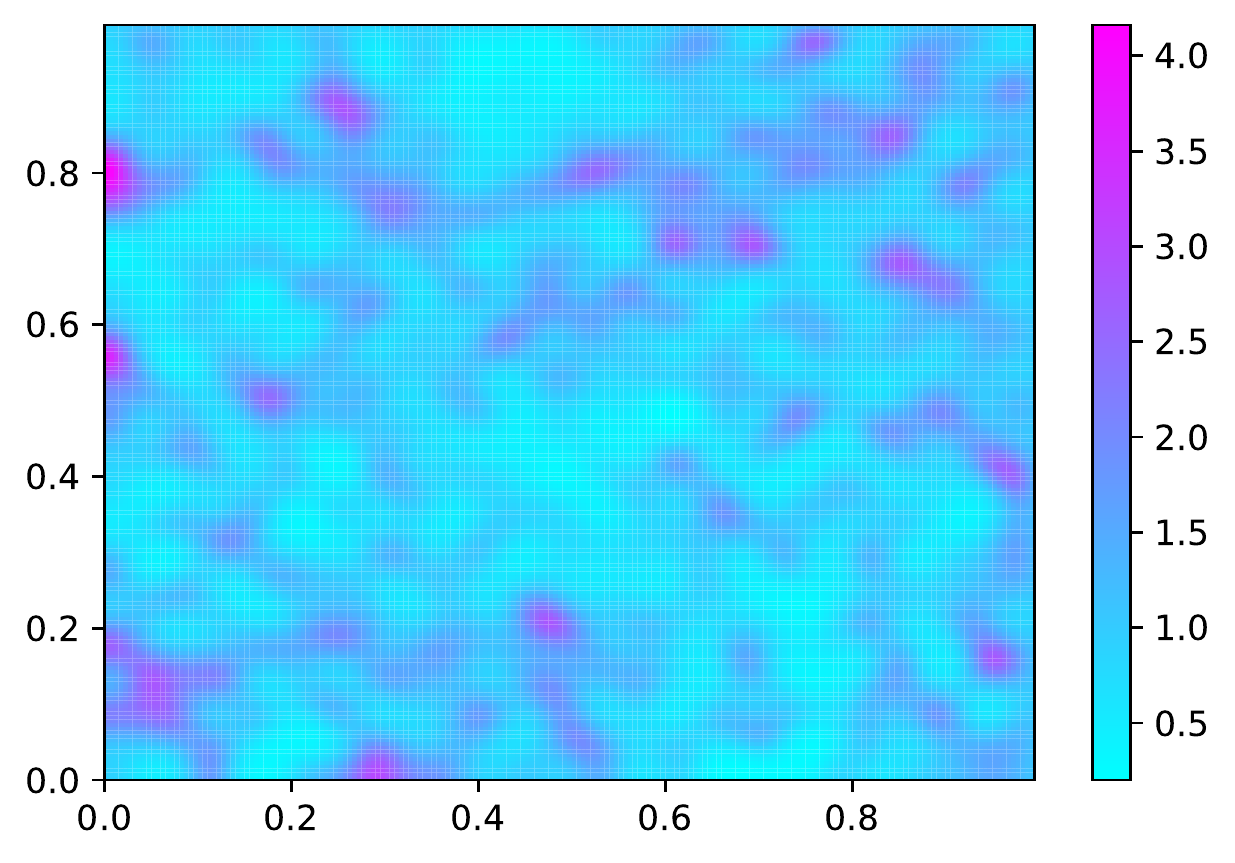}
  \caption{\centering Sample from KL posterior.}
\end{subfigure}
\begin{subfigure}{.2\textwidth}
  \centering
  \includegraphics[width=\linewidth]{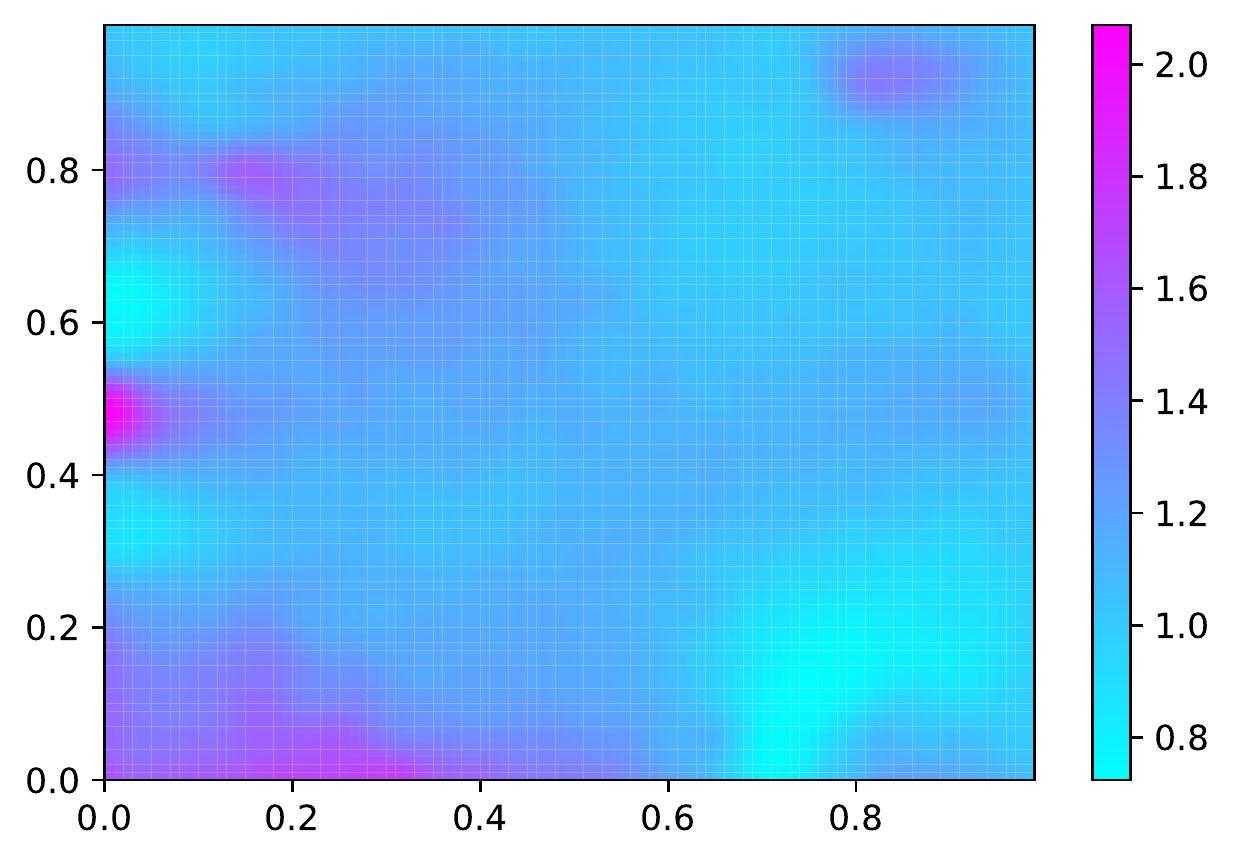}
  \caption{\centering KL posterior mean estimate.}
\end{subfigure}
\vskip\baselineskip
\begin{subfigure}{.2\textwidth}
  \centering
  \includegraphics[width=\linewidth]{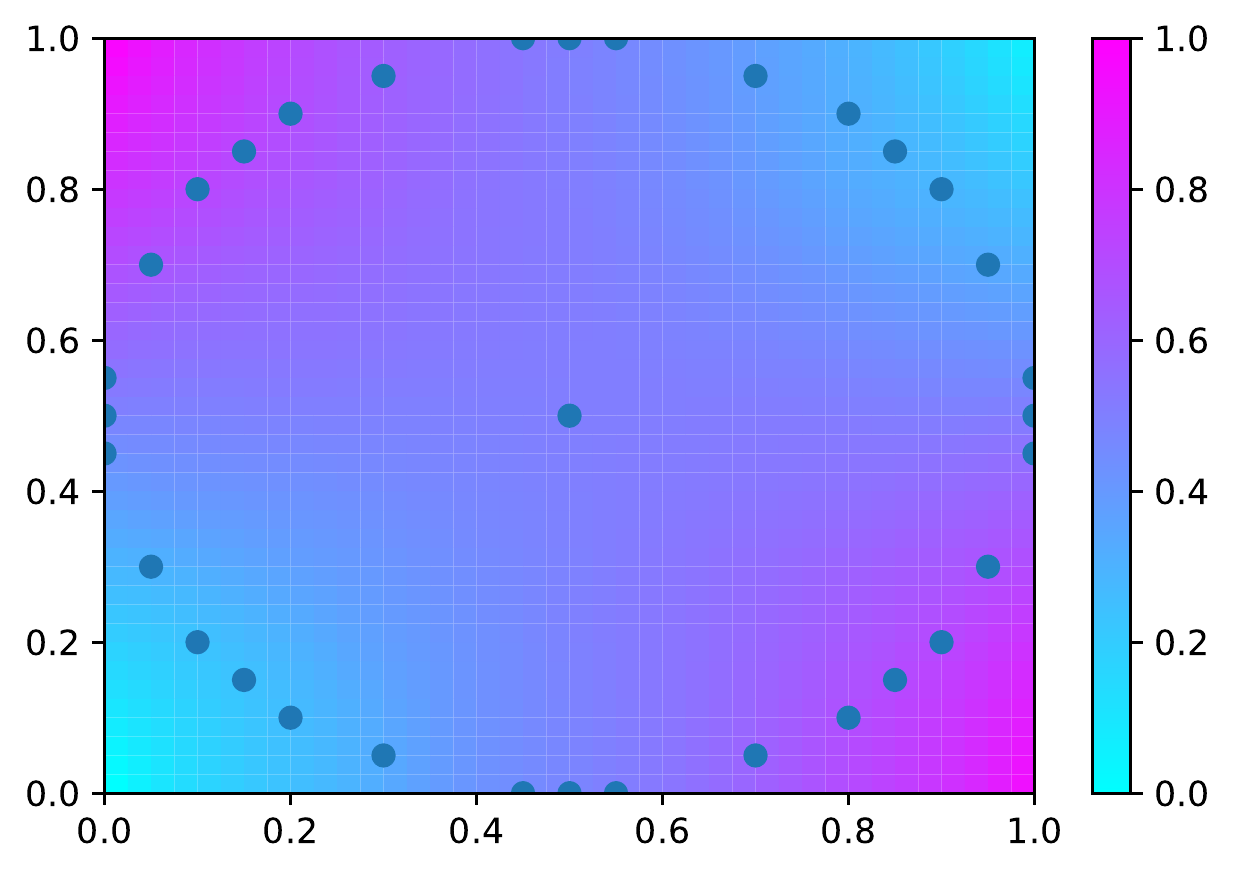}
  \caption{\centering True hydraulic head function $p^*$ and location of observations.}
\end{subfigure}
\qquad\qquad
\begin{subfigure}{.2\textwidth}
  \centering
  \includegraphics[width=\linewidth]{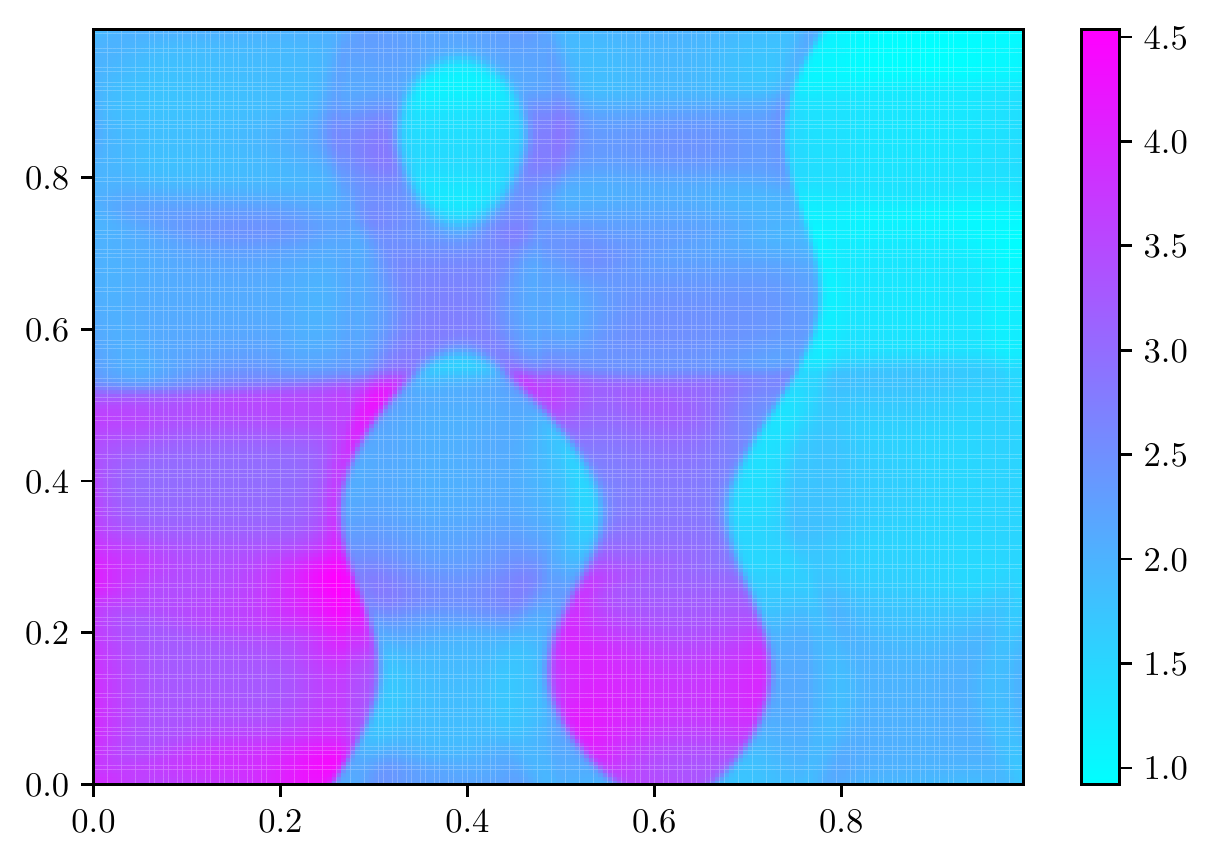}
  \caption{\centering Sample from tcNN prior.}
\end{subfigure}
\begin{subfigure}{.2\textwidth}
  \centering
  \includegraphics[width=\linewidth]{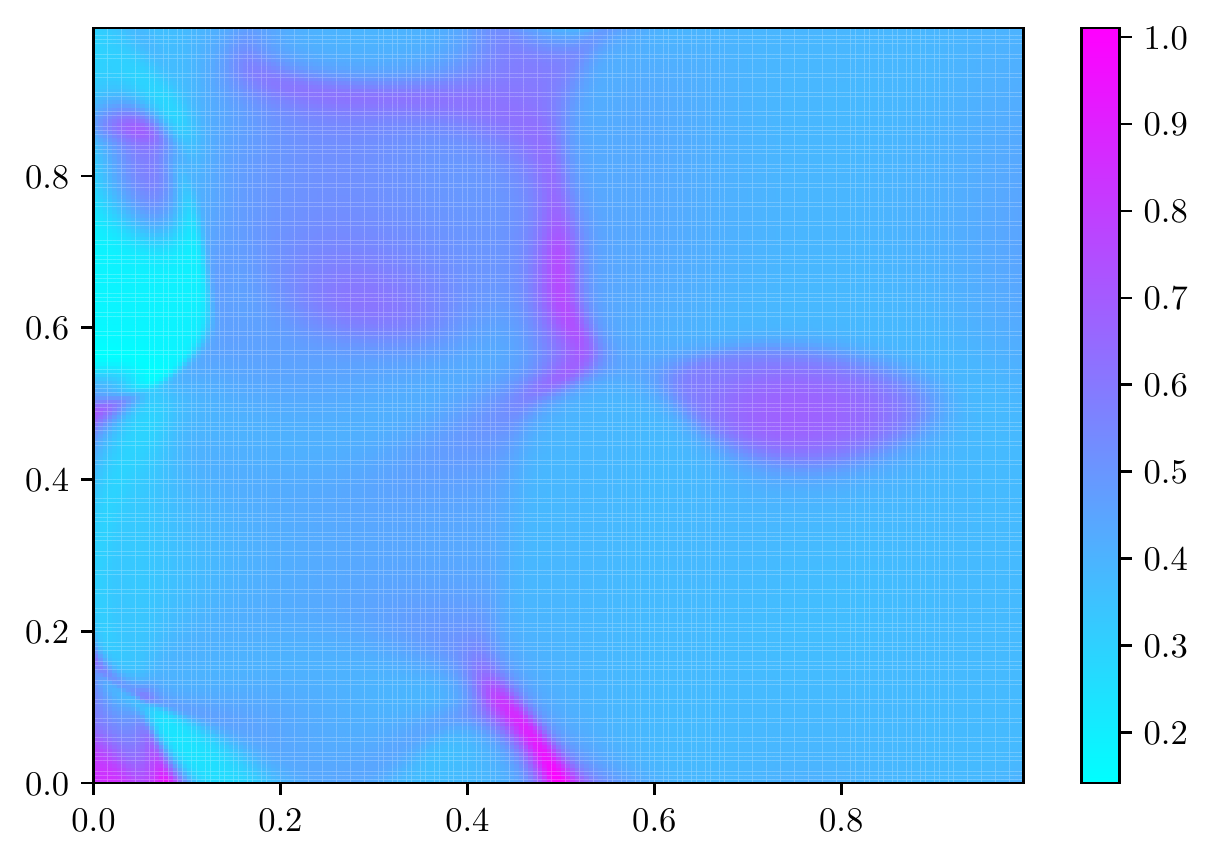}
  \caption{\centering Sample from tcNN posterior.}
\end{subfigure}
\begin{subfigure}{.2\textwidth}
  \centering
  \includegraphics[width=\linewidth]{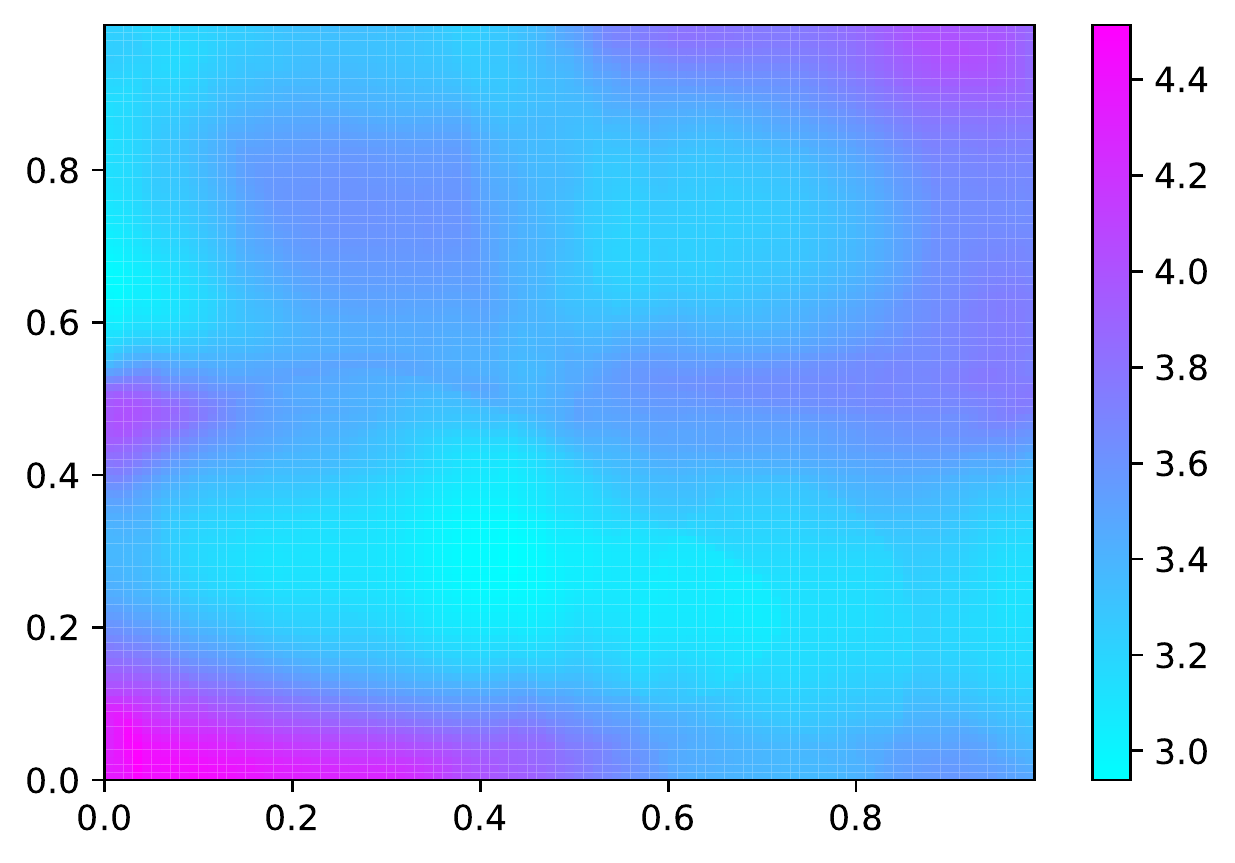}
  \caption{\centering tcNN posterior mean estimate.}
\end{subfigure}
\caption{\emph{Left column:} The true log-permeability (a), and its associated hydraulic head function (e) with the location of the $33$ observations. \emph{Top row, right columns:} A sample from the Gaussian prior (b), a sample from the associated posterior (c), and the posterior mean estimate obtained using pCN for the Gaussian prior (d). \emph{Bottom row, right columns:} A sample from the trace-class neural network prior (f), a sample from the associated posterior (g), and the posterior mean estimate obtained using pCN for the neural network prior (h).
}
\label{fig:GWF_results}
\end{figure}

\begin{figure}[ht]
\centering
\begin{subfigure}{.45\textwidth}
  \centering
  \includegraphics[width=0.75\linewidth]{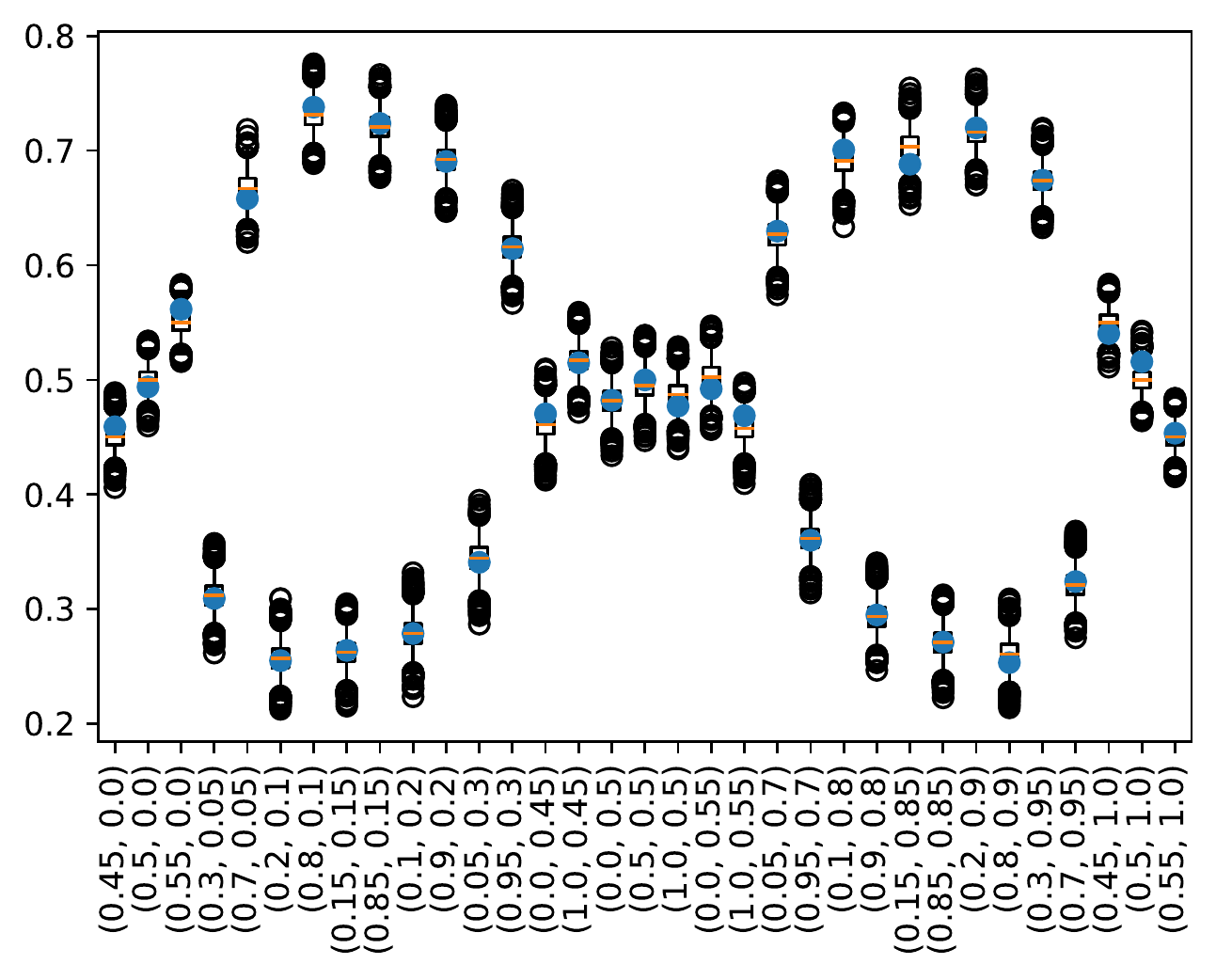}
  \caption{\centering KL posterior predictive.}
\end{subfigure}
\begin{subfigure}{.45\textwidth}
  \centering
  \includegraphics[width=0.75\linewidth]{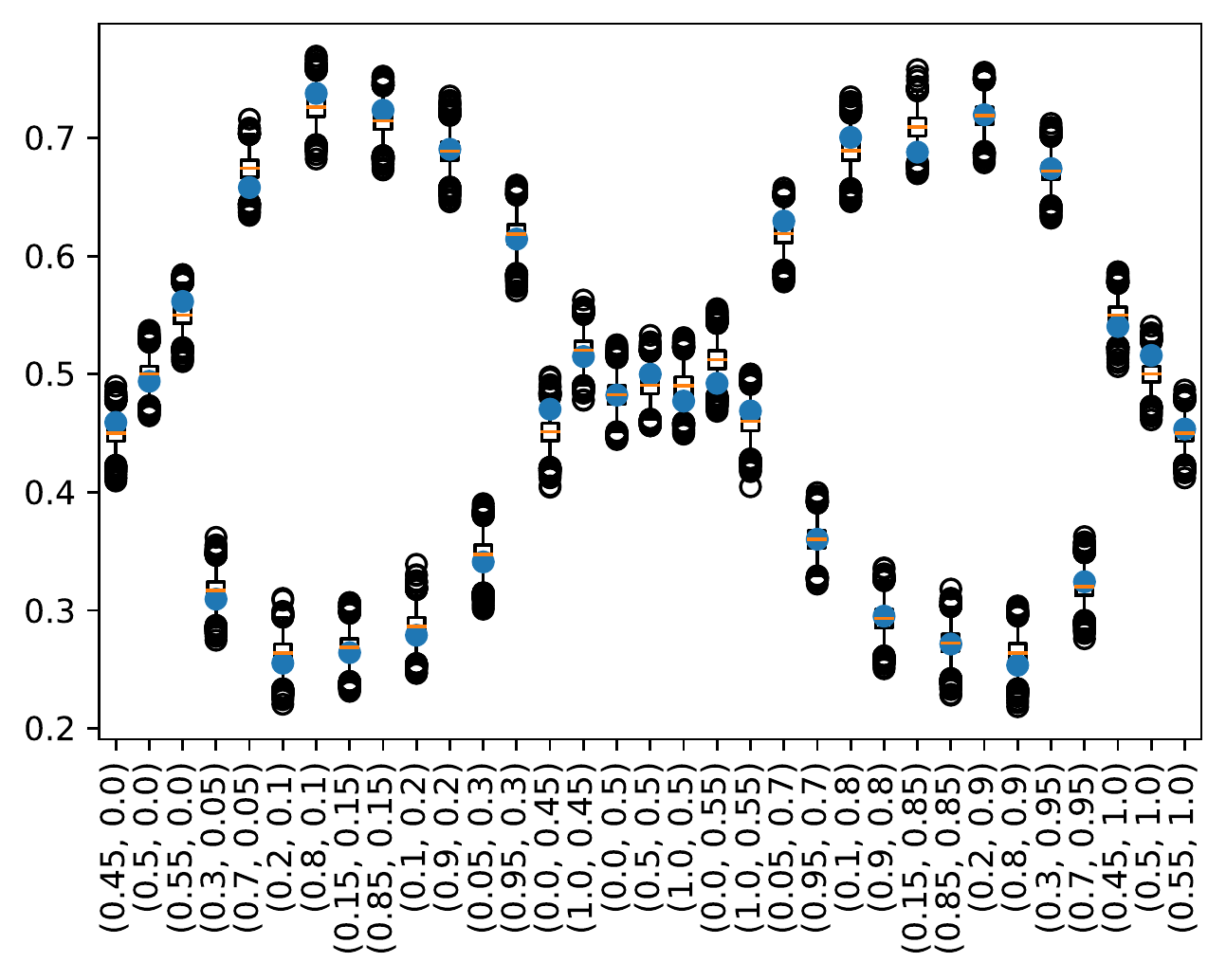}
  \caption{\centering tcNN posterior predictive.}
\end{subfigure}
\caption{Visual posterior predictive check for both the KL- and tcNN-based posteriors. The observed values at each of the $33$ observation locations (see figure \ref{fig:GWF_results}) are shown as a blue dot, the box plots are $100$ samples from the posterior predictive distribution \cite[Section 6.3]{gelman2013bayesian}. Both posteriors show similar predictive performance indicating that they arise from similarly well-suited priors.}
\label{fig:posterior_predictive}
\end{figure}

\subsection{Ability to approximate complicated functions\label{subsec:gwf1}}
To show that the trace-class neural network prior is able to visually recover relatively complicated functions, we define a function $u^*:[0,1]^2\rightarrow \mathbb R_{+}$, and observe this function on a $20\times20$ grid with independent Gaussian noise $\mathcal N(0,0.01^2)$. The true $u^*$ and the parameters of the prior we used here are the same one as in the example above. As Figure \ref{fig:approx_u} shows, the neural network prior is able to approximate the true $u^*$ when given many, in this example $400$, observations.

\begin{figure}[ht]
\centering
\begin{subfigure}{.45\textwidth}
  \centering
  \includegraphics[width=0.75\linewidth]{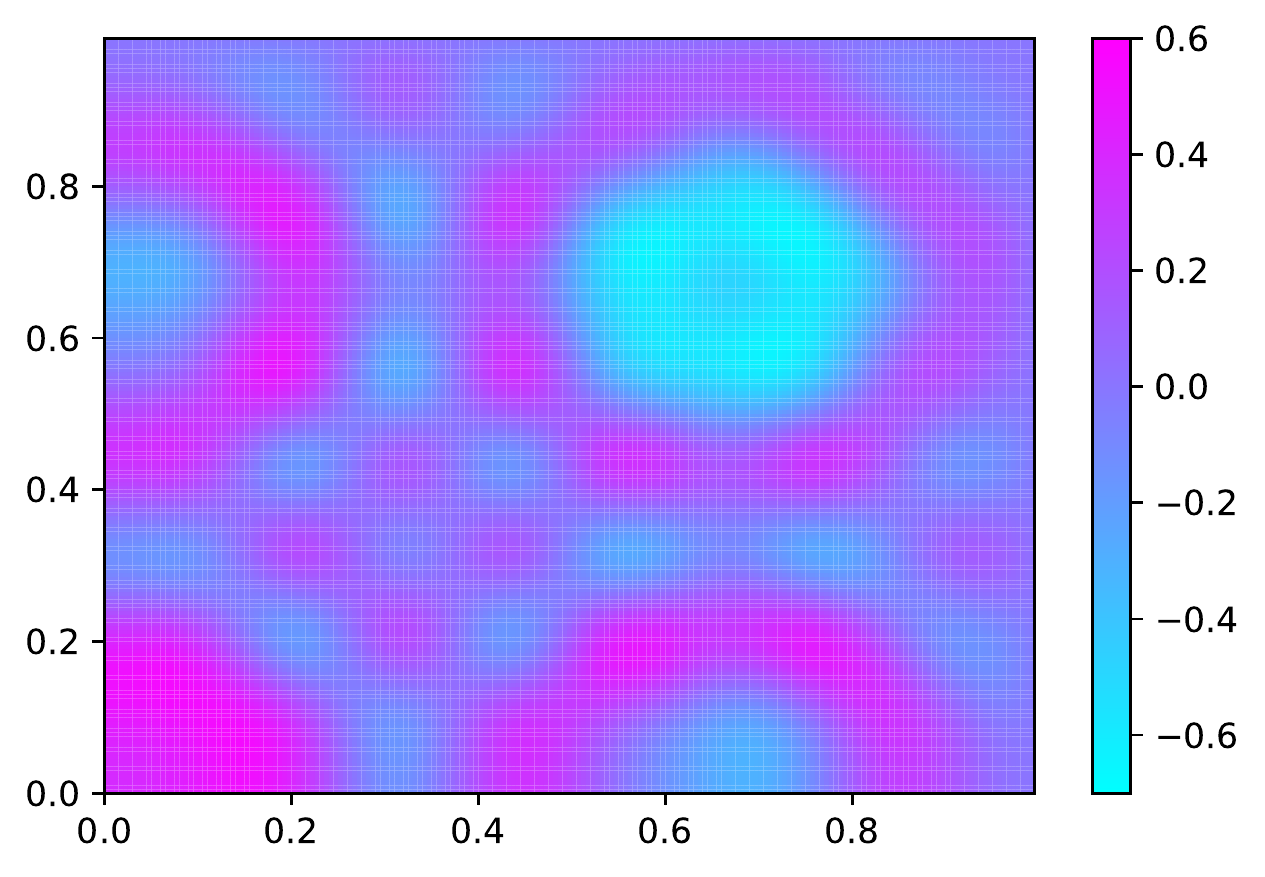}
  \caption{\centering True log-permeability $u^*$.}
\end{subfigure}
\begin{subfigure}{.45\textwidth}
  \centering
  \includegraphics[width=0.75\linewidth]{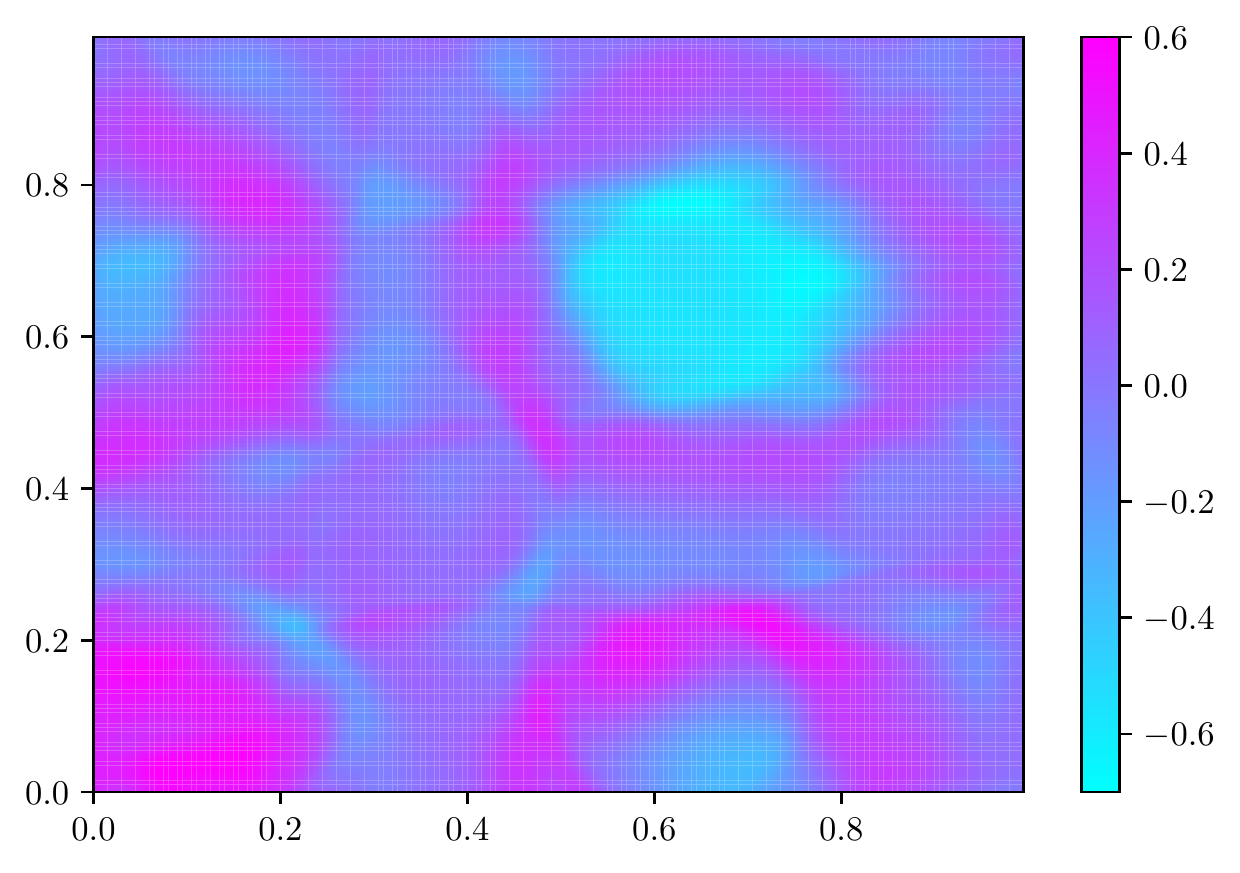}
  \caption{\centering Mean estimate for the tcNN-based posterior.}
\end{subfigure}
\caption{The neural network estimates the true $u^*(x)$ which is noisily observed on every grid point $x$ of a $20\times20$ grid. In real applications, only few observations will be available, this example simply illustrates that many observations lead to close approximations for the trace-class neural network prior. Note that the functions displayed are shown on a fine grid, a coarse $20\times20$ sub-grid was used to generate the observations.}
\label{fig:approx_u}
\end{figure}
\section{Bayesian Learning of the Optimal Value Function\label{RL}}
The solution to a stochastic optimal control problem is known as the {\it optimal value function} which can be found through Dynamic Programming (DP) (discussed in Section \ref{sec:setup}.) Reinforcement Learning is a popular and practical algorithmic approach for solving stochastic optimal control problems \citep{sutton2018reinforcement}. It finds the best control, which is a mapping from states to actions, in an online manner by using noisy estimates of the mathematical expectations to be maximised in DP. Online here refers to the use of the current best learnt control to actuate the system to its next state which is also accompanied by a corresponding stochastic reward. This interaction with the system yields a stochastic process of actions, states and rewards with which DP's mathematical expectations are estimated.

Automating a task can be made easier through the use of expert demonstrations, an approach known as Inverse Reinforcement Learning; see e.g. \citep{ramachandran2007bayesian} for more nuanced details. Given the observed state, actions and rewards from an expert, we can exploit the mathematical formalism of Markov Decision Processes to relate this ``data'' to the optimal value function of the expert. In a Bayesian approach to this problem, one defines a prior on a function space that includes all admissible value functions. The data observed from the expert's behaviour can then be used through a suitably defined likelihood \citep{ramachandran2007bayesian} to infer the expert's value function: having the expert's value function at hand allows one to mimic their behaviour and hence defines an approach for automation. For discrete state spaces, \cite{singh2013bayesian} provide a method to quantify the uncertainty of the estimated value function. Here, we will generalise those ideas to continuous state spaces by using the priors introduced in the previous section.

\subsection{Setup\label{sec:setup}}
A \emph{Markov Decision Process} is defined by a controlled Markov chain  $\{X_t\}_{t\in\mathbb N}$ called the \emph{state process}, the \emph{control process} $\{A_t\}_{t\in\mathbb N}$, and an optimality criterion. The state process takes values in a bounded set $\mathcal X\subset\mathbb R^d$, for simplicity we will 
assume the $d$-dimensional hypercube $\mathcal X=[0,1]^d$. The control process is $\mathcal A$-valued, where $\mathcal A=\{1,\dots,M\}$ is a finite set. Given states $X_{1:t}=x_{1:t}$ and actions $A_{1:t}=a_{1:t}$ up to time $t$, the next state $X_{t+1}$ is
\begin{align}\label{eq:state_transition}
    X_{t+1}|(X_{1:t}=x_{1:t},A_{1:t}=a_{1:t})\sim p(\cdot|x_t,a_t),
\end{align}
where for any state-action pair $(x_t,a_t)$, $p(\cdot|x_t,a_t)$ is a probability density. In some applications, the state dynamics are deterministic, and thus there exists a map $\mathcal T$ such that $X_{t+1}=\mathcal T(x_t,a_t)$. The action process depends on a \emph{policy} $\mu:\mathcal X\rightarrow\mathcal A$ which is a deterministic mapping from the state space into the action space: $A_t|(X_{1:t}=x_{1:t},A_{1:t-1}=a_{1:t-1})\sim \delta_{\mu(x_t)}(\cdot)$. As there are many possible mappings $\mu:\mathcal X\mapsto \mathcal A$, we assume the agent executes a policy that is in some way optimal. To be more precise, let $r:\mathcal X\rightarrow \mathbb R$ be the \emph{reward function}, then the \emph{accumulated reward} given a policy $\mu$ and an initial state $X_1=x_1$ is
\begin{align*}
    C_\mu(x_1)=\mathbb E_\mu\left[\sum_{t=1}^\infty\beta^tr(X_t)|X_1=x_1\right],
\end{align*}
where $\beta\in(0,1)$ is a discount factor. The discount factor serves two purposes: it ensures that the expectation is well defined, and also that early actions are more important (in terms of the reward it adds to the total) than later ones, see \cite{kaelbling1996reinforcement} for a more detailed discussion. A policy $\mu^*$ is \emph{optimal} if $C_{\mu^*}(x_1)\geq C_\mu(x_1)$ for all $(\mu,x_1)$ and the optimal policy can be found through the solution of Bellman's fixed-point equation \citep{bellman1952theory}. The function $v:\mathcal X\rightarrow\mathbb R$, which is the fixed-point solution to
\begin{align*}
    v(x)=\max_{a\in\mathcal A}\left[r(x)+\beta\int_{\mathcal X}p(x'|x,a)v(x')dx'\right],
\end{align*}
is called the \emph{optimal value function} \citep{bertsekas1995dynamic} and the corresponding optimal policy is
\begin{align}
    \mu^*(x)=\argmax_{a\in\mathcal A}\left[\int_{\mathcal X}p(x'|x,a)v(x')dx'\right],
    \label{eq:opt_pol}
\end{align}
that is, the optimal action at any state is the one that maximises the expected value function at the next state.

\subsection{Likelihood definition}
The above decision making process gives optimal actions, but a human expert may occasionally pick non-optimal ones. To model imperfect action selections, noise is added to \eqref{eq:opt_pol}. At each time step the chosen action is a random variable given by
\begin{align}\label{random_action}
    A_t(x_t)=\argmax_{a\in\mathcal A}\left[\int_{\mathcal X}p(x'|x_t,a)v(x')dx'+\epsilon_t(a)\right],
\end{align}
where we assume $\epsilon_t\sim\mathcal N(0,\sigma^2 I_{M\times M})$ for some $\sigma>0$. The Gaussian choice simplifies numerical calculations, and it is reasonable to assume that the variances for different actions are independent and identically distributed, but this assumption can be relaxed. From now on, we will assume that the state dynamics are deterministic, in which case the action selections occur according to 
\begin{align}\label{random_action_2}
    A_t(x_t)=\argmax_{a\in\mathcal A}\left[v(\mathcal T(x_t,a))+\epsilon_t(a)\right].
\end{align}
Our goal from now on will be to recover the optimal value function, and quantify the uncertainty thereof, by using the Hilbert space MCMC methods and the priors discussed in Sections \ref{function_space_MCMC} and \ref{sec:NN_priors}.

The data consists of a collection of state-action pairs $y=\{y_t\}_{t=1}^T=\{(x_t,a_t)\}_{t=1}^T$ and the aim is to infer the value function (and thus the policy through \eqref{eq:opt_pol}) that leads to the actions $a_t$ for the current state $x_t$. Using the noisy action selection procedure \eqref{random_action}, the likelihood is
\begin{align}\label{likelihood}
    \mathcal L(y|v,\sigma)=\prod_{t=1}^Tp(a_t|x_t,v,\sigma)=\prod_{t=1}^Tp(a_t|v_t,\sigma),
\end{align}
where the second equality follows by defining the vector $v_t$ to contain the relevant evaluations of the value function to calculate the likelihood at $y_t$, i.e. using equation \eqref{random_action_2}, the $k$-th entry of $v_t$ is the evaluation of the value function $v(\cdot)$ at the location $\mathcal T(x_t,k)$, corresponding to starting at $x_t$ and taking action $a=k \in \mathcal{A}$.

For a single observation $y_t=(x_t,a_t)$, we now drop the subscript $t$ to simplify notation, and assume wlog that the optimal action is action $a=1$, permuting the labels if necessary. The probability $p(a=1|v,\sigma)$ (where $v$ is a vector and $p(a|v,\sigma)$ is a probability mass function) can be computed using \eqref{random_action} by 
\begin{align}\label{likelihood_uncomputable}
    p(a=1|v,\sigma)&=\int\mathbbm 1_{\{u\in\mathbb R^d: u_1\geq u_j\forall j\neq 1\}}\mathcal N(u;v,\sigma^2 I_{M\times M})du.
\end{align}
To compute this probability, we make use of the fact that the value of the integral is the same as the probability $\mathbb P({U_1>U_j,~\forall j\neq 1})$, where $U_k\sim\mathcal N(v(\mathcal{T}(x,k)),\sigma^2))$. This can be computed numerically using the pdf $\phi_1(\cdot)$ of $U_1$ and cdfs $\Phi_j(\cdot)$ of the remaining random variables $U_j$:
\begin{align}
    p(a=1|v,\sigma) &= \int_{-\infty}^\infty \phi_1(t)\Phi_2(t)\dots\Phi_M(t)dt= \frac{1}{\sigma}\int_{-\infty}^\infty \phi\left(\frac{t-v_1}{\sigma}\right)\prod_{j=2}^M \Phi\left(\frac{t-v_j}{\sigma}\right)dt.\label{likelihood_computable}
\end{align}
where $v_j$ is $v(\mathcal{T}(x,j))$. If the noise in \eqref{random_action} is not diagonal, this simplification cannot be made, and the integral \eqref{likelihood_uncomputable} is harder to compute. More advanced numerical methods exist to efficiently calculate such integrals using Monte-Carlo simulations \citep{genz1992numerical}.

\subsection{Likelihood gradient}

Following from \eqref{likelihood_computable} we can compute the gradient of the likelihood in a data point $(x_t,a_t)$  with respect to $v_t$. We again assume wlog that $a_t=1\in \mathcal{A}$ (by swapping the label of the first and the best action if necessary), and drop the subscript $t$, emphasising that $v_k$ is the $k$-th entry of the vector $v=(v(\mathcal{T}(x,1)),\ldots,v(\mathcal{T}(x,M)))$. The partial derivatives with respect to the $v_k$ are given by
\begin{align}
    \frac{\partial}{\partial v_1}p(a=1|v,\sigma) &=\frac{1}{\sigma} \int_{-\infty}^\infty \frac{t-v_1}{\sigma^2}\phi\left(\frac{t-v_1}{\sigma}\right)\prod_{j=2}^M \Phi\left(\frac{t-v_j}{\sigma}\right)dt\label{partial_deriv1} \\
    \frac{\partial}{\partial v_k}p(a=1|v,\sigma) &=-\frac{1}{\sigma^2} \int_{-\infty}^\infty \phi\left(\frac{t-v_1}{\sigma}\right)\phi\left(\frac{t-v_k}{\sigma}\right)\prod_{j=2, j\neq k}^M \Phi\left(\frac{t-v_j}{\sigma}\right)dt\qquad k=2\dots M\\
    &=-\frac{1}{\sigma^2}\phi\left(\frac{v_1-v_k}{\sqrt{2}\sigma}\right)\int_{-\infty}^\infty
    \phi\left(\frac{t-\frac{v_k+v_1}{2}}{\frac{\sigma}{\sqrt{2}}}\right)\prod_{j=2, j\neq k}^M \Phi\left(\frac{t-v_j}{\sigma}\right)dt,\label{partial_deriv2}
\end{align}
where the last identity follows from the product of two Gaussian pdfs. This allows us, when using the neural network prior, to compute the gradient of the log-likelihood with respect to the parameters of the neural network, $\theta$, using backpropagation. We emphasise that the vector $v=v(\theta)$ depends on these parameters, justifying the calculation of the Jacobian $\mathcal D_\theta v$. Using the chain rule, we get
\begin{align}
    &\nabla_\theta\log p(a=1|v,\sigma)=\frac{\nabla_\theta p(a=1|v,\sigma)}{p(a=1|v,\sigma)}=\frac{(\mathcal D_\theta v)^T\nabla_v p(a=1|v,\sigma)}{p(a=1|v,\sigma)}\label{single_gradient}.
\end{align}
To get the entire gradient of the log-likelihood, we simply need to sum over all data points:
\begin{align}
    \nabla_\theta \ell(y|v,\sigma)=\nabla_\theta \log\left(\prod_{t=1}^Tp(a_t|v_t,\sigma)\right)
    =\nabla_\theta\sum_{t=1}^T\log p(a_t|v_t,\sigma)
    =\sum_{t=1}^T\nabla_\theta\log p(a_t|v_t,\sigma),\label{full_likelihood_gradient}
\end{align}
where we only need to keep in mind the permutation in the actions when using \eqref{single_gradient}.

When calculating \eqref{single_gradient}, we note that $1\cdot\nabla_v p(a|v,\sigma)=0$ by translation invariance of $v$: $\mathcal L(y|v,\sigma)=\mathcal L(y|v+c,\sigma)$ for any constant function $c$, i.e. $c(x)=c(x')$ for all $x,x'\in\mathcal X$. The integrals involved in the gradient are in practice calculated numerically, and the arising errors may accumulate and cause numerical instabilities. To avoid these, one can ensure that the mean of these gradients is $0$ by using the following modification, which we observed to enhance the performance in practice:
\begin{align}
    \eqref{single_gradient}=\frac{\sum_{k=1}^M((\mathcal D_\theta v)^T)_k(\frac{\partial}{\partial v_k} p(a=1|v,\sigma)-\sum_{k=1}^M\frac{\partial}{\partial v_k} p(a=1|v,\sigma))}{p(a=1|v,\sigma)}.\label{stable_gradient}
\end{align}

The following theorem justifies the use of this likelihood in the function space MCMC setting, see \cite[Chapter 12]{wainwright2019high} for a definition of reproducing kernel Hilbert spaces (RKHS):
\begin{thm}\label{thm:bound_likelihood_derivative}
The log-likelihood $\ell(y|v,\sigma)=\log \mathcal{L}(y|v,\sigma)$ defined in \eqref{likelihood} satisfies Assumptions \ref{assumption_likelihood1} and \ref{assumption_likelihood2}, if $v\in\mathcal H=L^2_{\mathcal K}$, where $L^2_{\mathcal K}$ is any RKHS defined on $L^2$.
\end{thm}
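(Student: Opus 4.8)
The plan is to exploit that the likelihood \eqref{likelihood} depends on $v$ only through the finitely many point evaluations $v_{t,k} = v(\mathcal T(x_t,k))$, $t=1,\dots,T$, $k=1,\dots,M$, and that in any RKHS these evaluations are bounded linear functionals of $v$. By the reproducing property, $v(x)=\langle v,\mathcal K(x,\cdot)\rangle_{\mathcal H}$, so $|v(x)|\leq\sqrt{\mathcal K(x,x)}\,\|v\|_{\mathcal H}$; since only the $TM$ points $\mathcal T(x_t,k)$ enter, I set $\kappa:=\max_{t,k}\sqrt{\mathcal K(\mathcal T(x_t,k),\mathcal T(x_t,k))}<\infty$, giving $|v_{t,k}|\leq\kappa\|v\|_{\mathcal H}$. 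This reduces both assumptions to statements about the smooth map $w\mapsto p(a\mid w,\sigma)$ on $\mathbb R^M$ precomposed with the bounded linear map $v\mapsto v_t:=(v(\mathcal T(x_t,1)),\dots,v(\mathcal T(x_t,M)))$ from $\mathcal H$ to $\mathbb R^M$.

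For Assumption \ref{assumption_likelihood1}, each factor $p(a_t\mid v_t,\sigma)$ is a probability in $(0,1]$, so $-\ell(y\mid v,\sigma)=-\sum_t\log p(a_t\mid v_t,\sigma)\geq 0$, which is the left inequality. For the right inequality I would lower bound $p(a=1\mid v,\sigma)$ by a quantity decaying no faster than a Gaussian tail. Setting $\Delta:=\max_{j\neq 1}v_j-v_1$ and restricting the integral in \eqref{likelihood_computable} to $t\geq\max_{j\neq 1}v_j+\sigma$, where each factor satisfies $\Phi((t-v_j)/\sigma)\geq\Phi(1)$, yields
\[
p(a=1\mid v,\sigma)\;\geq\;\Phi(1)^{M-1}\,\Phi\!\left(-\tfrac{\Delta}{\sigma}-1\right).
\]
The standard Gaussian tail lower bound then gives $-\log p(a=1\mid v,\sigma)\lesssim \Delta^2/\sigma^2+\text{const}$. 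As $\Delta\leq 2\max_k|v_k|\leq 2\kappa\|v\|_{\mathcal H}$, summing over the $T$ data points produces $-\ell(y\mid v,\sigma)\leq K(1+\|v\|_{\mathcal H}^2)$, so Assumption \ref{assumption_likelihood1} holds with exponent $p=2$.

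For Assumption \ref{assumption_likelihood2} I would argue local Lipschitz continuity through the gradient. The single-point probability $p(a\mid\cdot,\sigma):\mathbb R^M\to(0,1]$ is smooth, and the partial derivatives in \eqref{partial_deriv1}--\eqref{partial_deriv2} are bounded by absolute constants depending only on $\sigma$ and $M$, being integrals of products of bounded Gaussian densities and cumulative distribution functions. On the ball $\{\|v\|_{\mathcal H}<r\}$ the coordinates obey $|v_k|\leq\kappa r$, so the lower bound from the previous step gives $p(a\mid v_t,\sigma)\geq p_{\min}(r)>0$; hence $\nabla\log p=\nabla p/p$ is bounded there and $v_t\mapsto\log p(a_t\mid v_t,\sigma)$ is Lipschitz on $\{|v_{t,k}|\leq\kappa r\}$ with a constant $L(r)$. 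Composing with the bounded linear map $v\mapsto v_t$ (contributing the factor $\kappa$) and summing the $T$ contributions delivers $|\ell(y\mid u,\sigma)-\ell(y\mid v,\sigma)|\leq K(r)\|u-v\|_{\mathcal H}$ whenever $\max(\|u\|_{\mathcal H},\|v\|_{\mathcal H})<r$.

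I expect the main obstacle to be the lower bound on $p(a=1\mid v,\sigma)$ in Assumption \ref{assumption_likelihood1}: one must quantify how fast the probability of selecting action $1$ decays when $v_1$ is far below the competing values and verify this decay is no faster than a Gaussian tail, so that $-\log p$ grows only quadratically in $\|v\|_{\mathcal H}$. The remaining pieces --- boundedness of RKHS point evaluations and the smoothness with uniformly bounded derivatives of the Gaussian-orthant integral --- are routine consequences of the explicit formulas \eqref{likelihood_computable} and \eqref{partial_deriv1}--\eqref{partial_deriv2}.
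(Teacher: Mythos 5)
Your proposal is correct and follows essentially the same route as the paper's proof: reduce to the finitely many point evaluations via the RKHS reproducing property, lower-bound the orthant integral by a Gaussian tail to obtain the quadratic growth of $-\ell$ (so $p=2$ in Assumption \ref{assumption_likelihood1}), and invoke smoothness of $w\mapsto\log p(a\mid w,\sigma)$ on the compact set $\{|v_k|\le\kappa r\}$ for the local Lipschitz bound. The only (immaterial) differences are that the paper restricts the integral to $t\ge\bar v=\max_j|v_j|$ and bounds each $\Phi$ below by $1/2$ rather than using the gap $\Delta$ and $\Phi(1)$, and that it cites the mean value theorem directly instead of spelling out the boundedness of $\nabla p/p$.
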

The proof can be found in Appendix \ref{proof:thm:bound_likelihood_derivative}. We also note that when using the trace-class neural network prior from Section \ref{sec:NN_priors}, the statements remain true if the likelihood is seen as a function of the parameters $\theta$ of the neural network: 
\begin{lem}\label{lemma:likelihood}
The log-likelihood $\ell(y|v_\theta,\sigma)$ defined in \eqref{likelihood} satisfies Assumptions \ref{assumption_likelihood1} and \ref{assumption_likelihood2}, where now inference is over the weights and biases, $\theta\in\mathcal H=\ell^2$.
\end{lem}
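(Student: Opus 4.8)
The plan is to exploit that the log-likelihood depends on $\theta$ only through finitely many point evaluations of the network output, and to combine this with deterministic (pathwise) bounds that the parameter-to-function map enjoys on $\ell^2$. Concretely, writing $\mathbf v(\theta)=\big(v_\theta(\mathcal T(x_t,k))\big)_{t=1,k=1}^{T,M}\in\mathbb R^{TM}$, the likelihood \eqref{likelihood} factorises as $\ell(y|v_\theta,\sigma)=g(\mathbf v(\theta))$ for a fixed map $g:\mathbb R^{TM}\to(-\infty,0]$ determined by \eqref{likelihood_computable}. The proof of Theorem \ref{thm:bound_likelihood_derivative} already establishes the two properties of $g$ that are needed: (i) polynomial growth $0\le -g(\mathbf z)\le c(1+\lVert\mathbf z\rVert^q)$, coming from the Gaussian-tail lower bound on each $p(a_t|v_t,\sigma)$; and (ii) local Lipschitz continuity of $g$ on bounded subsets of $\mathbb R^{TM}$, coming from boundedness of the integrands in \eqref{partial_deriv1}--\eqref{partial_deriv2}. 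It therefore remains to transfer these finite-dimensional statements through the map $\theta\mapsto\mathbf v(\theta)$, so the real work is entirely in controlling the network.

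The first technical step is to show that $\theta\mapsto v_\theta(x)$ is well defined for every $\theta\in\ell^2$ and every $x\in[0,1]^d$ (not merely $\mu_0$-almost surely), with a deterministic polynomial growth bound uniform in $x$. I would argue layer by layer, writing $f^{(l)}(x)=\{f^{(l)}_i(x)\}_i$ for the sequence of node values (interpreted as the infinite-width limits $F^{(l)}_i$ of \eqref{eq:infNN_functions} when relevant). The first layer satisfies $\lvert f^{(1)}_i(x)\rvert\le\lvert b^{(1)}_i\rvert+\sqrt d\,\lVert w^{(1)}_{i,\cdot}\rVert_2$ because $\lVert x\rVert\le\sqrt d$ on $[0,1]^d$, so summing over $i$ gives $\lVert f^{(1)}(x)\rVert_2\le C_1\lVert\theta\rVert_{\ell^2}$. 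For $l\ge2$, Assumption \ref{assumption_bounded_activation} gives $\lvert\zeta(z)\rvert\le\lvert z\rvert$, so by Cauchy--Schwarz $\lvert f^{(l)}_i(x)\rvert\le\lvert b^{(l)}_i\rvert+\lVert w^{(l)}_{i,\cdot}\rVert_2\,\lVert f^{(l-1)}(x)\rVert_2$; the same inequality shows that the infinite sum defining each node in \eqref{eq:infNN_functions} converges absolutely. Squaring and summing over $i$ yields $\lVert f^{(l)}(x)\rVert_2\le C_l\big(\lVert\theta\rVert_{\ell^2}+\lVert\theta\rVert_{\ell^2}\,\lVert f^{(l-1)}(x)\rVert_2\big)$, and an induction on $l$ delivers $\lvert v_\theta(x)\rvert\le P(\lVert\theta\rVert_{\ell^2})$ for a polynomial $P$ of degree $n+1$, uniformly in $x$. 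The crucial point is that one cannot bound these sums term by term — that would require $\ell^1$ control of the weights — and it is Cauchy--Schwarz applied to the $\ell^2$ norms of both the weight rows and the previous layer's node values that keeps the infinite-width sums finite.

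The second technical step is a local Lipschitz estimate for $\theta\mapsto v_\theta(x)$. Introducing $\Delta^{(l)}(x)=\lVert f^{(l)}_\theta(x)-f^{(l)}_\vartheta(x)\rVert_2$, I would bound $\Delta^{(l)}$ recursively by splitting the node difference into a part involving the difference of the layer-$l$ weights and biases (controlled by $\lVert\theta-\vartheta\rVert_{\ell^2}$ and the previous-layer outputs bounded in Step 1) and a part involving $\zeta(f^{(l-1)}_\theta)-\zeta(f^{(l-1)}_\vartheta)$ (controlled by $\Delta^{(l-1)}$ via the $1$-Lipschitz property of $\zeta$). For $\lVert\theta\rVert_{\ell^2},\lVert\vartheta\rVert_{\ell^2}\le r$ this gives $\Delta^{(l)}(x)\le a_l(r)\lVert\theta-\vartheta\rVert_{\ell^2}+b_l(r)\Delta^{(l-1)}(x)$ with coefficients polynomial in $r$, so unrolling the recursion produces $\lvert v_\theta(x)-v_\vartheta(x)\rvert\le L(r)\lVert\theta-\vartheta\rVert_{\ell^2}$ uniformly in $x$.

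Finally I would compose the pieces. Assumption \ref{assumption_likelihood1} follows from $0\le-\ell(y|v_\theta,\sigma)=-g(\mathbf v(\theta))\le c(1+\lVert\mathbf v(\theta)\rVert^q)\le c\big(1+P(\lVert\theta\rVert_{\ell^2})^q\big)\le K(1+\lVert\theta\rVert_{\ell^2}^p)$ with $p=q(n+1)$, since $\mathbf v(\theta)$ stacks finitely many evaluations each bounded by $P(\lVert\theta\rVert_{\ell^2})$. Assumption \ref{assumption_likelihood2} follows because, on the ball of radius $r$, Step 2 gives $\lVert\mathbf v(\theta)-\mathbf v(\vartheta)\rVert\le\sqrt{TM}\,L(r)\lVert\theta-\vartheta\rVert_{\ell^2}$ while $\mathbf v$ maps this ball into a bounded subset of $\mathbb R^{TM}$ (Step 1) on which $g$ is Lipschitz, so composing the two Lipschitz maps yields the claim. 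The main obstacle is Steps 1--2: obtaining deterministic, uniform-in-$x$ growth and Lipschitz bounds for the possibly infinite-width network at \emph{every} $\theta\in\ell^2$, where the subtlety is that convergence and control of the infinite layer sums must be extracted from Cauchy--Schwarz on the $\ell^2$ norms rather than from naive termwise estimates.
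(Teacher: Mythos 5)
Your proposal is correct and follows essentially the same route as the paper's proof: a layer-by-layer Cauchy--Schwarz argument giving a deterministic polynomial bound $\lvert v_\theta(x)\rvert^2\lesssim(1+\lVert\theta\rVert_{\ell^2}^{2(n+1)})$, a recursive local Lipschitz estimate for $\theta\mapsto v_\theta(x)$ on balls of radius $r$, and then composition with the finite-dimensional growth and mean-value-theorem bounds already established in the proof of Theorem \ref{thm:bound_likelihood_derivative}. The exponent $p=2(n+1)$ and the structure of the recursion (weight-difference term plus previous-layer-difference term, each controlled by $\lVert\theta\rVert_{\ell^2}$-norms via Cauchy--Schwarz rather than termwise) match the paper exactly.
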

\begin{proof}
The proof can be found in Appendix \ref{proof:lemma:likelihood}.
\end{proof}

We now prove under which conditions on the likelihood one may use the preconditioned Crank-Nicolson Langevin algorithm when using the trace-class neural network prior, which in particular requires the gradient-informed proposals to be in the Cameron-Martin space of the prior. We will then remark on how it applies to the noisy action selection likelihood \eqref{random_action_2}. For Theorem \ref{thm:endomorphism_likelihood_derivative} assume the log-likelihood $\ell(y|v,\sigma)$ of the mapping $x\rightarrow v(x)\in\mathbb{R}$ is of the form 
\begin{align}\label{eq:loglikecontrol}
\ell(y|v,\sigma)=\sum_{t=1}^{T}\ell(a_{t},v(x_{t}^{1}),\ldots,v(x_{t}^{M}))
\end{align}
for some function $\ell:\mathcal{A}\times\mathbb{R}^{M}\rightarrow\mathbb{R}$, where a data point $y_{t}=(a_{t},x_{t}^{1},\ldots,x_{t}^{M})$ is comprised of $a_{t}\in\mathcal{A}$ and $M$ points in the domain of $v$, i.e. $x_{t}^{i}\in\mathcal{X}$. Note that such a likelihood clearly encompases \eqref{likelihood}. In the theorem below, we further assume uniformly bounded partial derivatives of the log-likelihood w.r.t. $v(x_t^i)$ for any $t$ and $i$. Even with this assumption, to verify the assertion of Theorem \ref{thm:endomorphism_likelihood_derivative}, we need to establish the behaviour of moments of $\partial v(x)/\partial W_{i,j}^{(l)}$ and $\partial v(x)/\partial B_{i}^{(i)}$ for all $x$, weights and biases; details can be found in its proof.

\begin{thm}\label{thm:endomorphism_likelihood_derivative} Consider the infinite-width neural network with the Gaussian prior given in \eqref{prior_def} with $\alpha>1$ for its weights and biases, abbreviated to $N(0,\mathcal{C})$, and the log-likelihood \ref{eq:loglikecontrol}. Assume $\sup_{v_{1:M}\in\mathbb{R}^{M}}\left|\partial\ell(a,v_{1},\ldots,v_{M})/\partial v_{i}\right|<\infty$ for all $i$ and $a\in\mathcal{A}$. Then Condition \eqref{assumption_pCNL} holds for the pCNL implementation \eqref{pCNL} for sampling from the posterior, that is $N(\mathcal{CD}\ell(u),\mathcal{C})\simeq N(0,\mathcal{C})$ for $u\sim N(0,\mathcal{C})$ almost surely.
\end{thm}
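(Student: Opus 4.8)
The plan is to reduce the assertion to a single summability statement and then establish it through the backpropagation formulas for the network. By the Cameron--Martin theorem, $N(\mathcal{C}\mathcal D\ell(u),\mathcal C)\simeq N(0,\mathcal C)$ precisely when the mean shift lies in the Cameron--Martin space $\mathrm{Im}(\mathcal C^{1/2})$; since $\mathcal C$ is the diagonal covariance of \eqref{prior_def} with strictly positive eigenvalues $\lambda_m^2$ (the prior variances, indexed by a flattening $m$ of all weights and biases), membership in $\mathrm{Im}(\mathcal C^{1/2})$ is equivalent to $\mathcal C^{1/2}\mathcal D\ell(u)\in\ell^2$, i.e.\ to $\sum_m\lambda_m^2\,(\partial\ell/\partial\theta_m)^2<\infty$. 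So it suffices to show this series is finite $\mu_0$-almost surely. Using the form \eqref{eq:loglikecontrol} and the chain rule, $\partial\ell/\partial\theta_m=\sum_{t,i}(\partial\ell/\partial v(x_t^i))\,(\partial v(x_t^i)/\partial\theta_m)$; the uniform bound on $\partial\ell/\partial v_i$ together with Cauchy--Schwarz over the finitely many $TM$ summands reduces the problem to proving, for each fixed input $x\in[0,1]^d$, that $\sum_m\lambda_m^2\,(\partial v(x)/\partial\theta_m)^2<\infty$ almost surely. As this is a nonnegative quantity, it is enough to bound its expectation, and by Tonelli I may interchange $\mathbb E$ with the sum and bound $\sum_m\lambda_m^2\,\mathbb E[(\partial v(x)/\partial\theta_m)^2]$.

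I would then compute the derivatives by backpropagation. Writing $g_j^{(l)}=\zeta(f_j^{(l)})$ and $\delta^{(l)}=\partial v/\partial f^{(l)}$, one has $\partial v/\partial W_{i,j}^{(l)}=\delta_i^{(l)}g_j^{(l-1)}$ and $\partial v/\partial B_i^{(l)}=\delta_i^{(l)}$, with $\delta^{(l)}=D^{(l)}(W^{(l+1)})^{T}\delta^{(l+1)}$, $D^{(l)}=\mathrm{diag}(\zeta'(f_j^{(l)}))$, and seed $\delta^{(n+1)}=1$. Two ingredients control the series. For the forward activations, Assumption \ref{assumption_bounded_activation} gives $|g_j^{(l)}|\le|f_j^{(l)}|$, so Property \ref{assumption_finite_variance} yields $\mathbb E[(g_j^{(l)}(x))^2]\le\sigma_l^2/j^\alpha$. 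For the sensitivities, since $|\zeta'|\le1$ each $D^{(l)}$ has operator norm at most $1$, giving the almost sure bound $\lVert\delta^{(l)}\rVert\le\prod_{m=l+1}^{n+1}\lVert W^{(m)}\rVert_{\mathrm{op}}=:R_l$, hence $\sum_i i^{-\alpha}(\delta_i^{(l)})^2\le\lVert\delta^{(l)}\rVert^2\le R_l^2$. The decisive point is that $R_l$ depends only on the layers above $l$, whereas $g_j^{(l-1)}$ depends only on the layers below $l$, so the two are independent.

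Grouping the series by parameter type and using $i^{-\alpha}\le1$, the bias and first-layer-weight contributions are bounded by constant multiples of $\mathbb E[\lVert\delta^{(l)}\rVert^2]\le\mathbb E[R_l^2]$, while the layer-$l$ weight contribution for $l\ge2$ is bounded by $\sigma_{w^{(l)}}^2\sum_j j^{-\alpha}\mathbb E[(g_j^{(l-1)})^2 R_l^2]=\sigma_{w^{(l)}}^2\mathbb E[R_l^2]\sum_j j^{-\alpha}\mathbb E[(g_j^{(l-1)})^2]\le\sigma_{w^{(l)}}^2\sigma_{l-1}^2\,\mathbb E[R_l^2]\sum_j j^{-2\alpha}$, the factorisation of the expectation being exactly the independence noted above. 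Finally $\mathbb E[R_l^2]=\prod_{m>l}\mathbb E[\lVert W^{(m)}\rVert_{\mathrm{op}}^2]\le\prod_{m>l}\mathbb E[\lVert W^{(m)}\rVert_{\mathrm{HS}}^2]=\prod_{m>l}\sigma_{w^{(m)}}^2\big(\sum_i i^{-\alpha}\big)^2<\infty$ precisely because $\alpha>1$, and likewise $\sum_j j^{-2\alpha}<\infty$. Every group is therefore finite, so the full series has finite expectation, is finite almost surely at each of the finitely many data points $x_t^i$, and hence $\mathcal C\mathcal D\ell(u)\in\mathrm{Im}(\mathcal C^{1/2})$ $\mu_0$-a.s.

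I expect the main obstacle to be the genuine statistical coupling of the forward and backward passes: the sensitivity $\delta_i^{(l)}$ depends, through the activation derivatives $\zeta'(f^{(l)}),\zeta'(f^{(l+1)}),\dots$, on the very weights $W^{(l)},W^{(l+1)},\dots$ over which one wishes to exploit independence, so a naive ``multiply the variances'' computation is not valid. The device that removes it is to discard every $D^{(l)}$ using $\lVert D^{(l)}\rVert_{\mathrm{op}}\le1$, which erases all dependence of the gradient magnitude on layer $l$ and below and collapses the backward pass to the product $R_l$ of operator norms of the higher weight matrices --- a quantity that is both independent of $g_j^{(l-1)}$ and square-integrable thanks to the trace-class scaling $\alpha>1$. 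A secondary, purely technical point I would address is that these backpropagation identities and moment bounds, derived at finite width, pass to the almost-sure infinite-width limit \eqref{eq:infNN_functions}; this follows from the same dominated/monotone convergence estimates that underlie Theorem \ref{thm:prior}. The equivalence of measures then follows from the Cameron--Martin theorem, which is the claimed Condition \eqref{assumption_pCNL}.
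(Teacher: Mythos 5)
Your proof is correct, and its overall skeleton matches the paper's: reduce via Feldman--Hajek/Cameron--Martin to the almost-sure finiteness of $\sum_m\lambda_m^2(\partial\ell/\partial\theta_m)^2$, strip the likelihood off with the uniform bound on $\partial\ell/\partial v_i$, discard the activation derivatives through Assumption \ref{assumption_bounded_activation}, and exploit independence across layers to bound the expectation of the resulting nonnegative series. Where you genuinely diverge is in the central estimate on the back-propagated sensitivities. The paper introduces the componentwise recursive majorant $D_k^{(l)}=\sum_j D_j^{(l+1)}\lvert W_{j,k}^{(l+1)}\rvert$ and, via a Cauchy--Schwarz analysis of the cross terms, establishes the decay $\mathbb E[(D_k^{(l)})^2]\leq C_l k^{-\alpha}$, which it then multiplies against the forward-pass moments term by term; it closes the argument with the submartingale convergence theorem applied to the truncated sums $S_s$ of \eqref{eq:var_Ss}. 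You instead bound the entire sensitivity vector at once, $\lVert\delta^{(l)}\rVert\leq\prod_{m>l}\lVert W^{(m)}\rVert_{\mathrm{op}}=:R_l$, pay for the lost $k^{-\alpha}$ decay with the crude inequality $i^{-\alpha}\leq 1$, control $\mathbb E[R_l^2]$ by Hilbert--Schmidt norms (finite precisely because $\alpha>1$), and close with Tonelli. Your route is shorter and avoids both the cross-term bookkeeping and the martingale argument; its cost is that it leans on $\alpha>1$ already at the stage of $\mathbb E\lVert W^{(m)}\rVert_{\mathrm{HS}}^2<\infty$ and discards the per-node decay of the gradients, information that the paper's finer bound retains (and which would be needed if one wanted to weaken the summability of the variance sequence or reuse the moment bounds elsewhere). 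Since the theorem assumes $\alpha>1$, nothing is lost here, and your identification of the forward/backward decoupling --- that erasing the diagonal matrices $\mathrm{diag}(\zeta'(f^{(l)}))$ renders $R_l$ independent of $g^{(l-1)}$ --- is exactly the same independence the paper invokes for $D_k^{(l)}$. The one point you flag only in passing, the passage of the backpropagation identities to the infinite-width limit, is treated at a comparable level of informality in the paper itself, so I would not count it against you.
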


The proof can be found in Appendix \ref{proof:thm:endomorphism_likelihood_derivative}. The proposed stochastic control likelihood given in \eqref{random_action_2} does not satisfy the assumption of the theorem since the partial derivatives are unbounded. To circumvent this, we apply a saturation function $s$ to $v(x_t^i)$, and employ \eqref{random_action_2} with $s(v(\mathcal T(x_t,a)))$ instead of $v(\mathcal T(x_t,a))$. Lastly, we note that a similar result to Theorem \ref{thm:endomorphism_likelihood_derivative} can be shown for the Hilbert space $L^2$.

\section{Numerical Illustrations\label{numerics}}
This section aims to validate the theory, and highlight the applicability of the proposed priors and methodology. In particular, Section \ref{sec:dim_independence} confirms that, empirically, as the layer width for the trace-class neural network prior grows, the acceptance probability does not go to $0$, a property known as `stability under mesh-refinement' or `dimension-independence'. Section \ref{sec:prior_comparison} compares the proposed trace-class neural network (tcNN) prior to a standard BNN prior and a KL prior, it highlights that, unlike the KL prior, the tcNN is scalable to higher-dimensional domains; and Section \ref{sec:policy_learning} shows that the posteriors can learn and mimic policies, thus justifying the use of these priors in the reinforcement learning setup. The code is available at \url{https://github.com/TorbenSell/trace-class-neural-networks}.

Throughout we use the Fourier basis \eqref{eq:fourier_basis} as the series expansion of choice when using the KL based prior, as this proved to be a good choice for reinforcement learning problems \citep{konidaris2011value}. As a tuning parameter for the corresponding eigenvalues we set $\alpha=2$ in \eqref{eq:fourier_eigenvalues}, forcing the samples to be very smooth which we found to be a sensible choice in the discussed control problems. For the tcNN prior we used fully connected layers with $\tanh$ activation functions, and set $\sigma_{b^{(l)}}^2=\sigma_{w^{(l)}}^2=2$ and $\alpha=1.5$ in all the experiments, this again results in smooth sample functions. For the standard BNN we used the same architecture and set $\alpha=0$ to get a constant variance sequence, in Section \ref{sec:dim_independence} we set $\sigma_{b^{(l)}}^2=\sigma_{w^{(l)}}^2=10/(3N^{(l-1)})$ to highlight the dependence on the layer-width, in Sections \ref{sec:prior_comparison} and \ref{sec:policy_learning} we set $\sigma_{b^{(l)}}^2=\sigma_{w^{(l)}}^2=1/3$.

\subsection{Control Problems: Setup\label{sec:exp_setup}}
We set the scene by briefly describing the setup of the control problems which we use in the experiments, a detailed description can be found in the Supplementary Material.

The first example is the popular mountain car problem. A car is to drive up a mountain slope to reach a flag, but needs to gain momentum first by driving up the opposite mountain slope, thus initially driving away from the goal; see the left panel of Figure \ref{fig:setups} for an illustration. The state space is the two-dimensional domain $\mathcal X=[-1.2,0.6]\times[-0.07,0.07]$ describing the vehicle's position and velocity, and the action space contains three possible actions: $\mathcal A=\{-1,0,1\}$, representing exerting a constant force to the left, no force, and exerting the same constant force to the right, respectively. The likelihood \eqref{likelihood_computable} arises from $T=50$ observations of state-action pairs, the data generating process is described in the Supplementary Material. The noise level in the likelihood is set to $\sigma=0.1$.

The second example is the HalfCheetah example from the MuJoCo library \citep{todorov2012mujoco}, where an agent controls a two-dimensional cheetah with the aim to `run' as fast as possible. For this example, the state space $\mathcal X$ is $17$-dimensional and the action space contains $8$ possible actions. The likelihood \eqref{likelihood_computable} arises from $T=100$ observations, we again refer to the Supplementary Material for the data generating process, and set the noise level in the likelihood to $\sigma=0.1$. The right panel of Figure \ref{fig:setups} shows the HalfCheetah.

\begin{figure}[ht!]
\centering
\begin{subfigure}{.45\textwidth}
\centering
  \includegraphics[width=.8\linewidth]{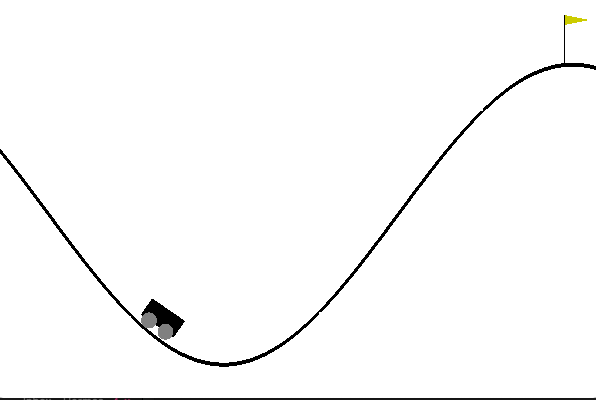}
\end{subfigure}
\quad
\begin{subfigure}{.45\textwidth}
\centering
  \includegraphics[width=.8\linewidth]{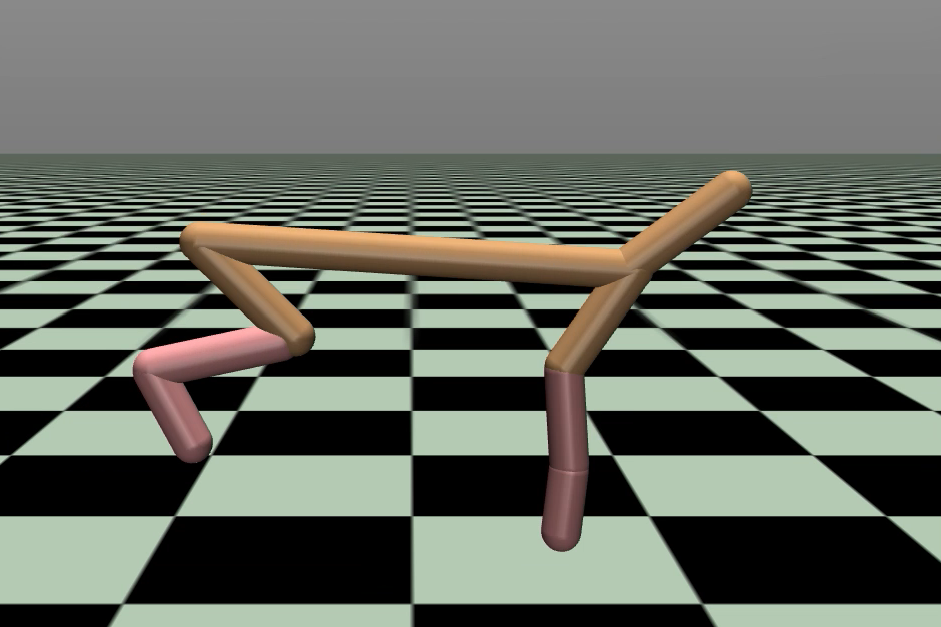}
\end{subfigure}
	\caption{\emph{Left}: The setup for the mountaincar example. The car's goal is to reach the flag in as few steps as possible. The slope on the right is too steep to simply drive up the mountain, the car therefore has to gain momentum by going up the hill on the left first. \emph{Right}: The HalfCheetah has states $x_t$ in $\mathbb{R}^{17}$. Its goal is to run to the right as quickly as possible, while not moving its body parts more than necessary.}
	\label{fig:setups}
\end{figure}

\subsection{Dimension independence of trace-class neural network prior under mesh-refinement\label{sec:dim_independence}}
We ran pCN for different network widths on the mountain car example. The network used has $l=3$ hidden layers. As stated before, the tuning parameters in the prior are set to $\sigma_{b^{(l)}}^2=\sigma_{w^{(l)}}^2=2$, and $\alpha=1.5$. Table \ref{table:NN_dim_independence} displays the acceptance probability of pCN for a fixed step size when targeting the posteriors arising from the mountain car likelihood with a trace-class neural network prior and also a standard Bayesian neural network prior. The latter is characterised by setting $\alpha=0$ in \eqref{prior_def}, resulting in a constant sequence of variances per layer. The other tuning parameters for the standard Bayesian neural network were set to $\sigma_{b^{(l)}}^2=\sigma_{w^{(l)}}^2=10/(3N^{(l)})$. The step sizes chosen were $\beta=1/10$ for the tcNN, $\beta=1/7$ for the standard BNN.

\begin{table}[ht!]
\begin{center}
    \begin{tabular}{| l | c | c | c | c | c | c | c | c | c | c |}
    \hline
    $N^{(l)}$, for all $l$ & 10 & 20 & 30 & 40 & 50 & 60 & 70 & 80 & 90 & 100 \\ \hline
    \hiderowcolors
    Acc. ratio (tcNN) & 22.8 & 24.0 & 23.5 & 22.1 & 22.2 & 23.1 & 23.9 & 23.4 & 23.0 & 23.9 \\ \hline
     Acc. ratio (BNN) & 21.2 & 15.0 & 10.9 & 8.52 & 6.81 & 5.47 & 4.25 & 3.91 & 2.97 & 2.23 \\ \hline
   Total \# of param. & $261$ & $921$ & $1981$ & $3441$ & $5301$ & $7561$ & $10221$ & $13281$ & $16741$ & $20601$ \\
    \hline
    \end{tabular}
\end{center}
\caption{Acceptance ratios in \% for both the trace-class neural network (tcNN) and standard Bayesian neural network (BNN) and total number of parameters (weights and biases) for different layer widths. $3$ fully connected layers were used, and pCN was run over $3$ hours for each choice of $N^{(l)}$. Notably the acceptance probability for the trace-class neural network proposed in this paper does not degenerate as more nodes are included per layer. Note that in the limit, only the tcNN is well-defined on the parameter space.}
\label{table:NN_dim_independence}
\end{table}

\subsection{Comparison of priors\label{sec:prior_comparison}}
To compare the trace-class neural network prior to the Karhunen-Lo\'eve prior, we used a large number of parameters for each, such that the error from truncating after finitely many nodes, or finitely many terms, is negligible. 
For both the mountaincar and the HalfCheetah example, we used the same trace-class neural network prior, with $3$ hidden layers, and $100$ nodes per layer, resulting in $20,601$ parameters to be estimated for the mountaincar example, and $22,101$ for the HalfCheetah example. For the Karhunen-Lo\'eve prior in the mountaincar example we set the truncation parameter to $k_{\mathrm{max}}=(70,70)$ for \eqref{eq:fourier_basis} with eigenvalues \eqref{eq:fourier_eigenvalues} (recall that here $\alpha=2$), resulting in a total of $19,880$ coefficients to be estimated. For the KL prior in the HalfCheetah example we used approximation \eqref{vf_approx}, and otherwise the same eigenfunctions and eigenvalues; due to the higher domain dimension $d=17$, one would have to estimate $2,667,980$ parameters. As this is too memory expensive for the computer used for the experiments, we used $k_{\mathrm{max}}=(10,10)$ in the HalfCheetah example, resulting in $54,740$ parameters to be estimated. Note that this increase in parameters to be estimated is despite the approximation \eqref{vf_approx} being used, and additionally truncating the expansions after fewer terms, highlighting the benefits of the dimension-robustness of the trace-class neural network prior.

To assess the quality of the priors, we ran pCN using $50$ (for the mountaincar) and $100$ (for the HalfCheetah) data points. For the mountain car example, we fixed five test points $z_j$, $j=1,\dots,5$ independent of the training data, and compared the posteriors by evaluating $v(z_j)$ at these new locations as estimated through MCMC runs. The top row in Figure \ref{fig:prior_compare} shows the resulting uncertainty estimates. As the value function is invariant under translations, we adjusted all samples such that they take the value $0$ at the state which the optimal action $a_{opt}$ takes one to:
\begin{align}
    v^{\text{centered}}_{1:M}=v_{1:M}-v_{a_{opt}}\cdot 1,\label{eq:normalise_v}
\end{align}
where $1$ denotes a vector of ones.
For the HalfCheetah example, we looked at one test point for illustration, see the bottom row in Figure \ref{fig:prior_compare}, and summarised the performance on another $100$ test points (independent of the training data) in the Table \ref{table:prior_compare}, where we compared how the respective samples from the posterior do, as well as how the mean of all samples from the posterior in Section \ref{sec:policy_learning} (with a smaller number of nodes for the tcNN prior, and fewer active terms in the KL prior\footnote{To calculate the mean function it is necessary to store the samples which (due to ther used computer's limited memory capacity) would not be feasible for the very wide layer prior, nor all the terms in the KL prior.}) does on predicting the correct action (last two columns). Not surprisingly, the mean function is better at picking the correct action. Details on the data generating mechanism can be found in the Supplementary Material.

{\footnotesize
\begin{table}
\begin{center}
    \begin{tabular}{| l || c | c | c || c | c | c |}
    \hline
    Decision by & KL samples & BNN samples & tcNN samples & KL mean & BNN mean & tcNN mean \\ \hline\hline
    \hiderowcolors
    Optimal & 20.1\% & 18.1\% & 32.1\% & 25\% & 20\% & 42\% \\ \hline
    Non-optimal & 79.9\% & 81.9\% & 67.9\% & 75\% & 80\% & 58\% \\ 
    \hline
    \end{tabular}
\caption{Actions picked using Equation \ref{random_action} with $v$ a posterior sample or the estimated posterior mean. The trace-class neural network prior outperforms the approximate KL and the standard BNN prior. The optimal action is computed using the same policy used to simulate data, see Section \ref{sec:exp_setup}, the test points chosen at random from a representative episode of a HalfCheetah run. A random prediction would result in a success rate of $12.5\%$.}
\label{table:prior_compare}
\end{center}
\end{table}
}

\begin{figure}[ht!]
\centering
\begin{subfigure}{.3\textwidth}
  \centering
  \includegraphics[width=\linewidth]{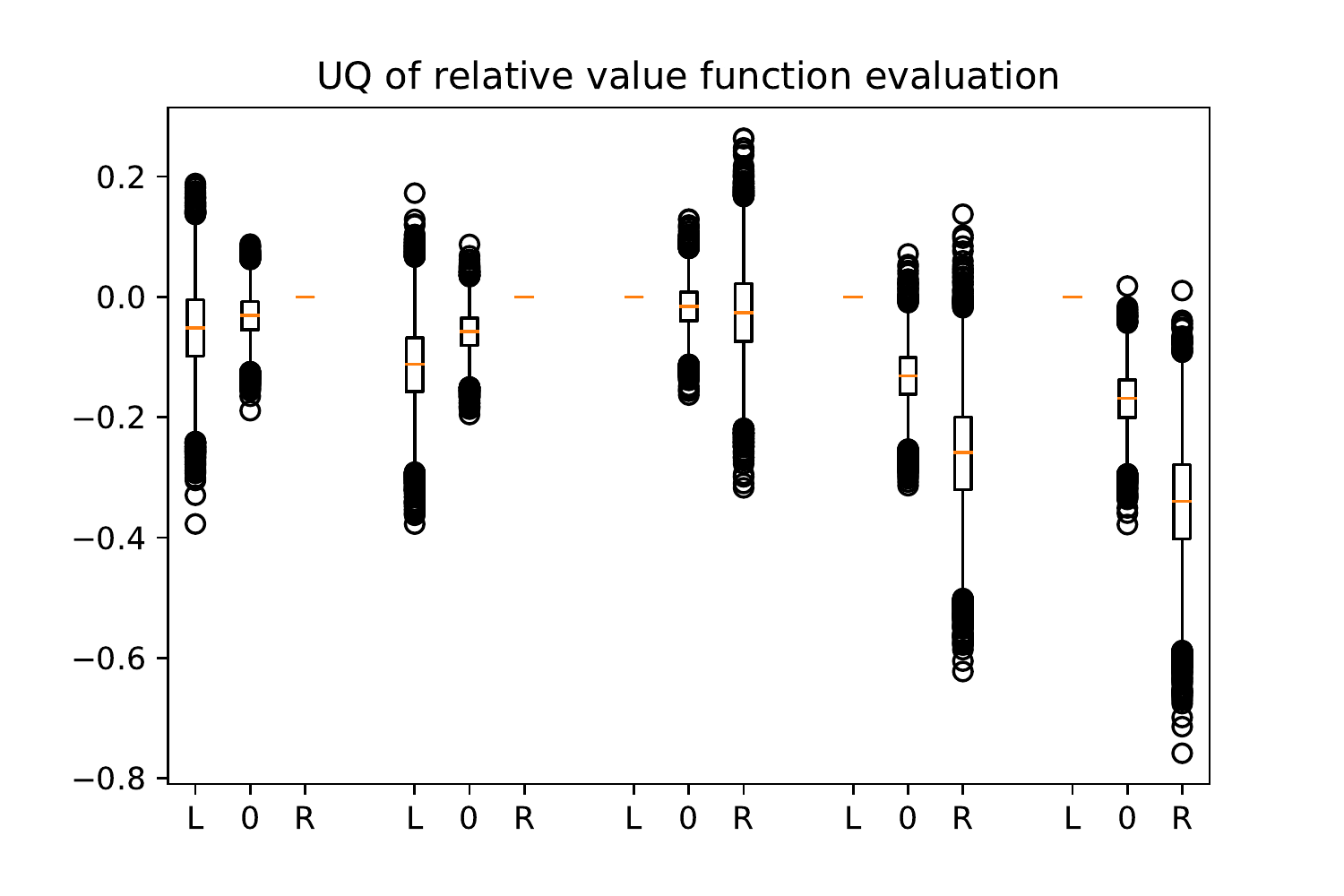}
  \caption{Mountaincar: KL based posterior}
\end{subfigure}
\begin{subfigure}{.3\textwidth}
  \centering
  \includegraphics[width=\linewidth]{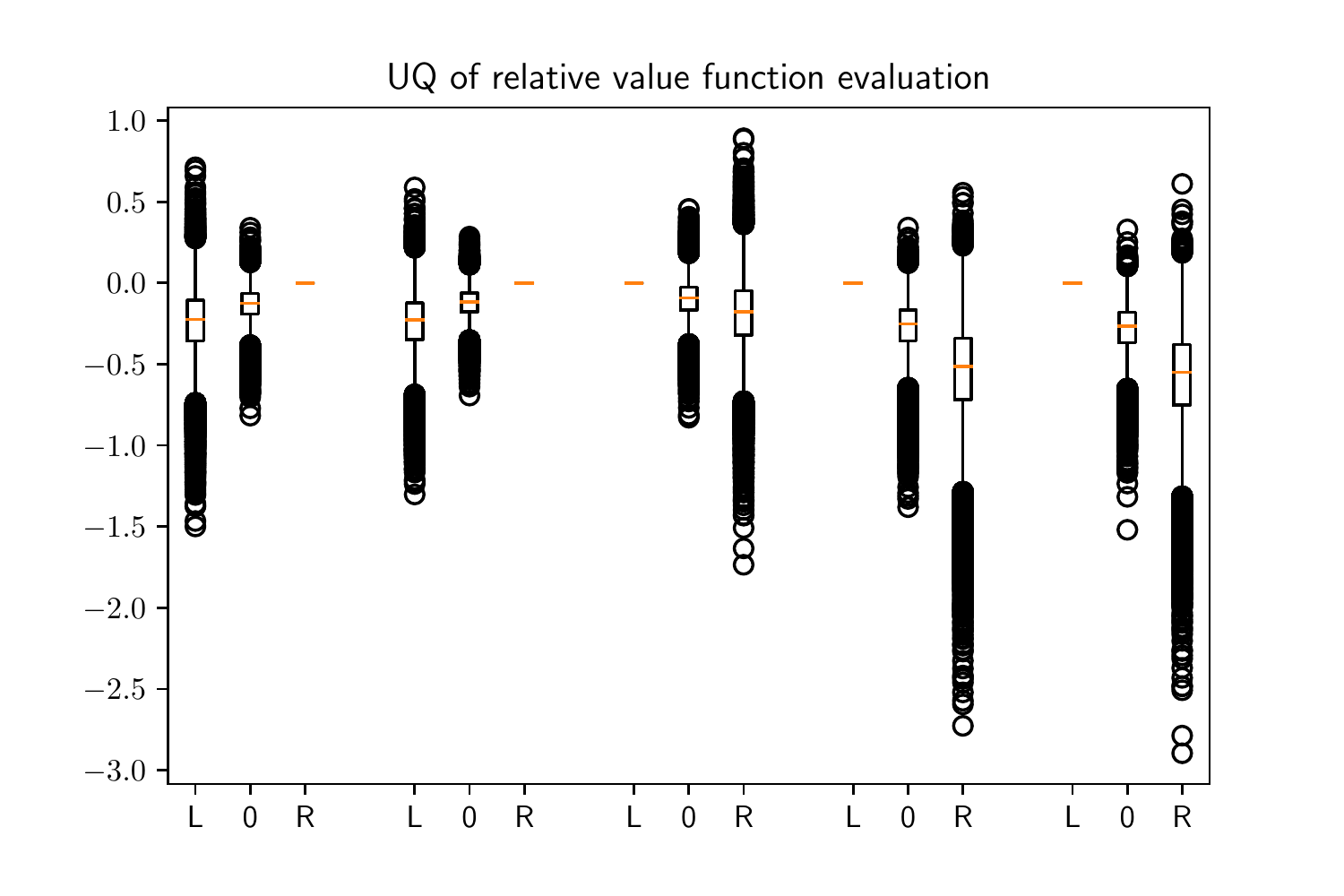}
  \caption{Mountaincar: BNN based posterior}
\end{subfigure}
\begin{subfigure}{.3\textwidth}
  \centering
  \includegraphics[width=\linewidth]{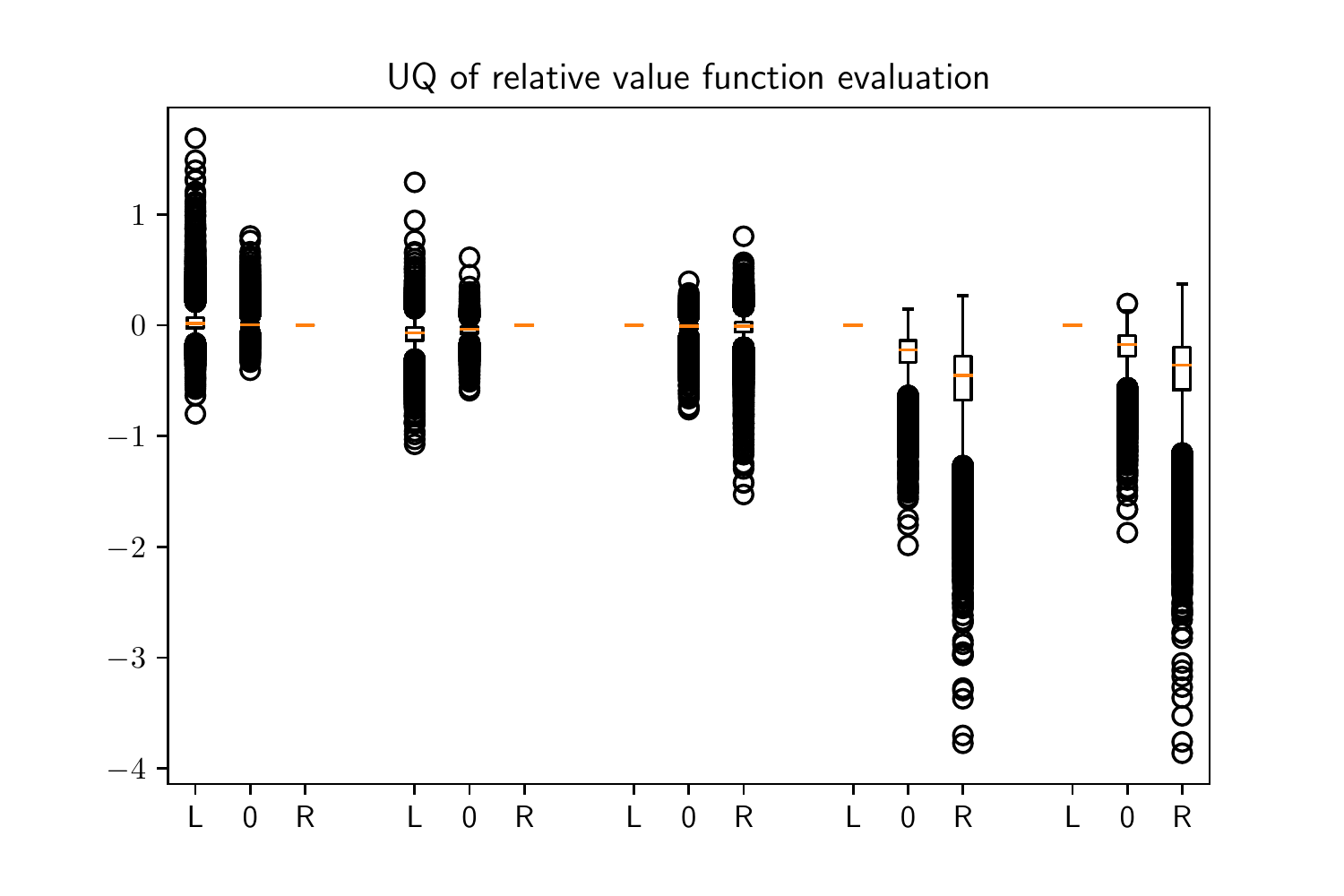}
  \caption{Mountaincar: tcNN based posterior}
\end{subfigure}
\vskip\baselineskip
\begin{subfigure}{.3\textwidth}
  \centering
  \includegraphics[width=\linewidth]{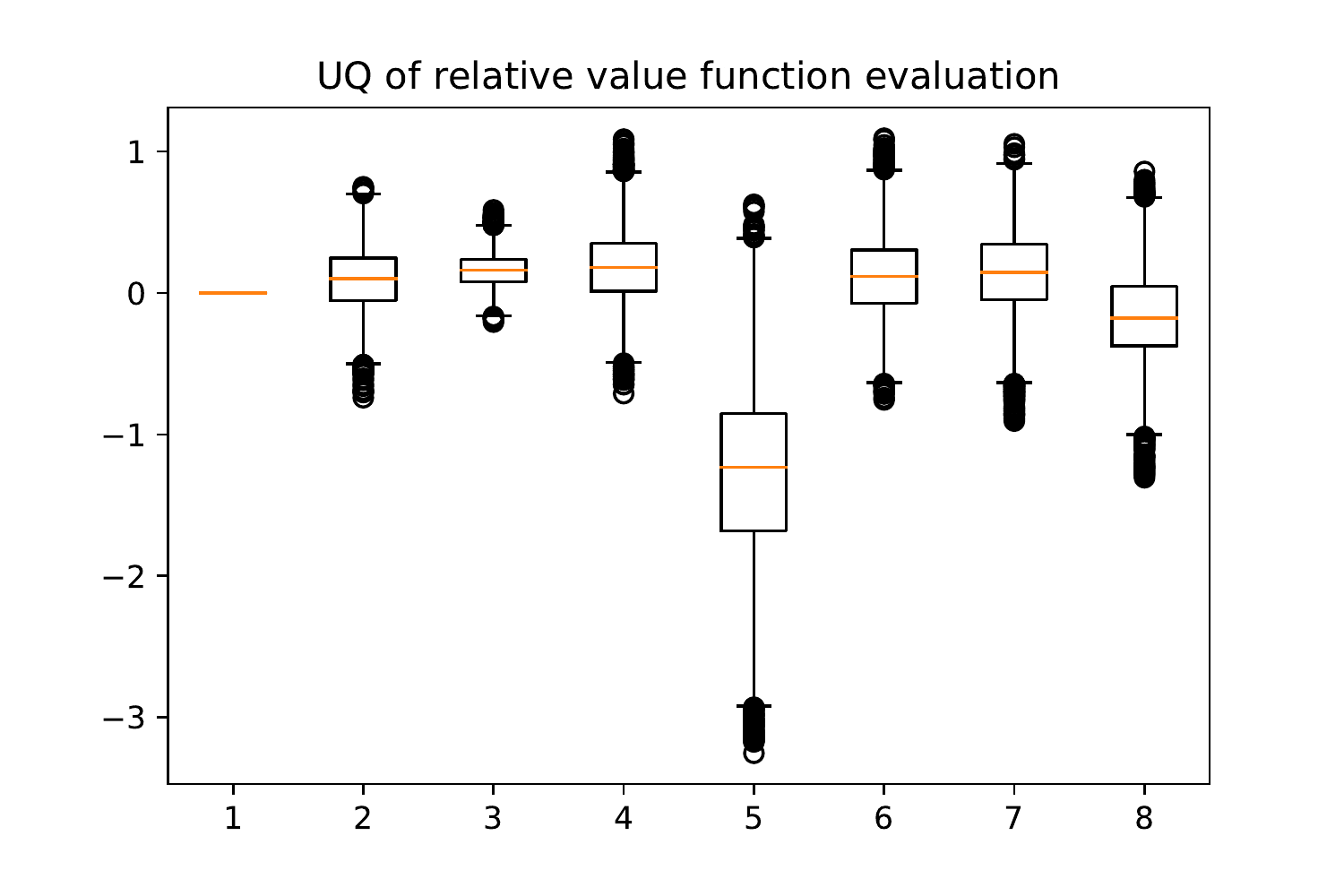}
  \caption{HalfCheetah: KL based posterior}
\end{subfigure}
\begin{subfigure}{.3\textwidth}
  \centering
  \includegraphics[width=\linewidth]{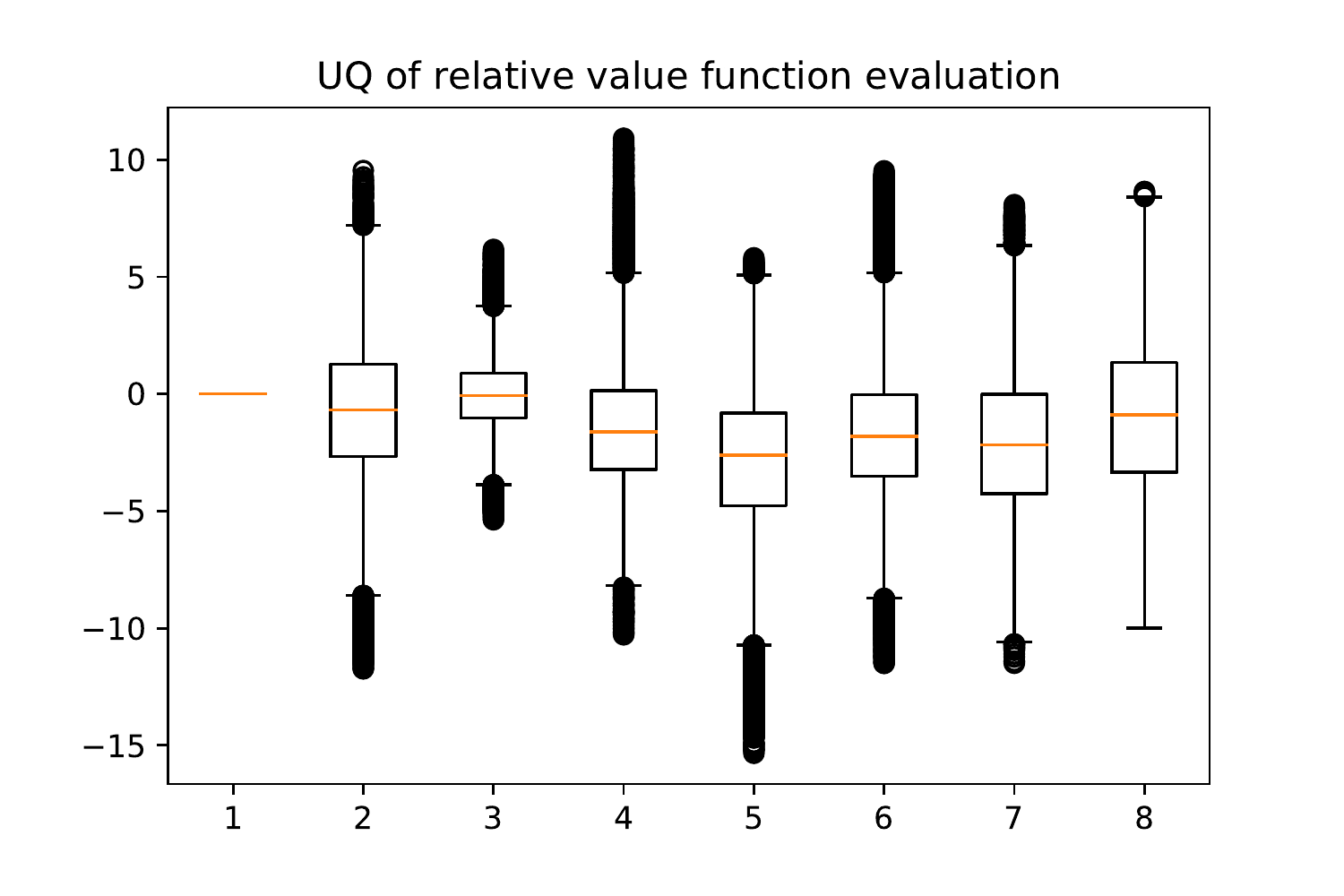}
  \caption{HalfCheetah: BNN based posterior}
\end{subfigure}
\begin{subfigure}{.3\textwidth}
  \centering
  \includegraphics[width=\linewidth]{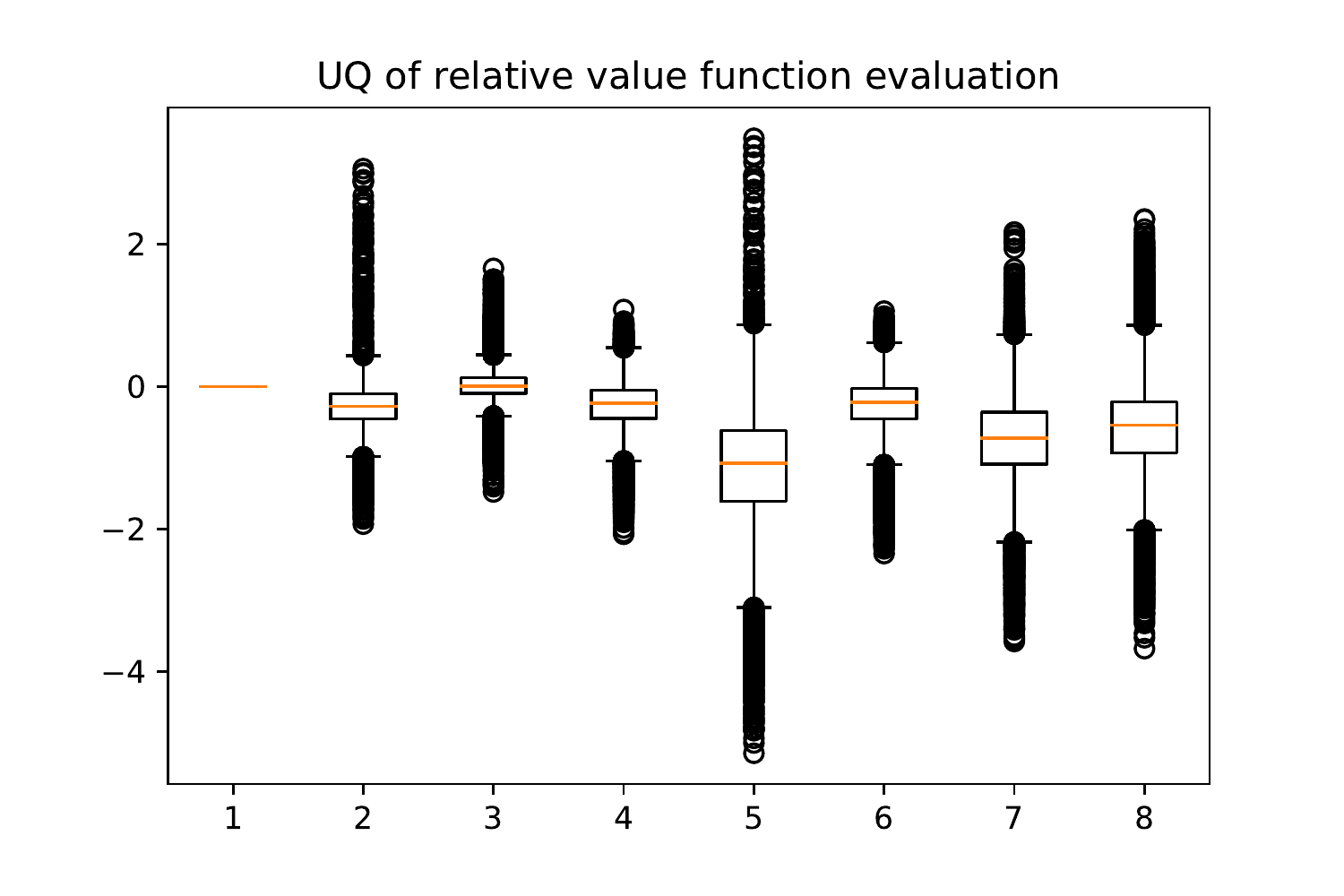}
  \caption{HalfCheetah: tcNN based posterior}
\end{subfigure}
\caption{Uncertainty quantified using estimates arising from the three different priors for the mountaincar and the HalfCheetah examples. \\
\emph{Top row}: Mountaincar example. In each plot, five different states are looked at, the estimates of the value functions are shown, standardised such that the optimal action has value $0$ always using \eqref{eq:normalise_v}. None of the posteriors can make a clear judgement as to what the optimal actions for the first three shown states are, as the boxplots illustrate the uncertainty when predicting the best action. For the fourth and fifth states, all posteriors suggest a clear decision for action `Left' as $v(\mathcal T(x,$`Left'$))>>v(\mathcal T(x,$`0'$))\vee v(\mathcal T(x,$`Right'$))$ cf. \eqref{random_action_2}. The reader should note that the KL, the BNN, and the tcNN posteriors behave similarly in that they are uncertain in the first three states, and very decisive in the last two states.\\
\emph{Bottom row}: HalfCheetah example. The optimal action is the first one in all three plots, and samples are again normalised using \eqref{eq:normalise_v} such that they take the value $0$ at the state the optimal value takes one to. The BNN and tcNN posteriors correctly estimate the optimal action, the KL posterior doesn't. 
}
\label{fig:prior_compare}
\end{figure}

\subsection{Ability to Learn Policy\label{sec:policy_learning}}
To asses if the posteriors can truly learn an agent's behaviour, we used the priors with a smaller number of parameters, and stored $1000$ samples for each posterior. We then used these samples to obtain a mean value function which was used for decision making. For the trace-class neural network prior we used $3$ layers with $10$ nodes per layer for both examples (resulting in $261$ parameters for the mountaincar example and $411$ for the HalfCheetah); for the KL prior we used $k_{\mathrm{max}}=(5,5)$ for the mountaincar example (giving a total of $224$ parameters), and $k_{\mathrm{max}}=(5,5)$ in the HalfCheetah example (a total of $8,730$ parameters). While the number of parameters can theoretically be chosen infinitely large, we truncated the layers and expansions earlier as we only had a very limited computational budget available. In general, where to truncate is an interesting model choice problem, and we found that for our problems the parameters described above yield very good approximations to a model with many more parameters. We thus chose to run the simplified model rather than a model with many more parameters, allowing many more stored MCMC posterior samples ($1000$ in this case) in the same wall-clock time. The results are summarised in Figure \ref{fig:results}.

\begin{figure}[ht!]
\centering
\begin{subfigure}{.45\textwidth}
  \centering
  \includegraphics[width=\linewidth]{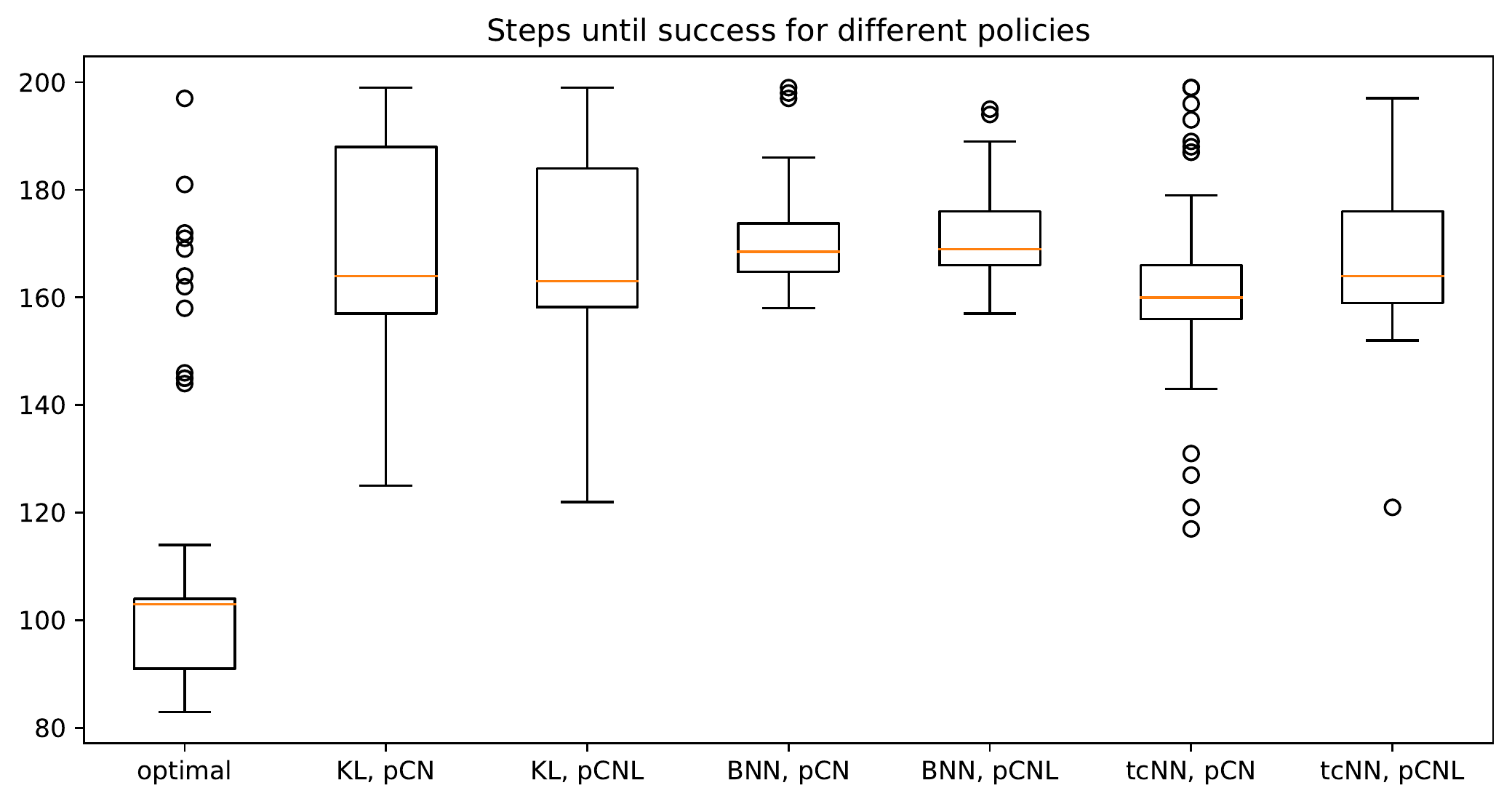}
  \caption{Mountaincar: Results}
\end{subfigure}
\quad
\begin{subfigure}{.45\textwidth}
  \centering
  \includegraphics[width=\linewidth]{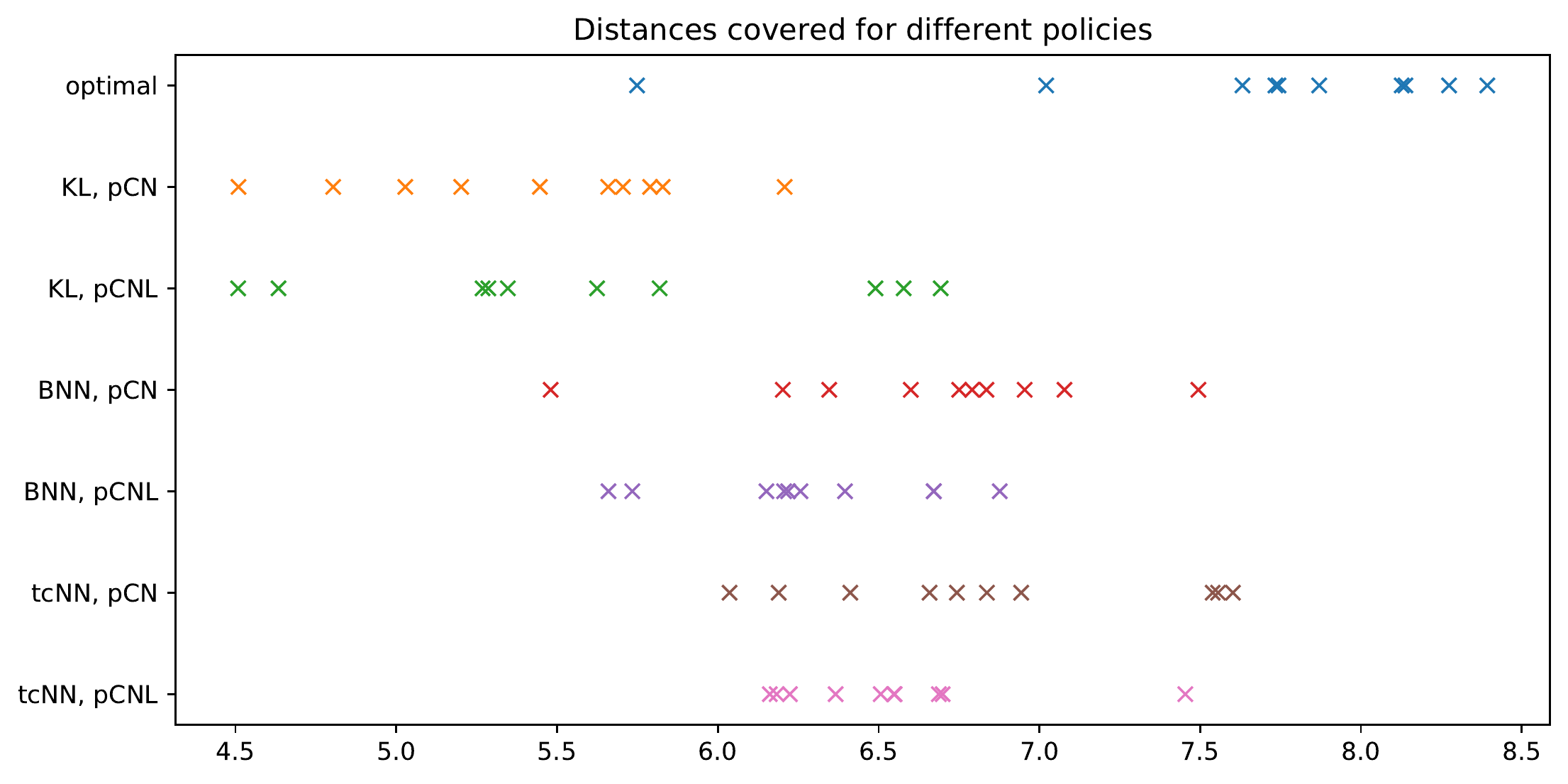}
  \caption{HalfCheetah: Results}
\end{subfigure}
    \caption{Results for the policy learning experiment.\\
\emph{Left}: Mountaincar example. The number of steps until success is shown for different posteriors. If the goal was not reached after $200$ steps, the run was counted as failure. Out of $100$ runs, the policy following the KL posterior when using pCN gave $34$ failures ($24$ when using pCNL), the standard BNN posterior gave $81$ (pCN) and $80$ (pCNL) failures, and the tcNN posteriors gave $23$ (for the posterior estimates obtained using pCN) and $25$ (pCNL). \\
\emph{Right}: HalfCheetah example. The different policies arising from the KL posterior (obtained once using pCN, once using pCNL), a standard BNN posterior, and the tcNN posterior were controlling the agent over $10$ runs with $100$ time steps. The distances covered per run are shown in the plot.}
    \label{fig:results}
\end{figure} 

\section{Conclusion and Outlook}
This paper addresses the problem of effective Bayesian inference for unknown functions with higher dimensional domains.
Unlike priors which require an orthogonal basis for the function space and scale exponentially in the domain dimension, our proposed trace-class neural network prior easily scales to higher-dimensional domains as the dependence on the domain dimension is linear. When using the pCN sampling method, this prior also satisfies the desired property of being stable under mesh-refinement, in the sense that the acceptance probability of pCN does not degenerate to $0$ when using more parameters for the neural network. Various questions remain unanswered though, and interesting directions for future work open up. For example, what are suitable generalisations of the proposed prior, e.g. heavy-tailed or hierarchical ones? What are the optimal settings for the tuning parameters $\sigma^2_{w^{(l)}}$, $\sigma^2_{b^{(l)}}$ and $\alpha$? 
Can one obtain contraction rates to ensure the concentration of the posterior samples around the true functions? A first idea here is to exploit the various generalisations of the universal approximation theorem \citep{scarselli1998universal}, and combine them with the proof methodology used in this paper.

We further introduced a likelihood suitable for Bayesian reinforcement learning where the underlying Markov decision process has a continuous state-space, and thus the unknown value function to be estimated has domain $\mathbb{R}^d$ as opposed to a discrete set. An interesting research direction is to generalise this to continuous action spaces as well.
Finally, we underscored the theory with numerical illustrations, illustrating the applicability of the prior for various control problems. It would also be interesting to evaluate the tcNN prior in other applied settings beyond control.

\section*{Acknowledgements}
Part of this research was carried out when TS received financial support from the Cantab Capital Institute for the Mathematics of Information, he is currently supported by the EPSRC New Investigator award EP/V002694/1.

\bibliographystyle{plainnat}
\bibliography{references.bib}

\newpage
\appendix
\section{NodeSwap Algorithm\label{app:alg}}
\begin{algorithm}[ht!]
\caption{}\label{alg:prior_swaps}
\begin{algorithmic}[1] \Procedure{NodeSwap}{$\theta$}\Comment{Input current iterate $\theta$}
\State $\theta'\gets\theta$ \Comment{Priming the return value}
\State $l\sim Unif(n)$\Comment{Sample random layer}
\State $i\sim Geom(\alpha^{-1})$\Comment{Sample random node}
\While{$i\geq N_{l}$}\Comment{Repeat process until we have a valid node index}
      \State $i\sim Geom(\alpha^{-1})$
   \EndWhile
   \State $\forall j:~w_{i+1,j}^{(l)~\prime}\gets w_{i,j}^{(l)}$
   \State $\forall j:~w_{i,j}^{(l)~\prime}\gets w_{i+1,j}^{(l)}$
   \State $\forall j:~w_{j,i+1}^{(l+1)~\prime}\gets w_{j,i}^{(l+1)}$
   \State $\forall j:~w_{j,i}^{(l+1)~\prime}\gets w_{j,i+1}^{(l+1)}$
   \State $b_{i+1}^{(l)~\prime}\gets b_{i}^{(l)}$
   \State $b_{i}^{(l)~\prime}\gets b_{i+1}^{(l)}$
   \State $u\sim Unif([0,1])$
   \State $a=\min(1,\mu_0(\theta')/\mu_0(\theta))$\Comment{Metropolis-Hastings acceptance probability cf. \eqref{eq:nodeswapinvariance}}
   \If{u<a}
   \State \Return $\theta'$ \Comment{Accept node swap}
   \Else{}
   \State \Return $\theta$ \Comment{Reject node swap}
   \EndIf
\EndProcedure
\end{algorithmic}
\end{algorithm}

\section{Proofs}
Before turning to the proofs of the lemmas and theorems from the main paper, consider the $n$-layer fully connected feed-forward neural network in \eqref{def:NNfunctions}. When the layers have infinite width, we delineate the domain of the sequences that define each layer separately. For layer $1<l<n+1$ let 
\begin{equation}
\label{eq:inf_NN_params}
 \mathcal{H}^{(l)}_w =
\left\{ 
(w_{i,j}^{(l)})_{i,j \in \mathbb{N}}: \sum_{i=1}^{\infty} \sum_{j=1}^{\infty} (w_{i,j}^{(l)})^2 
< \infty \right\},
\quad \mathcal{H}^{(l)}_b =
\left\{ 
(b_{i}^{(l)})_{i\in\mathbb N}
: \sum_{i=1}^{\infty} (b_i^{(l)})^2 
< \infty \right\}.
\end{equation}
(We omit the obvious modification for the sequence spaces for layer $1$ and $n+1$.) The entire network is then parameterised by 
\begin{equation}
\mathcal{H}=\mathcal{H}^{(1)}\times \cdots \times \mathcal{H}^{(n+1)}, \quad  \mathrm{where~} \mathcal{H}^{(l)}=\mathcal{H}^{(l)}_w \times \mathcal{H}^{(l)}_b.
\label{eq:inf_NN_params2}
\end{equation}
This domain is chosen because it has full measure under our Hilbert space Gaussian prior and also results in the infinite width functions in \eqref{eq:infNN_functions} being well defined almost surely. 

\subsection{Lemma 2}
\begin{lem}
\label{lem:equiv_to_l2}
Consider the $n$-layer fully connected feed-forward neural network in \eqref{def:NNfunctions}. When the layers have infinite width, their weights and biases 
can be equivalently parameterised by $\ell^2=\{(a_1,a_2,\ldots) \in \mathbb{R}^{\mathbb{N}}: \sum_{i}^{\infty} a_i^2 < \infty\}$. 
\end{lem}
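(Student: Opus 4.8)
The plan is to exhibit an explicit linear isometric isomorphism between $\mathcal{H}=\mathcal{H}^{(1)}\times\cdots\times\mathcal{H}^{(n+1)}$ and $\ell^2$, which is precisely what justifies the claimed equivalent parameterisation. The guiding observation is that every factor appearing in \eqref{eq:inf_NN_params}--\eqref{eq:inf_NN_params2} is a square-summable family indexed by a \emph{countable} set, and that a finite product of such spaces is again square-summable over the (countable) disjoint union of the underlying index sets.

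First I would pin down the index sets. For a hidden layer $1<l<n+1$ the weights $w^{(l)}_{i,j}$ are indexed by $\mathbb{N}\times\mathbb{N}$ and the biases $b^{(l)}_i$ by $\mathbb{N}$; for layer $1$ the weights are indexed by $\mathbb{N}\times\{1,\ldots,d\}$ and the biases by $\mathbb{N}$; for the output layer the weights $w^{(n+1)}_{1,j}$ are indexed by $\{1\}\times\mathbb{N}\cong\mathbb{N}$ and the bias $b^{(n+1)}_1$ is a single scalar. Let $I$ denote the disjoint union of all these index sets over $l=1,\ldots,n+1$. Since each piece is countable and there are only finitely many of them, $I$ is countable, so I may fix a bijection $\sigma\colon\mathbb{N}\to I$.

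Next I would define $\Psi\colon\mathcal{H}\to\mathbb{R}^{\mathbb{N}}$ by stacking all weights and biases into a single sequence via $\sigma$, i.e.\ $(\Psi\theta)_k=\theta_{\sigma(k)}$. By the definitions in \eqref{eq:inf_NN_params} the squared norm on the product space is the sum of the component squared norms,
\begin{equation*}
\|\theta\|_{\mathcal{H}}^2=\sum_{l=1}^{n+1}\Big(\sum_{i,j}(w^{(l)}_{i,j})^2+\sum_i (b^{(l)}_i)^2\Big)=\sum_{\iota\in I}\theta_\iota^2 .
\end{equation*}
Because every summand is nonnegative, the unordered sum over $I$ is independent of the enumeration (unconditional convergence of nonnegative series, equivalently Tonelli applied to counting measure), so $\sum_{\iota\in I}\theta_\iota^2=\sum_{k\in\mathbb{N}}(\Psi\theta)_k^2$. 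Hence $\theta\in\mathcal{H}$ if and only if $\Psi\theta\in\ell^2$, and $\Psi$ is a norm-preserving linear bijection onto $\ell^2$, with inverse reassembling a sequence in $\ell^2$ into the layerwise families through $\sigma^{-1}$.

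The only genuine content of the argument is this reordering step: it is what lets me collapse each double sum over $\mathbb{N}\times\mathbb{N}$ and the finite product over the $n+1$ layers into a single bona fide $\ell^2$ sum. Everything else is the bookkeeping fact that a finite union of countable sets is countable, which supplies the enumeration $\sigma$. I would finally remark that $\sigma$ is not canonical, but since the isometry holds for every choice, the parameterisation of the infinite-width network by $\ell^2$ is well defined up to the unitary relabelling induced by changing $\sigma$, and this is exactly the structure needed so that the Hilbert-space Gaussian prior of \eqref{prior_def} transfers faithfully to $\ell^2$.
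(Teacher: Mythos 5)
Your proof is correct and follows essentially the same route as the paper's: both arguments rest on the countability of the combined index set of weights and biases and on the fact that any enumeration of a nonnegative square-summable family preserves square-summability, so the product space is an instance of $\ell^2$. Your version merely makes the paper's sketch explicit by naming the bijection $\sigma$ and the stacking isometry $\Psi$.
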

\begin{proof}
$\mathcal{H}^{(l)}_w$ in \eqref{eq:inf_NN_params} is an instance of the Hilbert space $\ell^2$ since $\mathbb{N}\times \mathbb{N}$ is countable and {\it any} enumeration (e.g. the `diagonal' enumeration method) of $\mathcal{H}^{(l)}_w$ to map its elements to infinite sequences of the form $(a_1,a_2,\ldots)$  will be square summable. Similarly, $\mathcal{H}^{(l)}=\mathcal{H}^{(l)}_w \times \mathcal{H}^{(l)}_b$, the cartesian product of two $\ell^2$ spaces is again an instance of $\ell^2$ regardless of how the two sequences are merged into one. Finally, by the same arguments, 
$\mathcal{H}=\mathcal{H}^{(1)}\times \cdots \times \mathcal{H}^{(n+1)}$ is also 
an instance of $\ell^2$.
\end{proof}

\subsection{Proof of Theorem \ref{thm:prior}\label{proof:thm:prior}}
\begin{proof}[Proof of Theorem \ref{thm:prior}.]
We prove the claims in the theorem for the infinite width case and in doing so cover the finite width case; the finite-dimensional case follows by omitting the limit arguments.\\

Lemma \ref{lem:equiv_to_l2} shows that the weights and biases of the infinite width and finite depth neural network can be equivalently parameterised by $\ell^2$. As the biases and weights of each layer are independent zero mean Gaussian random variables, and the variances form a summable sequence when $\alpha>1$, the prior $\mu_0$ is a trace-class Gaussian prior on $\ell^2$ and thus Property \ref{assumption_Hilbert_space} is satisfied. 

To see Property \ref{assumption_finite_variance}, by looking at the first layer we can easily check that for fixed $x\in[0,1]^d$, $f_i^{(1)}(x)$ is a mixture of centered Gaussian distributions, and the claim follows by noting that $\mathbb EB_i^{(1)}=\mathbb EW_{i,j}^{(1)}=0$,
\begin{align}
    \mathbb E\left[(f_{i}^{(1)}(x))^2\right]&=\mathbb E\left[(B_i^{(1)})^2\right]+\sum_{j=1}^d\mathbb E\left[(W_{i,j}^{(1)})^2\right](x_j)^2\nonumber\\
    &\leq\frac{\sigma_{b_1}^2}{i^\alpha}+\frac{\sigma_{w_1}^2}{i^\alpha}d\\
    &=\frac{1}{i^\alpha}\left[\sigma_{b_1}^2+\sigma_{w_1}^2d\right].\label{sigma_1=1}
\end{align}

We use induction over $l$, and define the following random variables, for which we truncate the $i$-th function of layer $l$ after $k$ terms:
\begin{align*}
    f_{i,k}^{(l)}(x)=B_i^{(l)}+\sum_{j=1}^kW_{i,j}^{(l)}\zeta(F_j^{(l-1)}(x))\qquad l=2\dots n+1.
\end{align*}
(Note that, with slight abuse of notation, we write $F_j^{(1)}$ even for the functions on the first layer, which are defined by finitely many parameters.)

By Assumption \ref{assumption_bounded_activation},
\begin{align}
    \lvert\mathbb E\zeta(F_j^{(l-1)}(x))\rvert
    \leq \mathbb E[\lvert\zeta(F_j^{(l-1)}(x))\rvert]\leq \mathbb E[\lvert F_j^{(l-1)}(x)\rvert]<\infty\label{Ephi_finite},
\end{align}
where the last inequality holds as $F_j^{(l-1)}(x)$ is $L^2$ bounded by the induction hypothesis.

We now show that $f_{i,k}^{(l)}(x)\rightarrow F_i^{(l)}(x)$ almost surely, and in $L^2$, by applying the $L^2$ martingale convergence theorem. We thus need to show that $S_k(x):=f_{i,k}^{(l)}(x)$ is a $L^2$ bounded martingale, where we dropped the indices $i$ and $l$ for notational convenience. Indeed, with the natural filtration $(\mathcal F_k)_{k\in\mathbb N}$
\begin{align*}
    \mathbb E[S_{k+1}(x)|\mathcal F_k]=0,
\end{align*}
as $W_{i,j}^{(l)}$ and $\zeta(F_j^{(l-1)}(x))$ are independent, the expectation of the former is centered, and the latter is finite. Additionally, by exploiting the independence, Assumption \ref{assumption_bounded_activation} and \ref{Ephi_finite}, we get
\begin{align}
    \mathbb E\left[(S_k(x))^2\right]&=\mathbb E[B_i^{(l)}]^2+\sum_{j=1}^k\mathbb E\left[(W_{i,j}^{(l)})^2\right]\mathbb E\left[(\zeta(F_j^{(l-1)}(x)))^2\right]\label{S_k}\\
    &\leq\frac{\sigma_{b^{(l)}}^2}{i^\alpha}+\sigma_{w^{(l)}}^2\sum_{j=1}^k\frac{1}{(ij)^\alpha}\mathbb E\left[(F_j^{(l-1)}(x))^2\right]\leq\frac{\sigma_{b^{(l)}}^2}{i^\alpha}+\frac{\sigma_{w^{(l)}}^2\sigma_{l-1}^2}{i^\alpha}\sum_{j=1}^k\frac{1}{j^{2\alpha}}\nonumber\\
    &=\frac{1}{i^\alpha}\left[\sigma_{b^{(l)}}^2+\sigma_{w^{(l)}}^2\sigma_{l-1}^2\sum_{j=1}^k\frac{1}{j^{2\alpha}}\right].\label{sigma_l=1}
\end{align}
This series converges for $\alpha>1/2$, and we define the limit for $i=1$ as $\sigma_l^2$. Thus, $S_k$ is indeed a $L^2$ bounded martingale and trivially $\mathbb E F_i^{(l)}=0$, proving Assumption \ref{assumption_finite_variance}.

We next show Property \ref{assumption_cts_vf}. For the first layer, we use independence to get
\begin{align}
    \mathbb E\left[(f_i^{(1)}(x)-f_i^{(1)}(y))^2\right]&=\sum_{j=1}^d\mathbb E\left[(W_{i,j}^{(1)})^2\right](x_j-y_j)^2\nonumber\\
    &=\frac{\sigma_{w_1}^2}{i^\alpha}\lVert x-y\rVert^2.\label{c_1=sigma_w_1}
\end{align}

For the subsequent layers, we again use induction over $l$. We define $S_k(x)$ as before and check that
\begin{align}
    \mathbb E\left[(S_k(x)S_k(y))^2\right]=\mathbb E\left[(B_i^{(l)})^2\right]+\sum_{j=1}^k\mathbb E \left[(W_{i,j}^{(l)})^2\right]\mathbb E[\zeta(F_j^{(l-1)}(x))\zeta(F_j^{(l-1)}(y))].\label{S_k_x_y}
\end{align}
Using the induction hypothesis, Assumption \ref{assumption_bounded_activation}, \eqref{S_k} and \eqref{S_k_x_y} we get
\begin{align}
    \mathbb E\left[(S_k(x)-S_k(y))^2\right]
    &=
    \mathbb E\left[(S_k(x))^2\right]+\mathbb E\left[(S_k(y))^2\right]-2\mathbb E[S_k(x)S_k(y)]\nonumber\\
    &=2\mathbb E\left[(B_i^{(l)})^2\right] +\sum_{j=1}^k\mathbb E\left[(W_{i,j}^{(l)})^2\right]\left(\mathbb E[\zeta(F_j^{(l-1)}(x))^2]+\mathbb E[\zeta(F_j^{(l-1)}(y))^2]\right)\nonumber\\
    &\qquad-2E[S_k(x)S_k(y)]\nonumber\\
    &=\sigma_{w^{(l)}}^2\sum_{j=1}^k\frac{1}{(ij)^\alpha}\mathbb E\left[(\zeta(F_j^{(l-1)}(x))-\zeta(F_j^{(l-1)}(y)))^2\right]\nonumber\\
    &\leq\sigma_{w^{(l)}}^2\sum_{j=1}^k\frac{1}{(ij)^\alpha}\mathbb E\left[(F_j^{(l-1)}(x)-F_j^{(l-1)}(y))^2\right]\nonumber\\
    &\leq\frac{\sigma_{w^{(l)}}^2c_{l-1}}{i^\alpha}\lVert x-y\rVert^2\sum_{j=1}^k\frac{1}{j^{2\alpha}}\\
    &=\frac{1}{i^\alpha}\left[\sigma_{w^{(l)}}^2c_{l-1}\sum_{j=1}^k\frac{1}{j^{2\alpha}}\right]\lVert x-y\rVert^2,\label{c=rho}
\end{align}
such that the claim follows upon defining $c_l=\sigma_{w^{(l)}}^2c_{l-1}\sum_{j=1}^\infty1/j^{2\alpha}$, and noting that by the Fatou's lemma
\begin{align*}
    \mathbb E\left[(F_i^{(l)}(x)-F_i^{(l)}(y))^2\right]
    &=\mathbb E\left[\liminf_{k\rightarrow\infty}(S_k(x)-S_k(y))^2\right]\\
    &\leq\liminf_{k\rightarrow\infty}\mathbb E\left[(S_k(x)-S_k(y))^2\right]\\
    &\leq\liminf_{k\rightarrow\infty}\frac{1}{i^\alpha}\left[\sigma_{w^{(l)}}^2c_{l-1}\sum_{j=1}^k\frac{1}{j^{2\alpha}}\right]\lVert x-y\rVert^2\\
    &=c_l\lVert x-y\rVert^2.
\end{align*}

Lastly, recall that by Assumption \ref{assumption_bounded_activation} the activation functions are Lipschitz continuous, and thus so is $v$ as a composition of Lipschitz functions. The claim of Property \ref{assumption_information_passes} for the finite width case now follows since $\mu_0$-almost surely, $v$ is Lipschitz continuous and thus differentiable almost everywhere by the Rademacher Theorem \citep[Theorem 3.1.6]{federer1969geometric}.
\end{proof}

\subsection{Lemma \ref{lem:node_swap}}
In networks with small widths, Algorithm \ref{alg:prior_swaps} gave acceptance rates of around $30\%$ (for $N^{(l)}=10$), which quickly declined as we included more nodes (e.g. $1\%$ acceptances for $N^{(l)}=100$.)
 This suggests that the NodeSwap algorithm is not well-defined in the infinite width limit, and this is indeed the statement of the next lemma. We will from now on write fraktal letters for the swapped nodes $f^{(\mathfrak l)}_{\mathfrak i}$ and $f^{(\mathfrak l)}_{\mathfrak i+1}$, and reserve $i$ and $l$ for general indices.
 
\begin{lem}\label{lem:node_swap}
The NodeSwap Algorithm \ref{alg:prior_swaps} which swaps the biases and weights associated with the nodes $f^{(\mathfrak l)}_{\mathfrak i}$ and $f^{(\mathfrak l)}_{\mathfrak i+1}$ is not well defined in the infinite width limit. 

For the finite width network, the acceptance ratio is given by 
\begin{align}\label{eq:acc_ratio}
    a^N(\theta^N,\vartheta^N)=\frac{\mu^N_0(\vartheta^N)}{\mu^N_0(\theta^N)}.
\end{align} 
\end{lem}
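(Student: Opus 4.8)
The plan is to treat the NodeSwap move as an ordinary Metropolis--Hastings proposal, compute its Hastings ratio in finite width, and then decide whether that ratio survives the infinite-width limit via Kakutani's dichotomy for product Gaussian measures.

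For the finite-width network I would fix a layer $\mathfrak l$ and index $\mathfrak i$ and note that the map sending $\theta$ to $\vartheta$ is the deterministic coordinate permutation $T=T_{\mathfrak l,\mathfrak i}$ interchanging the incoming weights, the outgoing weights, and the bias of nodes $\mathfrak i$ and $\mathfrak i+1$; it is an involution, $T^2=\mathrm{id}$. Because relabelling two nodes within a layer leaves the network output unchanged, $v_\theta=v_\vartheta$, so any likelihood that depends on $\theta$ only through $v_\theta$ cancels in the Hastings ratio --- which is precisely why Algorithm \ref{alg:prior_swaps} uses the prior ratio alone. The proposal probability factorises as the probability of selecting the pair $(\mathfrak l,\mathfrak i)$ (uniform over layers times the truncated geometric over nodes) followed by the deterministic swap; since this selection probability depends only on the indices and not on the weight values, and since $T$ is its own inverse, proposing $\vartheta$ from $\theta$ and proposing $\theta$ from $\vartheta$ have equal probability. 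The proposal is therefore symmetric, the Hastings ratio collapses to the ratio of prior densities, and I obtain $a^N(\theta^N,\vartheta^N)=\mu_0^N(\vartheta^N)/\mu_0^N(\theta^N)$. Writing this out as a product of one-dimensional Gaussian density ratios makes explicit that it is nontrivial precisely because, by \eqref{prior_def}, the prior variance at a coordinate depends on its node index.

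For the infinite-width claim I would pass to the $\ell^2$ parameterisation of Lemma \ref{lem:equiv_to_l2}, regard $T$ as a coordinate permutation on $\ell^2$, and identify the law of the proposal when $u\sim\mu_0$ as the pushforward $\mu_0\circ T^{-1}$, a centred product Gaussian whose variances at the coordinates of nodes $\mathfrak i$ and $\mathfrak i+1$ are interchanged. The decisive point is that node $\mathfrak i$ carries infinitely many incoming weights $W_{\mathfrak i,j}^{(\mathfrak l)}$, $j\in\mathbb N$, whose variances $\sigma_{w^{(\mathfrak l)}}^2/(\mathfrak i j)^\alpha$ are exchanged with $\sigma_{w^{(\mathfrak l)}}^2/((\mathfrak i+1)j)^\alpha$, so that the ratio of paired variances is the constant
\begin{align*}
 r=\left(\frac{\mathfrak i+1}{\mathfrak i}\right)^\alpha>1,
\end{align*}
independent of $j$ (the same holds for the infinitely many outgoing weights in layer $\mathfrak l+1$). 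Invoking Kakutani's dichotomy, two product measures with equivalent marginals are either equivalent or mutually singular, and they are equivalent iff the product of the one-dimensional Hellinger affinities is strictly positive; for two centred Gaussians with variance ratio $r$ this affinity equals $\bigl(4r/(r+1)^2\bigr)^{1/4}$, a constant strictly below $1$ whenever $r\neq1$. Since infinitely many coordinates contribute this same factor, the product of affinities is zero, so $\mu_0\perp\mu_0\circ T^{-1}$ and no Radon--Nikodym derivative --- hence no valid reversible acceptance rule --- exists in the limit. Matching the observed decay of acceptance rates, I would also give the direct computation that the finite-width log-ratio $\log a^N$ is a sum of \emph{i.i.d.}\ terms (each reducing to $\tfrac12(r\chi_j-\eta_j)(1/r-1)$ with $\chi_j,\eta_j\sim\chi^2_1$) of common negative mean $-\tfrac12(r-1)^2/r$, so by the strong law $\log a^N\to-\infty$ and $a^N\to0$ almost surely as the width grows.

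The main obstacle is the second half: one must correctly enumerate \emph{all} coordinates altered by the swap (both the incoming and the outgoing weights of the two nodes, not merely the two biases) and verify that their variance ratios stay bounded away from $1$ uniformly in $j$, so that the Hellinger product genuinely collapses to a zero affinity rather than converging to a positive limit. Equally important is articulating rigorously what ``not well-defined'' means --- namely that the mutual singularity of $\mu_0$ and $\mu_0\circ T^{-1}$ structurally obstructs the existence of a reversible Metropolis--Hastings move --- rather than merely observing that the acceptance probabilities are empirically small.
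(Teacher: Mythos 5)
Your proposal is correct, and the finite-width half is essentially identical to the paper's: the paper likewise observes that the likelihood is label-invariant, that the index-selection probability is independent of the parameter values, and hence that the Hastings ratio collapses to $\mu_0^N(\vartheta^N)/\mu_0^N(\theta^N)$. For the infinite-width half you take a genuinely different (though closely related) route. The paper frames the question via Tierney's criterion for Metropolis--Hastings on general state spaces and then applies the Feldman--Hajek theorem, showing that $(\mathcal C^{-1/2}\tilde{\mathcal C}^{1/2})(\mathcal C^{-1/2}\tilde{\mathcal C}^{1/2})^*$ fails to be Hilbert--Schmidt because $\sum_j\bigl((\mathfrak i j)^\alpha/((\mathfrak i+1)j)^\alpha-1\bigr)^2=\infty$; you instead exploit the product structure of the diagonal prior and invoke Kakutani's dichotomy, showing the product of one-dimensional Hellinger affinities vanishes because infinitely many coordinates contribute the same factor $\bigl(4r/(r+1)^2\bigr)^{1/4}<1$ with $r=((\mathfrak i+1)/\mathfrak i)^\alpha$. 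For diagonal Gaussian measures these two criteria are equivalent, and both hinge on the same correct observation that the incoming (and outgoing) weights of the swapped nodes form an infinite family whose variance ratios are bounded away from $1$ uniformly in $j$; Kakutani is the more elementary tool here, while Feldman--Hajek is what generalises beyond product measures (and the paper uses the extra conditions of that theorem only as an aside). Your additional strong-law computation showing $\log a^N\to-\infty$ with per-coordinate mean $-\tfrac12(r-1)^2/r$ is not in the paper but is correct and is a nice quantitative complement to the paper's empirical remark that acceptance rates decay from roughly $30\%$ at width $10$ to $1\%$ at width $100$.
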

\begin{proof}
By \cite{tierney1998note}, one needs to check that the measures $\eta(d\theta,d\vartheta):=\mu(\theta)Q(\theta,d\vartheta)$ and $\eta^T(d\theta,d\vartheta):=\eta(d\vartheta,d\theta)=\mu(d\vartheta)Q(\vartheta,d\theta)$ are mutually absolutely continuous on a set $R\in(E\times E,\mathcal E\otimes\mathcal E)$, and mutually singular on $R^C$, where here $Q$ is the deterministic transition kernel, and $(E,\mathcal E)$ is the measurable space on which $\mu$ and $Q$ are defined.\footnote{We use a different notation to \cite{tierney1998note}: Our $\eta$ is his $\mu$, our $mu$ is his $\pi$, our $(\theta,\vartheta)$ is his $(x,y)$.}
The (deterministic) transition kernel $Q$ maps $\theta$ to $\vartheta$ by swapping the nodes $f^{(\mathfrak l)}_{\mathfrak{ij}}$ and $f^{(\mathfrak l)}_{\mathfrak i(\mathfrak j+1)}$ (or more precisely, their associated weights and biases) with probability 
\begin{align}
    \frac{1}{n}\times\frac{1}{i^\alpha}\frac{1}{\sum_{j=1}^{N^l}\frac{1}{j^\alpha}},
\end{align}
which is well defined as $N^l\rightarrow\infty$, and independent of $\theta$, such that it suffices to show that the measures $\mu(\theta)$ and $\mu^T(\theta)=\mu(\vartheta)$ are mutually absolutely continuous on a set $R_1\in(E,\mathcal E)$, and mutually singular on $R_1^C$. The likelihood is also invariant under the transformation $\theta\mapsto\vartheta$, and as it is integrable with respect to the prior by the assumptions in Section \ref{MH_pCN} \citep{stuart2010inverse}, we only need to show that the Gaussian measures $\mu_0(d\theta)$ and $\mu_0^T(d\theta)$ are absolutely continuous with respect to one another. Note that we can write these as 
\begin{align}
    \mu_0(d\theta)&=\mathcal N(0,\mathcal C)\\
    \mu_0^T(d\theta)&=\mathcal N(0,\tilde{\mathcal C}),
\end{align}
with diagonal (by assumption) covariance operators $\mathcal C$ and $\tilde{\mathcal C}$, where the latter arises from swapping the variances associated with the swapped nodes. To see what is going on exactly, we now change to the neural network notation, where the variances under $\mathcal C$ for the individual weights and biases were given by
\begin{align}
    W_{i,j}^{(1)}\sim\mathcal N\left(0,\frac{\sigma_{w^{(1)}}^2}{i^\alpha}\right),\quad
    W_{i,j}^{(l)}\sim\mathcal N\left(0,\frac{\sigma_{w^{(l)}}^2}{(ij)^\alpha}\right)~\text{for}~l=2\dots n+1,\quad
    B_{i}^{(l)}\sim\mathcal N\left(0,\frac{\sigma_{b^{(l)}}^2}{i^\alpha}\right).
\end{align}
The variances under $\tilde{\mathcal C}$ are the same for most weights and biases, changed are only those associated with the swap nodes (recall that we swap nodes $f^{(\mathfrak l)}_{\mathfrak{ij}}$ and $f^{(\mathfrak l)}_{\mathfrak i(\mathfrak j+1)}$). The only changed variances are
\begin{align}
    W_{\mathfrak i,j}^{(\mathfrak l)}\sim\mathcal N\left(0,\frac{\sigma_{w^{(\mathfrak l)}}^2}{((\mathfrak i+1)j)^\alpha}\right),\qquad W_{(\mathfrak i+1),j}^{(\mathfrak l)}\sim\mathcal N\left(0,\frac{\sigma_{w^{(\mathfrak l)}}^2}{(\mathfrak ij)^\alpha}\right),\qquad\forall j\in\mathbb N\\
    W_{i,\mathfrak j}^{(\mathfrak l+1)}\sim\mathcal N\left(0,\frac{\sigma_{w^{(\mathfrak l+1)}}^2}{(i(\mathfrak j+1))^\alpha}\right),\qquad W_{i(\mathfrak j+1)}^{(\mathfrak l+1)}\sim\mathcal N\left(0,\frac{\sigma_{w^{(\mathfrak l+1)}}^2}{(i\mathfrak j)^\alpha}\right),\qquad\forall i\in\mathbb N\\
    B_{\mathfrak i}^{(\mathfrak l)}\sim\mathcal N\left(0,\frac{\sigma_{b^{(l)}}^2}{(\mathfrak i+1)^\alpha}\right)
    \qquad 
    B_{\mathfrak i+1}^{(\mathfrak l)}\sim\mathcal N\left(0,\frac{\sigma_{b^{(l)}}^2}{\mathfrak i^\alpha}\right),
\end{align}
which corresponds to swapping all the weights going into the nodes, swapping all the weights leaving the nodes, and swapping the biases of the nodes, respectively (see Figure \ref{fig:NN} for an illustration).

We apply the Feldman-Hajek Theorem \citep[Theorem 2.25]{da2014stochastic} to prove that these two Gaussian measures are mutually singular, by showing that the operator $(\mathcal C^{-1/2}\tilde{\mathcal C}^{1/2})(\mathcal C^{-1/2}\tilde{\mathcal C}^{1/2})^*$ is \emph{not} a Hilbert-Schmidt operator. Due to the diagonality of $\mathcal C$ and $\tilde{\mathcal C}$ the operator would be a Hilbert-Schmidt operator if
\begin{align}
    \sum_{i=1}^\infty\left(\frac{\tilde\lambda_i^2}{\lambda_i^2}-1\right)^2<\infty.
\end{align}
We only need to check those terms where $\tilde\lambda_i\neq\lambda_i$. Again looking at only the eigenvalues corresponding to the weights going into the swapped nodes, and switching to the neural network parametrisation, we have
\begin{align}
    \sum_{j=1}^\infty\left(\frac{(\mathfrak ij)^\alpha}{((\mathfrak i+1)j)^\alpha}-1\right)^2+\sum_{j=1}^\infty\left(\frac{((\mathfrak i+1)j)^\alpha}{(\mathfrak ij)^\alpha}-1\right)^2=\infty,
\end{align}
such that the operator is \emph{not} a Hilbert-Schmidt operator, and the Gaussian measures are mutually singular.

For the interested reader, note that the other two conditions of the Feldman-Hajek Theorem \citep[Theorem 2.25]{da2014stochastic} are satisfied. First we show only that there exist constants $L$ and $U$ such that for any $\theta\in\ell^2$,
\begin{align}
    L|\tilde{\mathcal C}\theta|\leq|C\theta|\leq U|\tilde{\mathcal C}\theta|,
\end{align} 
which is equivalent to
\begin{align}\label{eq:LU_ineq}
    L\sum_{i=1}^\infty\left(\theta_i\tilde\lambda_i^2\right)^2\leq
    \sum_{i=1}^\infty\left(\theta_i\lambda_i^2\right)^2\leq
    U\sum_{i=1}^\infty\left(\theta_i\tilde\lambda_i^2\right)^2,
\end{align}
where $\lambda_i^2$ are the respective variances corresponding to the values. Firstly note that we only need to consider those terms for which $\tilde\lambda_i^2\neq\lambda^2_i$. Using the neural network parametrisation, we can split the problem in showing that \eqref{eq:LU_ineq} holds for A) all the weights going into the swapped nodes, B) all the weights leaving the swapped nodes, and C) swapping the biases.
Looking at the weights going into the swapped nodes, note that 
\begin{align*}
    \sum_{j=1}^\infty\left(\frac{w_{\mathfrak ij}^{(\mathfrak l)}}{((\mathfrak i+1)j)^\alpha}\right)^2+\sum_{j=1}^\infty\left(\frac{w_{(\mathfrak i+1)j}^{(\mathfrak l)}}{(\mathfrak ij)^\alpha}\right)^2
    &=\sum_{j=1}^\infty\left(\frac{w_{\mathfrak ij}^{(\mathfrak l)}}{(\mathfrak ij)^\alpha}\right)^2\left(\frac{(\mathfrak ij)^\alpha}{((\mathfrak i+1)j)^\alpha}\right)^2\\
    &\qquad+\sum_{j=1}^\infty\left(\frac{w_{(\mathfrak i+1)j}^{(\mathfrak l)}}{((\mathfrak i+1)j)^\alpha}\right)^2\left(\frac{((\mathfrak i+1)j)^\alpha}{(\mathfrak ij)^\alpha}\right)^2\\
    &\leq\sum_{j=1}^\infty\left(\frac{w_{\mathfrak ij}^{(\mathfrak l)}}{(\mathfrak ij)^\alpha}\right)^2+2^{2\alpha}\sum_{j=1}^\infty\left(\frac{w_{(\mathfrak i+1)j}^{(\mathfrak l)}}{((\mathfrak i+1)j)^\alpha}\right)^2\\
    &\leq2^{2\alpha}\left[\sum_{j=1}^\infty\left(\frac{w_{\mathfrak ij}^{(\mathfrak l)}}{(\mathfrak ij)^\alpha}\right)^2+\sum_{j=1}^\infty\left(\frac{w_{(\mathfrak i+1)j}^{(\mathfrak l)}}{((\mathfrak i+1)j)^\alpha}\right)^2\right],
\end{align*}
and 
\begin{align*}
    \sum_{j=1}^\infty\left(\frac{w_{\mathfrak ij}^{(\mathfrak l)}}{(\mathfrak ij)^\alpha}\right)^2+\sum_{j=1}^\infty\left(\frac{w_{(\mathfrak i+1)j}^{(\mathfrak l)}}{((\mathfrak i+1)j)^\alpha}\right)^2
    &=\sum_{j=1}^\infty\left(\frac{w_{\mathfrak ij}^{(\mathfrak l)}}{((\mathfrak i+1)j)^\alpha}\right)^2\left(\frac{((\mathfrak i+1)j)^\alpha}{(\mathfrak ij)^\alpha}\right)^2\\
    &\qquad+\sum_{j=1}^\infty\left(\frac{w_{(\mathfrak i+1)j}^{(\mathfrak l)}}{(\mathfrak ij)^\alpha}\right)^2\left(\frac{(\mathfrak ij)^\alpha}{((\mathfrak i+1)j)^\alpha}\right)^2\\
    &\leq2^{2\alpha}\sum_{j=1}^\infty\left(\frac{w_{\mathfrak ij}^{(\mathfrak l)}}{((\mathfrak i+1)j)^\alpha}\right)^2+\sum_{j=1}^\infty\left(\frac{w_{(\mathfrak i+1)j}^{(\mathfrak l)}}{(\mathfrak ij)^\alpha}\right)^2\\
    &\leq2^{2\alpha}\left[\sum_{j=1}^\infty\left(\frac{w_{\mathfrak ij}^{(\mathfrak l)}}{((\mathfrak i+1)j)^\alpha}\right)^2+\sum_{j=1}^\infty\left(\frac{w_{(\mathfrak i+1)j}^{(\mathfrak l)}}{(\mathfrak ij)^\alpha}\right)^2\right]
\end{align*}
such that for the weights going into the swapped nodes, \eqref{eq:LU_ineq} holds with $L=2^{-2\alpha}$ and $U=2^{2\alpha}$. Repeating the same argument for the weights leaving the swapped nodes and for the biases, shows that \eqref{eq:LU_ineq} holds in general with $L=2^{-2\alpha}$ and $U=2^{2\alpha}$.
The remaining condition of the Feldman-Hajek theorem addresses the difference of means, but as $\theta=0$ this is clearly in the Cameron-Martin space of the prior.\\

For the acceptance ratio in the finite width networks, observe that the likelihood does not depend on the labelling of the nodes and thus plays no role in the acceptance probability. Similarly, the transition kernel is symmetric, as nodes $f^{(\mathfrak l)}_{\mathfrak i}$ and $f^{(\mathfrak l)}_{\mathfrak i+1}$ are swapped with probability
\begin{align}
    q^N(\vartheta^N|\theta^N)&=\mathbb P(\{\text{layer }\mathfrak l\text{ gets chosen}\})\times\mathbb P(\{\text{node }\mathfrak i\text{ gets chosen}\}|\{\text{layer }\mathfrak l\text{ got chosen}\})\\
    &=\frac{1}{n+1}\times\frac{1}{i^\alpha}\frac{1}{\sum_{j=1}^{N^l}\frac{1}{j^\alpha}}.
\end{align}
For the finite dimensional case we thus get
\begin{align}\label{eq:nodeswapinvariance}
    a^N(\theta^N,\vartheta^N)=\frac{\mu^N_0(\vartheta^N)}{\mu^N_0(\theta^N)}\frac{\mathcal L^N(\vartheta^N)}{\mathcal L^N(\theta^N)}\frac{q^N(\theta^N|\vartheta^N)}{q^N(\vartheta^N|\theta^N)}=\frac{\mu^N_0(\vartheta^N)}{\mu^N_0(\theta^N)},
\end{align}
which is as required.
\end{proof}

\subsection{Proof of Theorem \ref{thm:bound_likelihood_derivative}\label{proof:thm:bound_likelihood_derivative}}
\begin{proof}[Proof of Theorem \ref{thm:bound_likelihood_derivative}.]
For a given data point $y=(x,a)$, let the actions be enumerates such that $a=1$. Let further $v=(v_1,\dots,v_M):=(v(\mathcal T(x,1)),\dots,v(\mathcal T(x,M)))$ be the vector of the value function evaluations relevant for the likelihood computation. The integral \eqref{likelihood_uncomputable} is trivially upper bounded by $1$. Define $\bar v=\max_j|v_j|$. For the lower bound, we use \eqref{likelihood_computable} to get 

\begin{align}
    \nonumber\eqref{likelihood_uncomputable} &\geq \frac{1}{\sigma} \int_{-\infty}^{\infty}\phi\left(\frac{t-v_1}{\sigma}\right)\frac{1}{2^{M-1}}\prod_{j=2}^M\mathbbm{1}_{\{\Phi((t-v_j)/\sigma)\geq1/2\}}dt \\
    \nonumber&= \frac{1}{\sigma2^{M-1}} \int_{-\infty}^{\infty}\phi\left(\frac{t-v_1}{\sigma}\right)\prod_{j=2}^M\mathbbm{1}_{\{t\geq v_j\}}dt\\
    \nonumber&\geq \frac{1}{\sigma2^{M-1}} \int_{-\infty}^{\infty}\phi\left(\frac{t-v_1}{\sigma}\right)\prod_{j=1}^M\mathbbm{1}_{\{t\geq v_j\}}dt\\
    \nonumber& \geq \frac{1}{\sigma2^{M-1}} \int_{\bar v}^\infty \phi\left(\frac{t-v_1}{\sigma}\right) dt \geq \frac{1}{\sigma2^{M-1}} \int_{2\bar v}^\infty \phi\left(\frac{t}{\sigma}\right) dt\\
    &\geq \frac{1}{\sigma2^{M+1}\sqrt{2\pi\sigma^2}\bar v} \exp(-(4\bar v^2/(2\sigma^2)).\label{bound_derivative_p}
\end{align}

Since $v$ is in a reproducing kernel Hilbert space $\mathcal H$, there exists for any $x\in\mathcal X$ a $C_x$ such that $|v(x)|\leq C_x \lVert v\rVert_{\mathcal H}$ for all $v\in\mathcal H$ \cite[Chapter 12]{wainwright2019high}, and taking $C=\max_{j\in \{1,\dots,M\}}C_{\mathcal T(x,j)}$, we have $\bar v\leq C\lVert v\rVert_{\mathcal H}$.

Taking logarithms of \eqref{bound_derivative_p}, we thus have
\begin{align}
    \ell(y|v,\sigma)&\overset{\text{\eqref{bound_derivative_p}}}{\geq} -\log(\sigma 2^M\sqrt{2\pi\sigma^2})-\log(\bar v)-\frac{\bar v^2}{2\sigma^2}\label{ineq:lower_bound_ell}\\
    &\geq -\log(\sigma 2^M\sqrt{2\pi\sigma^2})-\left(1+\frac{1}{2\sigma^2}\right)\cdot\bar v^2\nonumber\\
    &\geq -\log(\sigma 2^M\sqrt{2\pi\sigma^2})-C\cdot\left(1+\frac{1}{2\sigma^2}\right)\cdot \lVert v\rVert_{\mathcal H}^2\nonumber
\end{align}
showing that Assumption \ref{assumption_likelihood1} holds with $K=\max\left\{\log(\sigma 2^M\sqrt{2\pi\sigma^2}),C\cdot\left(1+\frac{1}{2\sigma^2}\right)\right\}$ and $p=2$.

To see that Assumption \ref{assumption_likelihood2} holds, assume that $\max\{\lVert u\rVert_{\mathcal H},\lVert v\rVert_{\mathcal H}\}<r$. Then, since the log-likelihood is continuously differentiable in $(v_1,\dots,v_M)$, for any $\bar{r}$ there exists a constant $C(\bar{r})$ such that for any vectors  $u_{1:M}$, $v_{1:M}$ with $\max_j|u_j|\leq\bar{r}$, $\max_j|v_j|\leq\bar{r}$ one has by the mean value theorem that
\begin{align}
     |\ell(y|u,\sigma)-\ell(y|v,\sigma)|
    &\leq C(\bar{r})\cdot \bigl(|u_1-v_1|+\dots+|u_M-v_M|\bigr)\label{eq:MVT}.
\end{align}
Using the RKHS property as before, we note that $\bar u=\max_j|u_j|<\max_j C_{\mathcal T(x,j)}r$ and $\bar v=\max_j|v_j|<\max_j C_{\mathcal T(x,j)}r$. We also use the fact that for any $x\in\mathcal X$ there exists a $C_x$ such that $|(u-v)(x)|\leq C_x \lVert u-v\rVert_{\mathcal H}$ for all $u,v\in\mathcal H$. Taking $\bar{r}=\max_j C_{\mathcal T(x,j)}r>0$ we thus get
\begin{align*}
    |\ell(y|u,\sigma)-\ell(y|v,\sigma)|
    &\overset{\ref{eq:MVT}}{\leq} C(\bar{r})\cdot \bigl(|u_1-v_1|+\dots+|u_M-v_M|\bigr)\\
    &= C(\bar{r})\cdot \bigl(|(u-v)(\mathcal T(x,1))|+\dots+|(u-v)(\mathcal T(x,M))|\bigr)\\
    &\leq C(\bar{r}) \cdot\sum_{j=1}^M C_{\mathcal T(x,j)} \cdot \lVert u-v\rVert_{\mathcal H},
\end{align*}
such that the assumption holds with $K(r)=C(\bar{r}) \cdot\sum_{j=1}^M C_{\mathcal T(x,j)}$.
\end{proof}

\subsection{Proof of Lemma \ref{lemma:likelihood}\label{proof:lemma:likelihood}}
\begin{proof}[Proof of Lemma \ref{lemma:likelihood}.]
Let $\theta=(w,b)$ be the collection of all weights and biases. Using the definition of the neural network \eqref{def:NNfunctions}, we let $\hat{x}=(1,x)\in\mathbb{R}^{d+1}$ and note that
\begin{align*}
    |f_i^{(1)}(x)|^2
    =|\langle (b_i^{(1)},w_{i,:}^{(1)}),\hat x\rangle|^2
    \leq \lVert \hat x\rVert_2^2 \cdot \lVert (b_i^{(1)},w_{i,:}^{(1)}) \rVert_2^2
    \leq  (1+\lVert x\rVert_2^2) \cdot \lVert (b_i^{(1)},w_{i,:}^{(1)}) \rVert_2^2
\end{align*}
by the Cauchy-Schwartz inequality (CSI). We now note that, regardless of the choice of $N^{(1)}$,
\begin{align*}
    \lVert f_{1:N^{(1)}}^{(1)}(x)\rVert_2^2 = \sum_{i=1}^{N^{(1)}}|f_i^{(1)}(x)|^2\leq  (1+\lVert x\rVert_2^2) \cdot \sum_{i=1}^{N^{(1)}}\lVert (b_i^{(1)},w_{i,:}^{(1)}) \rVert_2^2 \leq (1+\lVert x\rVert_2^2)\cdot \lVert\theta\rVert_{\ell^2}^2,
\end{align*}
such that the result holds also for the limit $N^{(1)}\rightarrow\infty$. 
For the higher layers, we use Assumption \ref{assumption_bounded_activation} and get for any $l$ that $|\zeta(f_i^{(l)}(x))|^2\leq |f_i^{(l)}(x)|^2$. We apply the CSI a few more times, and get that
\begin{align*}
    |f_i^{(l)}(x)|^2&= |b_i^{(l)}+\sum_{j=1}^{N^{(l-1)}}w_{i,j}^{(l)}\zeta(f_j^{(l-1)}(x))|^2
    \\
    &\leq (1+\lVert \zeta(f_{1:N^{(l-1)}}^{(l-1)}(x))\rVert_2^2) \cdot \lVert (b_i^{(l)},w_{i,:}^{(l)}) \rVert_2^2\\
    &\leq (1+\lVert f_{1:N^{(l-1)}}^{(l-1)}(x)\rVert_2^2) \cdot \lVert (b_i^{(l)},w_{i,:}^{(l)}) \rVert_2^2,
\end{align*}
and that 
\begin{align*}
    \lVert f_{1:N^{(l)}}^{(l)}(x)\rVert_2^2&=\sum_{i=1}^{N^{(l)}}|f_i^{(l)}(x)|^2\leq  (1+ \lVert \zeta(f_{1:N^{l-1}}^{(l-1)}(x))\rVert_2^2) \cdot \sum_{i=1}^{N^{(l)}}\lVert (b_i^{(l)},w_{i,:}^{(l)}) \rVert_2^2\\
    &\leq (1+\lVert f_{1:N^{l-1}}^{(l-1)}(x)\rVert_2^2)\cdot \lVert\theta\rVert_{\ell^2}^2.
\end{align*}
For any $\theta$ with $\lVert\theta\rVert_{\ell^2}^{2}<1$, we use induction and get that $|f_i^{(l)}(x)|^2\leq l+\lVert x\rVert^2$. If $\lVert\theta\rVert_{\ell^2}^{2}\geq1$, we get again by induction that $|f_i^{(l)}(x)|^2\leq (l+\lVert x\rVert^2)\cdot\lVert\theta\rVert_{\ell^2}^{2l}$; such that for any $\theta$,
\begin{align}\label{eq:bound_f_l}
    \lVert f_{1:N^{(l)}}^{(l)}(x)\rVert_2^2=\sum_{i=1}^{N^{(l)}}|f_i^{(l)}(x)|^2\leq (l+\lVert x\rVert^2)\cdot\bigl(1+\lVert\theta\rVert_{\ell^2}^{2l}\bigr),
\end{align}
in particular for $v(x)=f_1^{(n+1)}(x)$ we have
\begin{align*}
    |v(x)|^2\leq (n+1+\lVert x\rVert^2)\cdot\bigl(1+\lVert\theta\rVert_{\ell^2}^{2(n+1)}\bigr).
\end{align*}
Using the same bound for $\ell(y|v,\sigma)$ as in the proof of Theorem \ref{thm:bound_likelihood_derivative} given in \eqref{ineq:lower_bound_ell}, we get
\begin{align*}
    \ell(y|v,\sigma)&\geq -\log(\sigma 2^M\sqrt{2\pi\sigma^2})-\left(1+\frac{1}{2\sigma^2}\right)\cdot\bar v^2\\
    &\geq -\log(\sigma 2^M\sqrt{2\pi\sigma^2})-(n+1+\max_j\lVert\mathcal T(x,j)\rVert^2)\cdot\left(1+\frac{1}{2\sigma^2}\right)\cdot \bigl(1+\lVert \theta\rVert_{\ell^2}^{2(n+1)}\bigr),
\end{align*}
such that the result holds with $K=(n+1+\max_j\lVert\mathcal T(x,j)\rVert^2)\cdot\left(1+\frac{1}{2\sigma^2}\right)+\max\left\{\log(\sigma 2^M\sqrt{2\pi\sigma^2}),0\right\}$ and $p=2(n+1)$. Note that the constant $K$ is independent of the layer width and the result holds for networks of arbitrary width.

To prove Assumption \ref{assumption_likelihood2}, fix $r>0$ and consider the sequences $\theta,\tilde{\theta}\in\ell^{2}$ such that $\max\{\lVert \theta\}\rVert ,\lVert \tilde{\theta}\rVert\}\leq r$. Let $u=u_\theta$ be the neural network arising from the parameters $\theta$, and let $v=v_{\tilde{\theta}}$ be the neural network arising from the parameters $\tilde{\theta}$. The difference in the output of the final layers of the neural network is 
\begin{align*}
u(x)-v(x) & = b_{1}^{(n+1)}-\tilde{b}_{1}^{(n+1)}+\sum_{j=1}^{N^{(n)}}w_{1,j}^{(n+1)}\zeta(f_{j}^{(n)}(x))-\sum_{j=1}^{N^{(n)}}\tilde{w}_{1,j}^{(n+1)}\zeta(\tilde{f}_{j}^{(n)}(x))\\
 & =b_{1}^{(n+1)}-\tilde{b}_{1}^{(n+1)}+\sum_{j=1}^{N^{(n)}}\left(w_{1,j}^{(n+1)}-\tilde{w}_{1,j}^{(n+1)}\right)\zeta(f_{j}^{(n)}(x))\\
 &\qquad+\sum_{j=1}^{N^{(n)}}\tilde{w}_{1,j}^{(n+1)}\left(\zeta(f_{j}^{(n)}(x))-\zeta(\tilde{f}_{j}^{(n)}(x))\right)
\end{align*}
where the functions within the neural network defined by $\tilde{\theta}$ are distinguished by a tilde on each of them. We can bound the squared difference by
\begin{align*}
\frac{1}{2}\left(u(x)-v(x)\right)^{2} &\leq \left(b_{1}^{(n+1)}-\tilde{b}_{1}^{(n+1)}+\sum_{j=1}^{N^{(n)}}\left(w_{1,j}^{(n+1)}-\tilde{w}_{1,j}^{(n+1)}\right)\zeta(f_{j}^{(n)}(x))\right)^2\\
 &\qquad+\left(\sum_{j=1}^{N^{(n)}}\tilde{w}_{1,j}^{(n+1)}\left(\zeta(f_{j}^{(n)}(x))-\zeta(\tilde{f}_{j}^{(n)}(x))\right)\right)^2
\end{align*}
and using the CSI further by
\begin{align*}
&\leq\left(\left(b_{1}^{(n+1)}-\tilde{b}_{1}^{(n+1)}\right)^{2}+\sum_{j=1}^{N^{(n)}}\left(w_{1,j}^{(n+1)}-\tilde{w}_{1,j}^{(n+1)}\right)^{2}\right) \cdot \left(1+\sum_{j=1}^{N^{(n)}}\zeta(f_{j}^{(n)}(x))^{2}\right)\\
&\qquad +\left(\sum_{j=1}^{N^{(n)}}\left(\zeta(f_{j}^{(n)}(x))-\zeta(\tilde{f}_{j}^{(n)}(x))\right)^{2}\right) \cdot \left(\sum_{j=1}^{N^{(n)}}\left(\tilde{w}_{1,j}^{(n+1)}\right)^{2}\right)\\
 & \leq\left(\left(b_{1}^{(n+1)}-\tilde{b}_{1}^{(n+1)}\right)^{2}+\sum_{j=1}^{N^{(n)}}\left(w_{1,j}^{(n+1)}-\tilde{w}_{1,j}^{(n+1)}\right)^{2}\right) \cdot K(r,x,n)\\
 &\qquad+\left(\sum_{j=1}^{N^{(n)}}\left(f_{j}^{(n)}(x)-\tilde{f}_{j}^{(n)}(x)\right)^{2}\right) \cdot r^{2}\\
 &\leq \lVert\theta^{(n+1)}-\tilde{\theta}^{(n+1)}\rVert^2_{\ell^2}\cdot K(r,x,n)+\left(\sum_{j=1}^{N^{(n)}}\left(f_{j}^{(n)}(x)-\tilde{f}_{j}^{(n)}(x)\right)^{2}\right) \cdot r^{2},
\end{align*}
where the last inequality assumes $1+\sum_{j}\zeta(f_{j}^{(n)}(x))^{2}\leq K(r,x,n)$,
which will be verified next, and also uses the bound $\max\{\lVert \theta\rVert ,\lVert \tilde{\theta}\rVert \}\leq r.$
In \eqref{eq:bound_f_l} it was shown that 
\begin{align*}
1+\sum_{j=1}^{N^{(n)}}\zeta(f_{j}^{(n)}(x))^{2}&\leq
1+\sum_{j=1}^{N^{(n)}}(f_{j}^{(n)}(x))^{2}=1+\lVert f^{(n)}_{1:N^{(n)}}(x)\rVert_2^2\\
&\leq 1+ (n+\lVert x\rVert^2)\cdot\bigl(1+\lVert\theta\rVert_{\ell^2}^{2n}\bigr)\leq K(r,x,n)
\end{align*}
by setting $K(r,x,n):=1+ (n+\lVert x\rVert^2)\cdot\bigl(1+r^{2n}\bigr)$. 

The decomposition thus far articulates how $(u(x)-v(x))^2$  depends on the difference of the weights and biases of the output layer (layer $n+1$). We may similarly articulate how $\sum_j(f_j^{(n)}(x) - \tilde{f}_j^{(n)}(x))^2$ depends on the difference of the weights and biases of the previous layers. For example,
\begin{align*}
    \frac12 \Bigl(f_{i}^{(n)}(x)&-\tilde{f}_{i}^{(n)}(x)\Bigr)^{2}
    = \frac12 \left(b_i^{(n)}-\tilde{b}_i^{(n)} + \sum_{j=1}^{N^{(n-1)}}w_{i,j}^{(n)}\zeta(f_j^{(n-1)}(x))-\tilde{w}_{i,j}^{(n)}\zeta(\tilde{f}_j^{(n-1)}(x))\right)^2\\
    &\leq\left(\left(b_{i}^{(n)}-\tilde{b}_{i}^{(n)}\right)^{2} + \sum_{j=1}^{N^{(n-1)}}\left(w_{i,j}^{(n)}-\tilde{w}_{i,j}^{(n)}\right)^{2}\right) \cdot \left(1+\sum_{j=1}^{N^{(n-1)}}\left(f_{j}^{(n-1)}(x)\right)^{2}\right)\\
    &\qquad +\left(\sum_{j=1}^{N^{(n-1)}}\left(f_{j}^{(n-1)}(x)-\tilde{f}_{j}^{(n-1)}(x)\right)^{2}\right) \cdot \left(\sum_{j=1}^{N^{(n-1)}}\left(\tilde{w}_{i,j}^{(n)}\right)^{2}\right),
\end{align*} 
and summing over $i$ gives
\begin{align*}
    \frac12 \sum_{i=1}^{N^{(n)}}& \left(f_{i}^{(n)}(x)-\tilde{f}_{i}^{(n)}(x)\right)^{2}\\
    &\leq \sum_{i=1}^{N^{(n)}}\left(\left(b_{i}^{(n)}-\tilde{b}_{i}^{(n)}\right)^{2} + \sum_{j=1}^{N^{(n-1)}}\left(w_{i,j}^{(n)}-\tilde{w}_{i,j}^{(n)}\right)^{2}\right) \cdot \left(1+\sum_{j=1}^{N^{(n-1)}}\left(f_{j}^{(n-1)}(x)\right)^{2}\right)\\
    &\qquad +\left(\sum_{j=1}^{N^{(n-1)}}\left((f_{j}^{(n-1)}(x)-\tilde{f}_{j}^{(n-1)}(x)\right)^{2}\right) \cdot \sum_{i=1}^{N^{(n)}}\left(\sum_{j=1}^{N^{(n-1)}}\left(\tilde{w}_{i,j}^{(n)}\right)^{2}\right)\\
    &\leq \sum_{i=1}^{N^{(n)}}\left(\left(b_{i}^{(n)}-\tilde{b}_{i}^{(n)}\right)^{2} + \sum_{j=1}^{N^{(n-1)}}\left(w_{i,j}^{(n)}-\tilde{w}_{i,j}^{(n)}\right)^{2}\right) \cdot K(r,x,n-1) \\
    &\qquad +\left(\sum_{j=1}^{N^{(n-1)}}\left((f_{j}^{(n-1)}(x)-\tilde{f}_{j}^{(n-1)}(x)\right)^{2}\right) \cdot r^2\\
    &\leq \lVert\theta^{(n)}-\tilde{\theta}^{(n)}\rVert^2_{\ell^2} \cdot K(r,x,n-1) +\left(\sum_{j=1}^{N^{(n-1)}}\left((f_{j}^{(n-1)}(x)-\tilde{f}_{j}^{(n-1)}(x)\right)^{2}\right) \cdot r^2.
\end{align*}
In summary, we obtain $\frac{1}{2}\left(u(x)-v(x)\right)^{2}\leq\bar{K}(r,x,n)\lVert \theta-\tilde{\theta}\rVert_{\ell^2}^{2}$ for a constant $\bar K$ depending only on $r$, $x$, and $n$. In particular, when $\tilde{\theta}=0$ then $v=0$, which implies $\frac{1}{2}\left(u(x)\right)^{2}\leq\bar{K}(r,x,n)\lVert \theta\rVert_{\ell^2}^{2}$.

We conclude the proof similarly to the proof of Theorem \ref{thm:bound_likelihood_derivative}. Assume that $\max\{\lVert \theta\rVert_{\ell^2},\lVert \tilde{\theta}\rVert_{\ell^2}\}<r$, so that $\max_j|u_j|\leq r\sqrt{2\max_j\{ \bar{K}(r,\mathcal T(x,j),n)\}}$ and $\max_j|v_j|\leq r\sqrt{2\max_j\{ \bar{K}(r,\mathcal T(x,j),n)\}}$. 
Then using the mean value theorem, we note that for any $\bar{r}$ there exists a constant $C(\bar{r})$, such that for any vectors  $u_{1:M}$, $v_{1:M}$ with $\max_j|u_j|\leq\bar{r}$, $\max_j|v_j|\leq\bar{r}$ we have
\begin{align*}
    |\ell(y|u,\sigma)-\ell(y|v,\sigma)|
    &\leq C(\bar{r})\cdot \bigl(|u_1-v_1|^2+\dots+|u_M-v_M|^2\bigr)^{\frac12}\\
    &= C(\bar{r})\cdot \bigl(\sum_{j=1}^M(u(\mathcal T(x,j))-v(\mathcal T(x,j)))^2\bigr)^{\frac12}\\
    &\leq \sqrt{2} C(\bar{r}) \cdot \left(\sum_{j=1}^M \bar K(r,\mathcal T(x,j),n)\right)^{\frac12} \cdot \lVert \theta-\tilde{\theta}\rVert_{\ell^2}.
\end{align*}
The result holds by choosing $\bar{r}=r\sqrt{2\max_j\bar{K}(r,\mathcal T(x,j),n)}$.
\end{proof}

\subsection{Proof of Theorem \ref{thm:endomorphism_likelihood_derivative}\label{proof:thm:endomorphism_likelihood_derivative}}

\begin{proof}[Proof of Theorem \ref{thm:endomorphism_likelihood_derivative}.]

The equivalence of $\mathcal{N}(\mathcal{C}\mathcal{D}\ell(u),\mathcal{C})\simeq\mathcal{N}(0,\mathcal{C})$ $\mu_{0}$-almost surely for all $u$ will be shown by applying the Feldman-Hajek theorem \citep[Theorem 2.23]{da2014stochastic} which states that two Gaussian measures $\mathcal{N}(m_{1},Q)$ and $\mathcal{N}(m_{2},Q)$ are absolutely continuous with respect to one another if and only if $m_{1}-m_{2}\in Q^{1/2}(\mathcal{H})$ or $\sum_{i}(m_{1i}-m_{2i})^{2}/\lambda_{i}^{2}<\infty$ \citep{da2014stochastic} where $\lambda_{i}^{2}$ are the eigenvalues of $\mathcal{Q}$. For $m_{1}=\mathcal{C}\mathcal{D}\ell(u)$, $m_{2}=0$, and $\mathcal Q=\mathcal C$, this means showing
\begin{align}
\sum_{i=1}^{\infty}\frac{(\mathcal{C}\mathcal{D}\ell(u))_{i}^{2}}{\lambda_{i}^{2}}=\sum_{i=1}^{\infty}\frac{\lambda_{i}^{4}(\mathcal{D}\ell(u))_{i}^{2}}{\lambda_{i}^{2}}=\sum_{i=1}^{\infty}\lambda_{i}^{2}(\mathcal{D}\ell(u))_{i}^{2}\label{eq:var_Ss}
\end{align}
is finite for $\mu_{0}$-almost all $u$ \citep{da2014stochastic}. Note that $\mathcal{D}\ell(u)$ is the collection of partial derivatives with respect to each weight and bias parameter of the neural network.
We will show that the sequence of truncated sums of \eqref{eq:var_Ss} defines a submartingale that converges $\mu_0$-almost surely to a random variable with finite expectation.

We now specify the limiting neural network. As $\alpha>1$, we have $\mu_{0}$-almost surely, 
$\lVert W^{(n+1)}_{1,:} \rVert^2 + \lVert B^{(n+1)}_{:}\rVert^2+\sum_{l=1}^n \lVert W^{(l)}_{:,:} \rVert^2 + \lVert B^{(l)}_{:} \rVert^2< \infty$. The limiting neural network is thus defined to be, for $l>1$,
$f_i^{(l)}=\langle(1,\zeta(f^{(l-1)}_{:}),(B^{(l)}_i,W^{(l)}_{i,:})) \rangle_{\ell^2}$. Indeed $\sum_{i=1}^{\infty}(f_i^{(l)})^2<\infty$ and thus the definition is recursive.

Substituting both the eigenvalues of $\mathcal{C}$ and the derivatives with respect to the parameters of the neural network into Equation \eqref{eq:var_Ss} and truncating the sum gives
\begin{alignat}{1}
S_{s}= & \sigma_{b^{(n+1)}}^{2}\left(\frac{\partial\ell}{\partial B_{1}^{(n+1)}}\right)^{2}+\sum_{j=1}^{s}\frac{\sigma_{w^{(n+1)}}^{2}}{j^{\alpha}}\left(\frac{\partial\ell}{\partial W_{1,j}^{(n+1)}}\right)^{2}\nonumber \\
 & +\sum_{l=2}^{n}\sum_{i=1}^{s}\left[\frac{\sigma_{b^{(l)}}^{2}}{i^{\alpha}}\left(\frac{\partial\ell}{\partial B_{i}^{(l)}}\right)^{2}+\sum_{j=1}^{s}\frac{\sigma_{w^{(l)}}^{2}}{(ij)^{\alpha}}\left(\frac{\partial\ell}{\partial W_{i,j}^{(l)}}\right)^{2}\right] \nonumber\\
 & +\sum_{i=1}^{s}\left[\frac{\sigma_{b^{(1)}}^{2}}{i^{\alpha}}\left(\frac{\partial\ell}{\partial B_{i}^{(1)}}\right)^{2}+\sum_{j=1}^{d}\frac{\sigma_{w^{(1)}}^{2}}{i^{\alpha}}\left(\frac{\partial\ell}{\partial W_{i,j}^{(1)}}\right)^{2}\right].
 \label{eq:thm3_proof_Ss}
\end{alignat}
For the likelihood $\ell$ in (\ref{eq:loglikecontrol}) and $T=1$, we will show that $\lim_{s\rightarrow \infty} S_s$ exists and is finite $\mu_0$-almost surely so that the equivalence $\mathcal{N}(\mathcal{C}\mathcal{D}\ell(u),\mathcal{C})\simeq\mathcal{N}(0,\mathcal{C})$ follows (in fact we will show that $S_s$ converges to a $L^1$ random variable as $s\rightarrow\infty$); the case for $T>1$ follows similarly. 

To this end, observe that under the assumption of uniformly bounded partial derivatives of $\ell(a,\cdot)$ for all $a$, each partial derivative can be further bounded by 
\[
\left(\frac{\partial\ell}{\partial B_{i}^{(l)}}\right)^{2}\leq M\times c_t \times\sum_{k=1}^{M}\left(\frac{\partial u(x_{t}^{k})}{\partial B_{i}^{(l)}}\right)^{2}
\]
where $\sqrt{c_t}$ is the bound of partial derivatives of $\ell(a_{t},\cdot)$. We firstly calculate $\partial u/\partial W_{i,j}^{(l)}$ and $\partial u/\partial B_{i}^{(l)}$, for all $(i,j,l)$, where $u=u(x)$ is the output of the NN for an input $x\in\mathbb{R}^{d}$ - the input has been dropped for notational convenience. These derivatives can be cast as derivatives of $\partial u/\partial g_{i}^{(l)}$ since $u$ can be regarded as a function of $(g_{1}^{(l)},g_{2}^{(l)},\ldots)$ and only $g_{i}^{(l)}$ is a function of $W_{i,j}^{(l)}$ and $B_{i}^{(l)}$, that is $g_{i}^{(l)}=\zeta(f_{i}^{(l)})$ and $f_{i}^{(l)}=B_{i}^{(l)}+\sum_{k}W_{i,k}^{(l)}g_{k}^{(l-1)}$. Thus 
\begin{equation}
\frac{\partial u}{\partial W_{i,j}^{(l)}}=\frac{\partial u}{\partial g_{i}^{(l)}}\times\frac{\partial g_{i}^{(l)}}{\partial f_{i}^{(l)}}\times\frac{\partial f_{i}^{(l)}}{\partial W_{i,j}^{(l)}},\qquad\frac{\partial u}{\partial B_{i}^{(l)}}=\frac{\partial u}{\partial g_{i}^{(l)}}\times\frac{\partial g_{i}^{(l)}}{\partial f_{i}^{(l)}}. \label{eq:dv_dW}
\end{equation}
The next step is obtain a bound on $\partial u/\partial g_{k}^{(l)}$ for all $(l,k)$. To this end, assume all $\partial u/\partial g_{k}^{(l+1)}$ for $k=1,2,\ldots$ are available. The aim is to find $\partial u/\partial g_{k}^{(l)}$ (for all $k$) at the previous layer $l$:
\begin{alignat}{1}
\frac{\partial u}{\partial g_{k}^{(l)}} & =\sum_{j=1}^{\infty}\frac{\partial u}{\partial g_{j}^{(l+1)}}\frac{\partial g_{j}^{(l+1)}}{\partial g_{k}^{(l)}}\nonumber \\
 & =\sum_{j=1}^{\infty}\frac{\partial u}{\partial g_{j}^{(l+1)}}\frac{\partial\zeta(f_{j}^{(l+1)})}{\partial f_{j}^{l+1}}\frac{\partial f_{j}^{(l+1)}}{\partial g_{k}^{(l)}}\nonumber \\
 & =\sum_{j=1}^{\infty}\frac{\partial u}{\partial g_{j}^{(l+1)}}\times\frac{\partial\zeta(f_{j}^{(l+1)})}{\partial f_{j}^{(l+1)}}\times W_{j,k}^{(l+1)}
.\label{eq:true_dv}
\end{alignat}
Using Assumption \ref{assumption_bounded_activation}, we will employ the following non-negative upper bound $D_{k}^{(l)}$ for $\partial u/\partial g_{k}^{(l)}$, defined recursively as follows
\begin{alignat}{1}
D_{k}^{(n)} & :=\left|W_{1,k}^{(n+1)}\right|,\nonumber \\
D_{k}^{(l)} & :=\sum_{j=1}^{\infty}D_{j}^{(l+1)}\times\left|W_{j,k}^{(l+1)}\right|,\qquad l=n-1,\dots, 1\label{eq:bound_dv}
\end{alignat}
(These can be shown to be finite bounds as follows: firstly by the Cauchy-Schwarz inequality (CSI) we have $D_k^{(n-1)}=\langle D_j^{(n)},|W_{:,k}^{(n)}|\rangle_{\ell^2}\leq \lVert W_{1,:}^{(n+1)}\rVert\times\lVert W_{:,k}^{(n)}\rVert$. Also, it is square summable since  $\sum_{k=1}^\infty (D_{k}^{(n-1)})^2 \leq \lVert W_{1,:}^{(n+1)}\rVert^2\times\lVert W_{:,:}^{(n)}\rVert^2$. The remaining terms $D_k^{(l)}$ for $l<n-1$ can be studied similarly.)

Note that $D_{k}^{(l)}$ is independent of the collection of random variables $\left\{ W_{i,j}^{(m)}:m\leq l,\forall i,\forall j\right\} $, a property we will call on repeatedly in the study of the moments. Using,  $D_{k}^{(l)}=\lim_{s\rightarrow\infty}\sum_{j=1}^{s}D_{j}^{(l+1)}|W_{j,k}^{(l+1)}|$, we have 
\[
\left(D_{k}^{(l)}\right)^{2}=\left(\lim_{s\rightarrow\infty}\sum_{j=1}^{s}D_{j}^{(l+1)}|W_{j,k}^{(l+1)}|\right)^{2}=\lim_{s\rightarrow\infty}\left(\sum_{j=1}^{s}D_{j}^{(l+1)}|W_{j,k}^{(l+1)}|\right)^{2}
\] and thus
\[
\mathbb{E}\left\{ \left(D_{k}^{(l)}\right)^{2}\right\} =\lim_{s\rightarrow\infty}\mathbb{E}\left\{ \left(\sum_{j=1}^{s}D_{j}^{(l+1)}|W_{j,k}^{(l+1)}|\right)^{2}\right\} =C_{l}k^{-\alpha}
\] where the final result of $C_{l}k^{-\alpha}$ will be established now. Squaring the finite sum in the expectation gives

\begin{alignat*}{1}
\sum_{j=1}^{s}\left(D_{j}^{(l+1)}\left|W_{j,k}^{(l+1)}\right|\right)^{2}+2\sum_{j=1}^{s}\sum_{i=j+1}^{s}\left(D_{j}^{(l+1)}\left|W_{j,k}^{(l+1)}\right|\right)\left(D_{i}^{(l+1)}\left|W_{i,k}^{(l+1)}\right|\right).
\end{alignat*}
The expected value of the cross term can be bounded by
\begin{alignat*}{1}
 & \sum_{j=1}^{s}\sum_{i=j+1}^{s}\sqrt{\mathbb{E}\left[\left(D_{j}^{(l+1)}W_{j,k}^{(l+1)}\right)^2\right]}\sqrt{\mathbb{E}\left[\left(D_{i}^{(l+1)}W_{i,k}^{(l+1)}\right)^2\right]}\\
&=\sum_{j=1}^{s}\sum_{i=j+1}^{s}\sqrt{\mathbb{E}\left[\left(D_{j}^{(l+1)}\right)^2\right]}\sqrt{\mathbb{E}\left[\left(W_{j,k}^{(l+1)}\right)^2\right]}\sqrt{\mathbb{E}\left[\left(D_{i}^{(l+1)}\right)^2\right]}\sqrt{\mathbb{E}\left[\left(W_{i,k}^{(l+1)}\right)^2\right]}
\end{alignat*}
due to the independence of $D_{j}^{(l+1)}$ and $W_{j,k}^{(l+1)}$. The expected value of the sum of squares term can be similarly upper bounded by
\begin{alignat*}{1}
\sum_{j=1}^{s}\mathbb{E}\left[\left(D_{j}^{(l+1)}\left|W_{j,k}^{(l+1)}\right|\right)^2\right] & =\sum_{j=1}^{s}\mathbb{E}\left[\left(D_{j}^{(l+1)}\right)^2\right]\mathbb{E}\left[\left(\left|W_{j,k}^{(l+1)}\right|\right)^2\right].
\end{alignat*}
For $l=n-1$, substituting the definitions above gives
\begin{alignat*}{1}
 \sum_{j=1}^{s}\mathbb{E}\left[\left(D_{j}^{(n)}\left|W_{j,k}^{(n)}\right|\right)^2\right]
  =\sum_{j=1}^{s}\mathbb{E}\left[\left(W_{1,j}^{(n+1)}W_{j,k}^{(n)}\right)^2\right]
  =k^{-\alpha}\sigma_{w^{(n+1)}}^{2}\sigma_{w^{(n)}}^{2}\sum_{j=1}^{s}j^{-2\alpha}.
\end{alignat*}
For the cross terms, we get
\begin{alignat*}{1}
\sum_{j=1}^{s}\sum_{i=j+1}^{s}\sqrt{\mathbb{E}\left[\left(D_{j}^{(n)}W_{j,k}^{(n)}\right)^2\right]}\sqrt{\mathbb{E}\left[\left(D_{i}^{(n)}W_{i,k}^{(n)}\right)^2\right]}
 =k^{-\alpha}\sigma_{w^{(n+1)}}^{2}\sigma_{w^{(n)}}^{2}\sum_{j=1}^{s}\sum_{i=j+1}^{s}j^{-\alpha}i^{-\alpha}.
\end{alignat*}
Thus $\mathbb{E}\left[\left(\partial v/\partial g_{k}^{(n-1)}\right)^2\right]\leq\mathbb{E}\left[\left(D_{k}^{(n-1)}\right)^2\right]=C_{n-1}k^{-\alpha}$, where the constant $C_{n-1}$ does not depend on $s$. The result can be extrapolated to all $l<n$ by induction over the layer index to get
\begin{equation}
\mathbb{E}\left[\left(\frac{\partial u}{\partial g_{k}^{(l)}}\right)^2\right]\leq\mathbb{E}\left[\left(D_{k}^{(l)}\right)^2\right]=C_{l}k^{-\alpha}\label{eq:dv_moment_2}
\end{equation}
for all $x\in\mathcal{X}$. Combining \eqref{eq:dv_dW} and \eqref{eq:dv_moment_2} and using Assumption \ref{assumption_bounded_activation} as well as Theorem \ref{thm:prior} gives
\begin{alignat*}{1}
\mathbb{E}\left[\left(\frac{\partial u}{\partial W_{i,j}^{(l)}}\right)^2\right]
 \leq\mathbb{E}\left[\left(\frac{\partial u}{\partial g_{i}^{(l)}}\right)^2\left(\frac{\partial f_{i}^{(l)}}{\partial W_{i,j}^{(l)}}\right)^2\right] 
\leq\mathbb{E}\left[\left(D_{i}^{(l)}\right)^2\right]\mathbb{E}\left[\left(g_{j}^{(l-1)}\right)^2\right]
\leq C_{l}i^{-\alpha}\sigma_{l-1}^{2}j^{-\alpha}
\end{alignat*}
for all $x\in\mathcal{X}$, for constants $C_l$ not depending on $s$. Similarly,
$\mathbb{E}\left[\left(\partial u/\partial B_{i}^{(l)}\right)^2\right]\leq\mathbb{E}\left[\left(\partial u/\partial g_{i}^{(l)}\right)^2\right]\leq C_{l}i^{-\alpha}.$
Bringing together these results gives the following bound for $\mathbb{E}(S_{s})$
in (\ref{eq:thm3_proof_Ss})
\begin{alignat*}{1}
\mathbb{E}(S_{s})\leq & c_{t}M^2 \times\left[\sigma_{b^{(n+1)}}^{2}C_{n+1}+\sum_{j=1}^{s}\frac{\sigma_{w^{(n+1)}}^{2}}{j^{\alpha}}C_{n+1}\sigma_{n}^{2}j^{-\alpha}\right]\\
 & +c_{t}M^2 \times\sum_{l=2}^{n}\sum_{i=1}^{s}\left[\frac{\sigma_{b^{(l)}}^{2}}{i^{\alpha}}C_{l}i^{-\alpha}+\sum_{j=1}^{s}\frac{\sigma_{w^{(l)}}^{2}}{(ij)^{\alpha}}C_{l}i^{-\alpha}\sigma_{l-1}^{2}j^{-\alpha}\right]\\
 & +c_{t}M^2 \times \sum_{i=1}^{s}\left[\frac{\sigma_{b^{(1)}}^{2}}{i^{\alpha}}C_1i^{-\alpha}+\sum_{j=1}^{d}\frac{\sigma_{w^{(1)}}^{2}}{i^{\alpha}}C_1i^{-\alpha}x_j^2\right].
\end{alignat*}
As $S_{s}$ is a submartingale and its mean is bounded uniformly in  $s$. By the martingale convergence theorem, it converges almost surely to an $L^{1}$ random variable and thus the result follows.
\end{proof}

\section{Details on experimental setup}
We here give further details on the experimental setup for the Examples \ref{sec:exp_setup}. Both examples are included in the python package `gym' \citep{1606.01540}.

\subsection{Mountaincar}
The first example is the popular mountaincar problem. The state space is the $2$-dimensional domain $\mathcal X=[-1.2,0.6]\times[-0.07,0.07]$, where the first variable is the position $x_1$ of the car on a mountain slope, and the second variable represents its velocity $x_2$. The set of possible actions is $\mathcal A=\{-1,0,1\}$, representing exerting force to the left, not adding force, and exerting force to the right, respectively. The state transitions are deterministic, being given by Newtonian physics, and we refer the reader to the OpenAI documentation or to our code for the details. 

In the mountaincar problem, the reward is constant $r(x_1,x_2)=-1$ per step, until the car reaches the top of the mountain ($x_1\geq0.5$). The optimal policy is therefore to reach the mountaintop as quickly as possible. An optimal deterministic policy \citep{xiao2019} is given by
\begin{align*}
    \mu(x_1,x_2)=-1+2\mathbbm I\{\min(-0.09(x_1+0.25)^2&+0.03,0.3(x_1+0.9)^4-0.008)\leq x_2\\&\leq -0.07(x_1+0.38)^2+0.07\},
\end{align*}
and we generated state-action pairs by firstly drawing a random initial state in the valley of the mountain, $x\sim\mathcal U([-0.6 , -0.4])$, i.e. a uniform value between $-0.6$ and $-0.4$. The initial velocity is set to $0$. Starting from that state, we computed the action given the optimal policy given above. Once the flag was reached, a new initial state was drawn, and the process repeated until we had a total of $250$ observations. This gave a set of state-action pairs $\{(x_t,a_t)\}_{t=1}^{250}$, and we then took every fifth sample to obtain the final dataset $y=\{(x_{5t},a_{5t})\}_{t=1}^{50}$. This resulted in the state variables in $y$ covering the entire state space, such that we can expect to learn the value function in any region an agent might find themselves in. The likelihood \eqref{likelihood_computable} arises from this dataset $y$ and the noise level being set to $\sigma=0.1$.

In the simulations from the learned value functions, we again initialised  the state variable as $x\sim\mathcal U([-0.6 , -0.4])$ and set the velocity to $0$. We then simulated noise and used Equation \eqref{random_action_2} with the learned value function to pick an action. In Section \ref{sec:prior_comparison}, the used value function was taken as either a sample from the posterior or as the mean function; in Section \ref{sec:policy_learning} the used value function was the mean function from the posteriors. In all experiments, if the car didn't make it to the flag within $200$ time steps, we called this a failure and restarted the process from new initial conditions. 

\subsection{HalfCheetah}
To show that our algorithm works in a more complicated setting, we looked at the HalfCheetah example from the MuJoCo library \citep{todorov2012mujoco} where the state $x_t$ a $17$-dimensional vector. The original continuous actions space of the problem is $6$-dimensional.

An agent controlling the cheetah is to move it to forward while not exerting too much force: positive rewards are given for moving forward, and negative rewards are given for moving backwards, a further penalty is deducted for actions requiring a lot of force. A black box optimal policy for the HalfCheetah problem was provided in Berkeley's Deep Reinforcement Learning Course\footnote{CS294-112 HW 1: Imitation Learning, \url{https://github.com/berkeleydeeprlcourse/homework/tree/master/hw1}}, which we used to simulate state-action pairs. 

The initial state and velocity variables were drawn at random with distributions according to the python package `gym' \citep{1606.01540}. We discretised the action space to $M$ actions in the following way: an initial state was drawn, and the black box policy gave us an action, taking us to a new state via deterministic mapping. Iterating this process, the first $M$ actions were stored. From now on, we can use a discrete action space $\mathcal A_M$ consisting of these $M$ actions: at a state $x_t$ we compute $a_t$ as the action in $\mathcal A_M$ that minimises the Euclidean distance to the action computed by the black box policy. We found that $M=8$ actions were sufficient to get behaviour very similar to the one we got when using the continuous action space, and we thus fixed $\mathcal A=\mathcal A_8$. We refer to the action $a\in\mathcal A$ that minimises the Euclidean distance to the black box algorithm as `optimal'. To generate data, we firstly drew an initial state $x_1$, and then computed the optimal action $a_1$ using the procedure just described, and computed the next state using the state dynamics \eqref{eq:state_transition}. After $25$ steps, we restarted from a new initial state, and repeated this process another $4$ times until we had a total of $T=100$ data points. The reason we restarted occasionally was, as in the mountaincar example, to ensure that we cover a representative region of the state space. The dataset $y=\{(x_t,a_t)\}_{t=1}^{100}$ was used in the likelihood (\ref{likelihood_computable}), where we set the noise level to $\sigma=0.1$. In the experiments in Section \ref{sec:policy_learning}, an initial state is drawn, and the cheetah is controlled using Equation \eqref{random_action_2} over $100$ time steps.

\end{document}